\newcommand{\flcs}{\FuncSty{FindLCST}}
\newcommand{\wmax}{w_{max}}
\newcommand{\rr}{\mathbb{R}}
\newcommand{\rrplus}{\rr^+}
\renewcommand{\O}{\mathcal{O}}
\newcommand{\Ot}{\tilde{\O}}
\newcommand{\Os}{\O^*}
\newcommand{\ra}{\rightarrow}
\newcommand{\floor}[1]{\left\lfloor #1 \right\rfloor}
\newcommand{\ceil}[1]{\left\lceil #1 \right\rceil}
\newcommand{\prob}[1]{\mathbb{P}\left[ #1 \right]}
\renewcommand{\bold}[1]{\textbf{#1}}
\newcommand{\parabold}[1]{\smallskip \noindent\bold{#1}}
\newcommand{\STC}{\textsf{STC}}
\newcommand{\LSST}{\textsf{LSST}}
\newcommand{\rank}{\mathsf{rank}}
\newcommand{\Detour}{\mathsf{DT}}
\newcommand{\ec}{\mathsf{ec}}
\newcommand{\conge}{\mathsf{cong}}
\newcommand{\NP}{\textsf{NP}}
\newcommand{\poly}{\textsf{poly}}
\newenvironment{pfof}[1]{\begin{proof}[\textbf{Proof of #1: }]}{\end{proof}}
\newcommand{\calA}{\mathcal{A}}
\newcommand{\calC}{\mathcal{C}}
\newcommand{\calG}{\mathcal{G}}
\newcommand{\khalf}{(\floor{k/2} + 1)}
\title{Spanning Tree Congestion and Computation of Generalized Gy\H{o}ri-Lov{\'{a}}sz Partition}
\titlerunning{Spanning Tree Congestion} %
\author{L. Sunil Chandran
	\thanks{This work was done while this author was visiting Max Planck Institute for Informatics, Saarbr\"ucken, Germany,
		supported by Alexander von Humboldt Fellowship.}
	\inst{1} \and
Yun Kuen Cheung
\thanks{Part of the work done while this author was a visitor at the Courant Institute, NYU.
The visit was funded in part by New York University.}
\inst{2} 
\and
Davis Issac \inst{2}}
\institute{Department of Computer Science and Automation, Indian Institute of Science, India.
  \email{sunil@csa.iisc.ernet.in} \and
Max Planck Institute for Informatics, Saarland Informatics Campus, Germany.
  \email{ycheung@mpi-inf.mpg.de~,~dissac@mpi-inf.mpg.de}}
\begin{document}

\maketitle

%
\begin{abstract}
We study a natural problem in graph sparsification, the Spanning Tree Congestion (\STC) problem.
Informally, the \STC~problem seeks a spanning tree with no tree-edge \emph{routing} too many of the original edges.
The root of this problem dates back to at least 30 years ago,
motivated by applications in network design, parallel computing and circuit design.
Variants of the problem have also seen algorithmic applications as a preprocessing step of several important graph algorithms.

For any general connected graph with $n$ vertices and $m$ edges, we show that its STC is at most $\O(\sqrt{mn})$,
which is asymptotically optimal since we also demonstrate graphs with STC at least $\Omega(\sqrt{mn})$.
We present a polynomial-time algorithm which computes a spanning tree with congestion $\O(\sqrt{mn}\cdot \log n)$.
We also present another algorithm for computing a spanning tree with congestion $\O(\sqrt{mn})$;
this algorithm runs in sub-exponential time when $m = \omega(n \log^2 n)$.

For achieving the above results, an important intermediate theorem is \emph{generalized Gy\H{o}ri-Lov{\'{a}}sz theorem},
for which Chen et al.~\cite{CKLRSV2007} gave a non-constructive proof.
We give the first elementary and constructive proof
by providing a local search algorithm with running time $\Os\left( 4^n \right)$,
which is a key ingredient of the above-mentioned sub-exponential time algorithm.
We discuss a few consequences of the theorem concerning graph partitioning, which might be of independent interest.

We also show that for any graph which satisfies certain \emph{expanding properties}, its STC is at most $\O(n)$,
and a corresponding spanning tree can be computed in polynomial time.
We then use this to show that a random graph has STC $\Theta(n)$ with high probability.
\end{abstract}
\makeatletter{}%

\section{Introduction}\label{sect:intro}

\emph{Graph Sparsification/Compression} generally describes a transformation of
a \emph{large} input graph into a \emph{smaller/sparser} graph that preserves certain feature
(e.g., distance, cut, congestion, flow) either exactly or approximately.
The algorithmic value is clear, since the smaller graph might be used as a preprocessed input to an algorithm,
so as to reduce subsequent running time and memory requirement.
In this paper, we study a natural problem in graph sparsification, the Spanning Tree Congestion (\STC) problem.
Informally, the \STC~problem seeks a spanning tree with no tree-edge \emph{routing} too many of the original edges.
The problem is well-motivated by network design applications,
where designers aim to build sparse networks that meet traffic demands, while ensuring no connection (edge) is too congested.
Indeed, the root of this problem dates back to at least 30 years ago under the name of ``load factor''~\cite{BCLR1986,Simonson1987},
with natural motivations from parallel computing and circuit design applications.
The \STC~problem was formally defined by Ostrovskii~\cite{Ostrovskii2004} in 2004,
and since then a number of results have been presented.
The probabilistic version of the \STC~problem, coined as \emph{probabilistic capacity mapping},
also finds applications in several important graph algorithm problems, e.g., the Min-Bisection problem.

Two canonical goals for graph sparsification problems are to understand the trade-off
between the sparsity of the output graph(s) and how well the feature is preserved,
and to devise (efficient) algorithms for computing the sparser graph(s). These are also our goals for the \STC~problem.
We focus on two scenarios: (A) general connected graphs with $n$ vertices and $m$ edges,
and (B) graphs which exhibit certain \emph{expanding properties}:
\begin{itemize}
\item For (A), we show that the spanning tree congestion (STC) is at most $\O(\sqrt{mn})$,
which is a factor of $\Omega(\sqrt{m/n})$ better than the trivial bound of $m$.
We present a polynomial-time algorithm which computes a spanning tree with congestion $\O(\sqrt{mn}\cdot \log n)$.
We also present another algorithm for computing a spanning tree with congestion $\O(\sqrt{mn})$;
this algorithm runs in sub-exponential time when $m = \omega(n \log^2 n)$.
For almost all ranges of average degree $2m/n$, we also demonstrate graphs with STC at least $\Omega(\sqrt{mn})$.
\item For (B), we show that the expanding properties permit us to devise polynomial-time algorithm which computes a spanning tree with congestion $\O(n)$.
Using this result, together with a separate lower-bound argument, we show that a random graph has $\Theta(n)$ STC with high probability.
\end{itemize}

For achieving the results for (A), an important intermediate theorem is \emph{generalized Gy\H{o}ri-Lov{\'{a}}sz theorem},
which was first proved by Chen et al.~\cite{CKLRSV2007}.
Their proof uses advanced techniques in topology and homology theory, and is non-constructive.

\begin{definition}
In a graph $G=(V,E)$, a $k$-connected-partition is a $k$-partition of $V$ into $\cup_{j=1}^k V_j$,
such that for each $j\in [k]$, $G[V_j]$ is connected.
\end{definition}

\begin{theorem}[{\cite[Theorems 25,~26]{CKLRSV2007}}]\label{thm:weighted-Gyori-Lovasz-no-time}
Let $G=(V,E)$ be a $k$-connected~\footnote{For brevity, we say ``$k$-connected'' for ``$k$-vertex-connected'' henceforth.} graph.
Let $w$ be a weight function $w:V\ra \rrplus$. For any $U\subset V$, let $w(U) := \sum_{v\in U} w(v)$.
Given any $k$ distinct terminal vertices $t_1,\cdots,t_k$,
and $k$ positive integers $T_1,\cdots,T_k$ such that for each $j\in [k]$, $T_j\ge w(t_j)$ and $\sum_{i=1}^k T_i = w(V)$,
there exists a $k$-connected-partition of $V$ into $\cup_{j=1}^k V_j$,
such that for each $j\in [k]$, $t_j\in V_j$ and $w(V_j)\le T_j+\max_{v\in V} w(v)-1$.
\end{theorem}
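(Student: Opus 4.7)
\textbf{Proof proposal for Theorem~\ref{thm:weighted-Gyori-Lovasz-no-time}.}
My plan is a local-search algorithm that maintains, as an invariant, a $k$-connected-partition $(V_1,\ldots,V_k)$ with $t_j\in V_j$ for every $j$, and terminates at a partition for which the weight bound $w(V_j)\le T_j+\wmax-1$ holds throughout. An initial feasible state is easy to produce: set $V_j=\{t_j\}$ for $j<k$ and $V_k=V\setminus\{t_1,\ldots,t_{k-1}\}$; since $G$ is $k$-connected, deleting $k-1$ vertices leaves $G[V_k]$ connected, and the other parts are connected trivially.

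Call $V_j$ \emph{overloaded} if $w(V_j)>T_j+\wmax-1$ and \emph{underloaded} if $w(V_j)<T_j$. Because $\sum_j w(V_j)=w(V)=\sum_j T_j$, whenever some part is overloaded some other part is underloaded. The atomic move I use is the \emph{leaf shift}: maintain a rooted spanning tree $\tau_\ell$ of $G[V_\ell]$ at $t_\ell$ for each $\ell$; pick a leaf $u$ of some $\tau_\ell$ that has a neighbor in another part $V_{\ell'}$, and reassign $u$ to $V_{\ell'}$. Removing a leaf from a tree keeps it connected (and still containing $t_\ell$, provided $u\ne t_\ell$), so this preserves every partition invariant.

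If a single leaf shift does not suffice to move weight from an overloaded $V_j$ toward an underloaded $V_i$, I would construct an \emph{augmenting chain} of parts $V_j=V_{j_0},V_{j_1},\ldots,V_{j_s}=V_i$ with leaves $u_r\in V_{j_r}$ each adjacent to $V_{j_{r+1}}$, and perform all $s$ shifts in one batch. As potential I propose the weight vector $(w(V_{\pi(1)}),\ldots,w(V_{\pi(k)}))$ sorted in nonincreasing order, compared lexicographically; a well-chosen augmenting chain strictly decreases this vector, and the extra slack $\wmax-1$ in the statement is exactly what allows the final shift into $V_i$ to miss $T_i$ by at most $\wmax-1$ without re-introducing an overload. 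Consequently, at a local minimum of the potential no part is overloaded and we are done.

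The main obstacle is two-fold. First, one must show that an augmenting chain always exists whenever the current partition is infeasible. This is where $k$-vertex-connectivity is essential, and I expect to invoke a Menger-style argument: the $k$ terminals $t_1,\ldots,t_k$ together with $k$-connectivity yield $k$ internally vertex-disjoint paths between any two parts, and a pigeonhole/cut argument then forces at least one of these paths to materialize as a chain of leaf adjacencies between consecutive parts (after possibly re-rooting the intermediate $\tau_\ell$'s). Second, obtaining the $\Os(4^n)$ running time: the lex-decrease argument by itself only bounds the iteration count by the total weight, which can be exponential in the input, so to meet the $\Os(4^n)$ bound I would refine the potential to a combinatorial quantity that depends only on the combinatorial type of the assignment of vertices to parts (possibly augmented by a bounded amount of per-vertex state such as ``leaf vs.\ internal in $\tau_\ell$''), giving a state space of size at most $4^n$ up to polynomial factors. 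Designing this refined potential so that every leaf-shift still makes it strictly decrease is, in my view, the most delicate step.
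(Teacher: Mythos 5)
Your proposal takes a genuinely different route from the paper's, and it has a real gap precisely at the step you yourself flag as "most delicate." The paper follows Gy\H{o}ri's framework: it maintains a \emph{partial} partition that only ever grows from the singletons $A_j=\{t_j\}$, so parts are always fitted by construction and never need to be shrunk; progress is recorded by a configuration vector $(L_A,N^0_A,\ldots,N^n_A)$ counting light sets and the numbers of vertices at each ``level,'' and the machinery of cascades, ranks, and levels (Lemmas~\ref{lem:good-inc} and~\ref{lem:bridging}) is exactly what is used to prove that a strictly-better configuration always exists. You instead start from a \emph{full} partition with one gigantic part and try to rebalance it by chained leaf shifts. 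The existence of an augmenting chain from an overloaded part to an underloaded one is not something Menger's theorem hands you: $k$-connectivity gives internally vertex-disjoint paths between vertices, but those paths wind through parts in an uncontrolled order, the vertices on them need not be removable (leaf/non-cut) in the current spanning forests, and a vertex adjacent to the ``next'' part in your chain may be a cut vertex of its current part. This is the entire content of Gy\H{o}ri's argument, and the cascade/level apparatus in the paper is the tool built to resolve it; a Menger-plus-pigeonhole sketch does not substitute for it. A second gap is quantitative: after a batch of leaf shifts the intermediate parts each gain $w(u_{r-1})$ and lose $w(u_r)$, which are unrelated quantities, so nothing prevents an intermediate part from becoming overloaded or from breaking the lexicographic decrease of your sorted-weight potential. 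You would need a delicate order of operations and extra bookkeeping (this is, in effect, what the cascade ordering provides). Finally, your termination potential is not pinned down: the sorted weight vector can take exponentially (in the weights) many values, and the promised ``combinatorial refinement'' with at most $4^n$ states that still strictly decreases under every move is precisely what the paper's configuration vector is, so without specifying it your running-time claim does not follow. In short, your high-level instincts (invariant-preserving local moves, a potential that cannot cycle, over/underloaded parts must coexist) are sound, but the two load-bearing claims --- existence of a weight-respecting augmenting chain and a combinatorial strictly-decreasing potential --- are exactly the nontrivial parts, and the paper's partial-partition/cascade framework is there to supply them.
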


One of our main contributions is to give the first elementary and constructive proof
by providing a \emph{local search} algorithm with running time $\Os\left( 4^n \right)$:\footnote{$\Os$ notation hides all polynomial factors in input size.}

\begin{theorem}\label{thm:weighted-Gyori-Lovasz}
(a)~There is an algorithm which given a $k$-connected graph, computes a $k$-connected-partition satisfying the conditions stated in Theorem~\ref{thm:weighted-Gyori-Lovasz-no-time}
in time $\Os\left( 4^n \right)$.\\
(b)~If we need a $(\floor{k/2}+1)$-partition instead of $k$-partition (the input graph remains assumed to be $k$-connected),
the algorithm's running time improves to $\Os(2^{\O((n/k)\log k)})$.
\end{theorem}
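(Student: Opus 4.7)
The plan is to prove both parts of the theorem via a single \emph{local search} framework. The algorithm starts from an arbitrary $k$-connected-partition $(V_1,\ldots,V_k)$ with $t_j\in V_j$ for all $j$; such a partition exists by the $k$-connectivity of $G$ and is constructible in polynomial time (for instance by a BFS from the terminals followed by a clean-up phase that restores per-part connectivity). A \emph{local move} picks a boundary vertex $v\in V_i\setminus\{t_i\}$ adjacent to some $V_j$, $j\ne i$, and transfers $v$ from $V_i$ to $V_j$, provided that $V_i\setminus\{v\}$ remains connected (the part $V_j\cup\{v\}$ is automatically connected, since $v$ is adjacent to $V_j$). I would use a lexicographic potential $\Phi$ whose primary component is the sorted sequence (largest first) of the excesses $\max(0,\,w(V_j)-T_j-\wmax+1)$, executing a move only when it strictly decreases $\Phi$. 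Termination (no improving move available) will be shown to coincide with $\Phi=\mathbf{0}$, i.e.\ with $w(V_j)\le T_j+\wmax-1$ for every $j$.

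The main technical obstacle is the \emph{progress lemma}: whenever the current partition violates the required weight bound on some part $V_i$, a sequence of local moves strictly decreasing $\Phi$ exists. A direct one-step transfer from $V_i$ to an adjacent underweight part may not be available---$V_i$ could border only parts whose weights are already near or above their targets. The standard workaround is to construct an \emph{augmenting chain} of parts $V_i=V_{i_0},V_{i_1},\ldots,V_{i_r}=V_j$, with $V_j$ underweight, along which a vertex is pushed from each $V_{i_\ell}$ into $V_{i_{\ell+1}}$ one step at a time. Establishing that such a chain exists \emph{and} that each step of it can be realized without disconnecting the donating part is the delicate point; I expect to exploit the $k$-connectivity of $G$ via a Menger-type argument on an auxiliary ``part graph,'' and to handle the ``cut-vertex trap'' by first relocating an internal vertex that shields the cut before completing the desired swap. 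This is the step I anticipate as the hard part of the proof.

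For the running time, each iteration is polynomial: enumerating boundary vertices, checking connectivity of $V_i\setminus\{v\}$, and searching for an augmenting chain can all be done in $\poly(n)$ time, so it suffices to bound the number of distinct values that $\Phi$ can take across the execution. For part~(a), I would show that an appropriate encoding of the lexicographic potential admits at most $\Ot(4^n)$ distinct values, yielding the $\Os(4^n)$ bound. For part~(b), moving to a $(\floor{k/2}+1)$-partition effectively enlarges target sizes by allowing pairs of terminals to share a part, which forces each part to contain $\Omega(n/k)$ vertices on average; a multinomial-type count over the connected partitions compatible with such sizes then trims the state space to $2^{\O((n/k)\log k)}$, giving the improved running time.
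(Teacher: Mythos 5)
Your proposal takes a genuinely different route from the paper, but it has a real gap that is not just a matter of fleshing out details.

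\textbf{The structural difference.} You maintain a \emph{full} connected partition of $V$ from the start and try to rebalance it by transferring boundary vertices, driving a lexicographic excess potential to zero. The paper does the opposite: it works with \emph{fitted partial partitions} (FPPs), starting from the trivial $A_j=\{t_j\}$ and \emph{growing} the sets while maintaining the invariant that every set is already fitted (i.e.\ $w(A_j)\le T_j+\wmax-1$). This reversal is not cosmetic. In the paper's scheme the weight constraint is \emph{always} satisfied, so the only obstruction to termination is that some vertices are unassigned; the cascade/reservoir/level/rank machinery is then built to show one can always extend the partial partition. Your scheme instead has to argue, for every full partition with a violated weight bound, that a weight-reducing local move (or chain of moves) exists. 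That is precisely the step Gy\H{o}ri-type arguments are designed to avoid having to prove directly.

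\textbf{The gap in the progress lemma.} You acknowledge the progress lemma as the hard part, but the difficulty is worse than a missing argument. Your augmenting chain $V_{i_0},V_{i_1},\ldots,V_{i_r}$ pushes a vertex at each step, but at each intermediate part the vertex entering and the vertex leaving are \emph{different} vertices with \emph{different} weights. With arbitrary vertex weights the net weight change of each intermediate part along the chain is uncontrolled in sign and magnitude, so the sorted-excess potential need not improve lexicographically even if a chain of admissible transfers exists; in fact, pushing along the chain could strictly increase $\Phi$. On top of this there is the connectivity obstruction you call the ``cut-vertex trap'': you may have \emph{no} boundary vertex of a heavy part that can be removed without disconnecting it, and your remark that one should ``first relocate an internal vertex that shields the cut'' is not realizable, since an internal vertex has no neighbor in another part and therefore cannot be moved at all. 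The paper sidesteps both issues because in an FPP a heavy set never needs to \emph{shed} weight; it only ever releases vertices back into the unassigned pool $S_A$, and the cascade/level structure ensures that when this happens the overall configuration vector still strictly increases.

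\textbf{The running-time claims.} Your iteration bound is also not substantiated. The number of distinct values of your potential is governed by the number of distinct partitions, which is on the order of $k^n$, not $4^n$; for $k>4$ this exceeds the claimed bound. Moreover your potential is built from real-valued weight excesses, so a priori the number of distinct values could depend on the weight granularity, whereas the theorem explicitly promises a bound independent of the weights. The paper's configuration vector $(L_A,N^0_A,\ldots,N^n_A)$ is deliberately weight-free (it counts light sets and level-$\ell$ vertices), which yields the clean $\binom{n+r-1}{n}$ count and, via the observation that ranks are at most $n$ (resp.\ $2n/(k-2)$ for part (b)), gives $\Os(4^n)$ and $\Os(2^{\O((n/k)\log k)})$. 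Your multinomial-count sketch for part~(b) does not connect to any invariant of your algorithm, so it is not clear why the state space your local search actually visits would be so constrained.

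In short: the idea of a local search with a lexicographic potential is in the right spirit, but the choice to work with full partitions creates a progress obstruction that the proposal does not overcome and that the paper's FPP/cascade formalism was specifically designed to avoid. To make your route work you would essentially have to reinvent that machinery.
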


We make three remarks. 
First, the $\Os(2^{\O((n/k)\log k)})$-time algorithm is a key ingredient of our algorithm for computing a spanning tree with congestion $\O(\sqrt{mn})$.
Second, since Theorem \ref{thm:weighted-Gyori-Lovasz-no-time} guarantees the existence of such a partition, the problem of computing
such a partition is not a \emph{decision problem} but a \emph{search problem}.
Our local search algorithm shows that this problem is in the complexity class \textsf{PLS}~\cite{JPY1988};
we raise its completeness in \textsf{PLS} as an open problem.
Third, the running times do not depend on the weights.

\parabold{The \STC~Problem, Related Problems and Our Results.}
Given a connected graph $G=(V,E)$, let $T$ be a spanning tree.
For an edge $e =(u,v) \in E$, its detour with respect to $T$ is the unique path from $u$ to $v$ in $T$;
let $\Detour(e,T)$ denote the set of edges in this detour.
The stretch of $e$ with respect to $T$ is $\left|\Detour(e,T)\right|$, the length of its detour.
The dilation of $T$ is $\max_{e \in E}  \left|\Detour(e,T)\right|$. %
The edge-congestion of an edge $e\in T$ is $\ec(e,T) := \left| \{ f \in E:  e \in \Detour(f,T) \}\right|$, i.e.,
the number of edges in $E$ whose detours contain $e$.
The congestion of $T$ is $\conge(T) := \max_{e\in T} \ec(e,T)$.
The spanning tree congestion (STC) of the graph $G$ is $\STC(G) := \min_T \conge(T)$, where $T$ runs over all spanning trees of $G$.

We note that there is an equivalent cut-based definition for edge-congestion, which we will use in our proofs.
For each tree-edge $e\in T$, removing $e$ from $T$ results in two connected components; let $U_e$ denote one of the components.
Then $\ec(e,T) := \left| E(U_e,V\setminus U_e) \right|$.

Various types of congestion, stretch and dilation problems are studied in computer science and discrete mathematics.
In these problems, one typically seeks a spanning tree  (or some other structure) with minimum congestion or dilation.
We mention some of the well-known problems, where minimization is done over all the spanning trees of the given graph:
\begin {enumerate}%
\item The Low Stretch Spanning Tree (\LSST) problem is to find a spanning tree which minimizes the total stretch of all the edges of $G$.~\cite{AKPW95}
It is easy to see that minimizing the total stretch is equivalent to minimizing the total edge-congestion of the selected spanning tree.
\item The 
\STC~problem is to find a spanning tree of minimum congestion.~\cite{Ostrovskii2004}
\item Tree Spanner Problem is to find a spanning tree of minimum dilation.~\cite{CC1995}
The more general Spanner problem is to find a sparser subgraph of minimum \emph{distortion}.~\cite{ADDJS1993}
\end {enumerate}
There are other congestion and dilation problems which do not seek a spanning tree, but some other structure.
The most famous among them is the Bandwidth problem and the Cutwidth problem; see the survey~\cite{RSV2000} for more details.

Among the problems mentioned above, several strong results were published in connection with the \LSST~problem.
Alon et al.~\cite{AKPW95} had shown a lower bound of $\Omega(\max\{n \log n,m\})$.
Upper bounds have been derived and many efficient algorithms have been devised; the current best upper bound is
$\Ot(m\log n)$.~\cite{AKPW95,EEST2008,ABN2008,KMP2011,AN2012}
Since total stretch is identical to total edge-congestion,
the best upper bound for the \LSST~problem automatically implies an $\Ot(\frac{m}{n}\log n)$ upper bound on the \emph{average} edge-congestion.
But in the \STC~problem, we concern the \emph{maximum} edge-congestion;
as we shall see, for some graphs, the maximum edge-congestion has to be a factor of $\tilde{\Omega}(\sqrt{n^3/m})$ larger than the average edge-congestion.

In comparison, there were not many strong and general results for the \STC~Problem, though it was studied extensively in the past 13 years.
The problem was formally proposed by Ostrovskii~\cite{Ostrovskii2004} in 2004.
Prior to this, Simonson~\cite{Simonson1987} had studied the same parameter under a different name to approximate the cut width of outer-planar graph.
A number of graph-theoretic results were presented on this topic~\cite{Ostrovskii2010,LLO2014,KOY2009,KO2011,BKMO2011}.
Some complexity results were also presented recently~\cite{OOUU2011,BFGOL2012}, but most of these results concern special classes of graphs.
The most general result regarding STC of general graphs is an $\O(n\sqrt{n})$ upper bound by
L\"owenstein, Rautenbach and Regen in 2009~\cite{LRR2009}, and a matching lower bound by Ostrovskii in 2004~\cite{Ostrovskii2004}.
Note that the above upper bound is not interesting when the graph is sparse, since there is also a trivial upper bound of $m$.
In this paper we come up with a strong improvement to these bounds after 8 years: %

\noindent {\bf Theorem (informal): }
For a connected graph $G$ with $n$ vertices and $m$ edges, its spanning tree congestion is at most $\O(\sqrt {mn})$.
In terms of average degree $d_{avg} = 2m/n$, we can state this upper bound as $\O(n \sqrt{d_{avg}})$.
There is a matching lower bound.

Our proof for achieving the $\O(\sqrt {mn})$ upper bound is constructive.
It runs in exponential time in general; for graphs with $m = \omega(n \log^2  n)$ edges, it runs in sub-exponential time.
By using an algorithm of Chen et al.~\cite{CKLRSV2007} for computing \emph{single-commodity confluent flow} from \emph{single-commodity splittable flow},
we improve the running time to polynomial, but with a slightly worse upper bound guarantee of $\O(\sqrt {mn}\cdot \log n)$.

Motivated by an open problem raised by Ostrovskii~\cite{Ostrovskii2011} concerning STC of random graphs,
we formulate a set of \emph{expanding properties}, and prove that for any graph satisfying these properties, its STC is at most $\O(n)$.
We devise a polynomial time algorithm for computing a spanning tree with congestion $\O(n)$ for such graphs.
This result, together with a separate lower-bound argument, permit us to show that for random graph $\calG(n,p)$
with $1 \ge p \ge \frac{c \log n}{n}$ for some small constant $c>1$,\footnote{Note that the \STC~problem is relevant only for connected graphs.
Since the threshold function for graph connectivity is $\frac {\log n}{n}$, this result applies for almost all of the relevant range of values of $p$.}
its STC is $\Theta(n)$ with high probability, thus resolving the open problem raised by Ostrovskii completely.

\parabold{Min-Max Graph Partitioning and the Generalized Gy\H{o}ri-Lov{\'{a}}sz Theorem.}
It looks clear that the powerful Theorem~\ref{thm:weighted-Gyori-Lovasz-no-time} can make an impact on graph partitioning.
We discuss a number of its consequences which might be of wider interest.

Graph partitioning/clustering is a prominent topic in graph theory/algorithms, and has a wide range of applications.%
A popular goal is to partition the vertices into sets such that the number of edges across different sets is \emph{small}.
While the \emph{min-sum} objective, i.e., minimizing the total number of edges across different sets,
is more widely studied, in various applications, the more natural objective is the \emph{min-max} objective,
i.e., minimizing the maximum number of edges leaving each set. The min-max objective is our focus here.

Depending on applications, there are additional constraints on the sets in the partition.
Two natural constraints are (i) balancedness: the sets are (approximately) balanced in sizes, and
(ii) induced-connectivity: each set induces a connected subgraph.
The balancedness constraint appears in the application of \emph{domain decomposition} in parallel computing,
while the induced-connectivity constraint is motivated by divide-and-conquer algorithms for spanning tree construction.
Imposing both constraints simultaneously is not feasible for every graph; for instance,
consider the star graph with more than $6$ vertices and one wants a $3$-partition.
Thus, it is natural to ask, for which graphs do partitions satisfying both constraints exist.
Theorem~\ref{thm:weighted-Gyori-Lovasz-no-time} implies a simple sufficient condition for existence of such partitions.

By setting the weight of each vertex in $G$ to be its degree, and using the elementary fact that the maximum degree
$\Delta(G)\le n \le 2m/k$ for any $k$-connected graph $G$ on $n$ vertices and $m$ edges, we have
\begin{proposition}\label{pr:min-max}
If $G$ is a $k$-connected graph with $m$ edges, then there exists a $k$-connected-partition,
such that the total degree of vertices in each part is at most $4m/k$. Consequently, the min-max objective is also at most $4m/k$.
\end{proposition}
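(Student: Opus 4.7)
The plan is to apply Theorem~\ref{thm:weighted-Gyori-Lovasz-no-time} with the weight function $w(v) := \deg_G(v)$, exactly as the preceding paragraph suggests. First I would observe that $w(V) = \sum_{v \in V} \deg(v) = 2m$, and that any $k$-connected graph satisfies $\delta(G) \ge k$, hence $kn \le 2m$, i.e., $n \le 2m/k$. Combined with the trivial bound $\Delta(G) \le n-1$, this yields $w(v) = \deg(v) \le n - 1 < 2m/k$ for \emph{every} vertex $v$.

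Next I would pick any $k$ distinct terminals $t_1, \dots, t_k$ arbitrarily, and choose positive integers $T_1, \dots, T_k$ summing to $w(V) = 2m$ such that each $T_j \in \{\lfloor 2m/k \rfloor, \lceil 2m/k \rceil\}$. Since $w(t_j) < 2m/k$ as shown above, the required condition $T_j \ge w(t_j)$ is automatically satisfied, and in particular $T_j \le \lceil 2m/k \rceil \le 2m/k + 1$. Applying Theorem~\ref{thm:weighted-Gyori-Lovasz-no-time} then yields a $k$-connected-partition $V = V_1 \cup \cdots \cup V_k$ with
\[
w(V_j) \;\le\; T_j + \max_{v \in V} w(v) - 1 \;\le\; \left(\frac{2m}{k} + 1\right) + (n-1) - 1 \;\le\; \frac{4m}{k},
\]
where the last inequality again uses $n \le 2m/k$.

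Finally, since every edge leaving $V_j$ is incident to some vertex in $V_j$, the number of edges in $E(V_j, V \setminus V_j)$ is at most $\sum_{v \in V_j} \deg(v) = w(V_j) \le 4m/k$, which establishes the claimed min-max bound. There is no real obstacle here: the only mildly delicate point is selecting integer targets $T_j$ summing to exactly $2m$ while respecting $T_j \ge w(t_j)$, and this is immediate because the per-vertex weights are strictly below the average target $2m/k$.
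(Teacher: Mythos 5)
Your proof is correct and follows exactly the route the paper intends: weight each vertex by its degree, use $\Delta(G)\le n\le 2m/k$ for a $k$-connected graph, choose near-balanced integer targets $T_j$, and invoke Theorem~\ref{thm:weighted-Gyori-Lovasz-no-time}. The paper states this in one line; you have merely filled in the (correct) bookkeeping for the choice of $T_j$ and the verification of $T_j\ge w(t_j)$.
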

Due to expander graphs, this bound is optimal up to a small constant factor.
This proposition (together with Lemma~\ref{lm:high-connected-STC-full}) implies the following crucial lemma for achieving some of our results.
\begin{lemma}\label{lm:high-connected-STC-short}
Let $G$ be a $k$-connected graph with $m$ edges. Then $\emph{\STC}(G) \le 4m/k$.
\end{lemma}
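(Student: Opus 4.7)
The plan for Lemma~\ref{lm:high-connected-STC-short} is a two-step chain through the preceding results. First, I would apply Proposition~\ref{pr:min-max} to $G$: since $G$ is $k$-connected with $m$ edges, this yields a $k$-connected partition $V_1, \ldots, V_k$ in which every part satisfies $\sum_{v \in V_j} \deg_G(v) \le 4m/k$. Second, I would feed this partition into Lemma~\ref{lm:high-connected-STC-full}, which is stated earlier as the ``full'' counterpart converting a $k$-connected partition of a $k$-connected graph with per-part total degree at most $D$ into a spanning tree of congestion at most $D$. Substituting $D = 4m/k$ immediately yields $\STC(G) \le 4m/k$.

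Step one is an immediate invocation, and step two is a pure application of the assumed Lemma~\ref{lm:high-connected-STC-full}. So from the point of view of Lemma~\ref{lm:high-connected-STC-short} in isolation, there is no further work to do.

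Were Lemma~\ref{lm:high-connected-STC-full} itself to be carried out from scratch, the natural template would be to take a spanning tree $T_j$ of each induced subgraph $G[V_j]$ (connected by the partition hypothesis) and glue the pieces using one representative edge per tree edge of a spanning tree of the quotient multigraph on $\{V_1, \ldots, V_k\}$. The easy intuition for intra-part tree edges $e \in T_j$ is that removing $e$ splits $T_j$ into $A$ and $V_j \setminus A$, so the corresponding cut should be charged only to the total degree of $V_j$, which is $\le 4m/k$.

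The main obstacle, however, is that removing any tree edge (intra-part or inter-part) can also reorganize how the other parts attach to the two sides, so the cut in $G$ may additionally contain edges between two parts $V_i, V_{i'}$ with $i, i' \ne j$ that happen to land on opposite sides. This is not \textit{a priori} bounded by the total degree of a single part. I expect Lemma~\ref{lm:high-connected-STC-full} to dispose of this either (i) by picking the inter-part connectors in a canonical order---e.g., iteratively peeling a part from the quotient so that every tree-edge cut separates exactly one whole part from the rest, in which case the cut size equals the boundary of that part and hence is $\le 4m/k$---or (ii) by recursively invoking the generalized Gy\H{o}ri--Lov\'asz theorem on the quotient multigraph, yielding a hierarchical decomposition whose associated spanning tree inherits the $4m/k$ bound at every level. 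Either route localizes every tree-edge cut to the boundary of a single Gy\H{o}ri--Lov\'asz part, completing the argument.
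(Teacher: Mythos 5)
Your two-step chain --- Proposition~\ref{pr:min-max} to get the degree-balanced $k$-connected partition, then Lemma~\ref{lm:high-connected-STC-full} with $D=4m/k$ --- is exactly the paper's proof of this lemma. One small technical point: Lemma~\ref{lm:high-connected-STC-full} requires terminals $t_1,\dots,t_k$ with $t_2,\dots,t_k$ all neighbours of $t_1$ and $t_j\in V_j$, which the bare statement of Proposition~\ref{pr:min-max} does not supply; in practice one fixes $t_1$ and $k-1$ of its neighbours first and then invokes Theorem~\ref{thm:weighted-Gyori-Lovasz-no-time} with $w=\deg_G$ and $T_j=2m/k$ (feasible since $\Delta(G)\le n\le 2m/k$), which is what the paper in fact does when it runs the algorithm. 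On your speculation about Lemma~\ref{lm:high-connected-STC-full} itself: the essential detail is not merely ``one part per cut'' but that every inter-part connector is the edge $\{t_1,t_j\}$ incident to the \emph{same} vertex $t_1$, so that deleting any intra-$V_1$ tree edge leaves all of $V_2,\dots,V_k$ on the $t_1$ side; a generic quotient-tree gluing, or even a star whose connectors land at different vertices of $V_1$, can produce a tree-edge cut spanning two parts' boundaries and hence only bound the congestion by $2D$.
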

Proposition~\ref{pr:min-max} can be generalized to include approximate balancedness in terms of number of vertices.
By setting the weight of each vertex to be $cm/n$ plus its degree in $G$, we have
\begin{proposition}\label{pr:min-max-balance}
Given any fixed $c>0$, if $G$ is a $k$-connected graph with $m$ edges and $n$ vertices,
then there exists a $k$-connected-partition such that the total degree of vertices in each part is at most $(2c+4)m/k$,
and the number of vertices in each part is at most $\frac{2c+4}{c}\cdot \frac nk$.
\end{proposition}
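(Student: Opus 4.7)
The plan is to apply Theorem~\ref{thm:weighted-Gyori-Lovasz-no-time} with the weight function $w(v) := cm/n + \deg(v)$ suggested in the excerpt; we treat the weights as reals (or equivalently scale by $n$ to obtain integer weights $cm + n\deg(v)$, with the $O(1)$ rounding slack absorbed by the claimed bounds, which is affordable because $c$ is a fixed positive constant). The total weight is then
\[
w(V) \;=\; n\cdot\frac{cm}{n} + \sum_{v\in V}\deg(v) \;=\; cm + 2m \;=\; (c+2)m.
\]

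The key preliminary step is to bound $\max_v w(v)$. Since $G$ is $k$-connected, every vertex has degree at least $k$, so $2m \ge kn$, giving $n \le 2m/k$; hence $\Delta(G) \le n-1 \le 2m/k - 1$. Combined with $cm/n \le cm/k$ (as $n \ge k$), this yields $\max_v w(v) = cm/n + \Delta(G) \le (c+2)m/k - 1$.

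With this estimate I would pick any $k$ distinct vertices as terminals $t_1,\ldots,t_k$ (no careful choice is needed, since every vertex weight is already comfortably below $(c+2)m/k$) and choose integer targets $T_j$ each within $1$ of $(c+2)m/k$ summing to $w(V)$; concretely, with $q := \lfloor (c+2)m/k\rfloor$ and $r := (c+2)m - kq$, set $T_j = q+1$ for $j \le r$ and $T_j = q$ for $j > r$. Since $q \ge (c+2)m/k - 1 \ge \max_v w(v) \ge w(t_j)$, the hypotheses of Theorem~\ref{thm:weighted-Gyori-Lovasz-no-time} are satisfied, and the theorem delivers a $k$-connected-partition $V = \bigsqcup_{j=1}^k V_j$ with
\[
w(V_j) \;\le\; T_j + \max_v w(v) - 1 \;\le\; \big((c+2)m/k + 1\big) + \big((c+2)m/k - 1\big) - 1 \;\le\; (2c+4)m/k.
\]

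Finally I would translate this weight bound into the two bounds claimed by the proposition. Writing $D_j := \sum_{v\in V_j}\deg(v)$ and $N_j := |V_j|$, we have $w(V_j) = D_j + N_j\cdot cm/n$; since both summands are nonnegative, $D_j \le w(V_j) \le (2c+4)m/k$ gives the degree bound, and $N_j\cdot cm/n \le w(V_j) \le (2c+4)m/k$ rearranges to $N_j \le \tfrac{2c+4}{c}\cdot\tfrac{n}{k}$, the vertex-count bound. The only mildly delicate point is the rounding/scaling needed to make $T_j$ a positive integer with the correct sum, but since $c$ is a fixed positive constant this is routine and does not affect the asymptotic bounds.
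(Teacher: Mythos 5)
Your proof is correct and takes essentially the same route the paper sketches (the paper's ``proof'' is only the one-line hint of using $w(v)=cm/n+\deg(v)$, and you fill in the details: $w(V)=(c+2)m$, $\wmax\le(c+2)m/k-1$ via $\Delta(G)\le n-1\le 2m/k-1$ and $cm/n\le cm/k$, then split $w(V_j)\le(2c+4)m/k$ into the degree and vertex-count bounds). The one loose end you correctly flag — integrality of the $T_j$ — is genuinely routine: for rational $c=p/q$, scale every weight by $qn$ to get integer weights $pm+qn\deg(v)$ (this scales $\wmax$ and the $T_j$'s consistently and the final bound divides back out), and for irrational $c$ one can approximate by rationals and use that there are only finitely many candidate partitions.
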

\parabold{Further Related Work.}
Concerning \STC~problem, Okamoto et al.~\cite{OOUU2011} gave an $\Os(2^n)$ algorithm for computing the exact STC of a graph.
The probabilistic version of the \STC~problem, coined as \emph{probabilistic capacity mapping},
is an important tool for several graph algorithm problems, e.g., the Min-Bisection problem.
R\"{a}cke~\cite{Racke2008} showed that in the probabilistic setting, distance and capacity are \emph{interchangeable},
which briefly says a general upper bound for one objective implies the same general upper bound for the other.
Thus, due to the above-mentioned results on \LSST, there is an upper bound of $\Ot(\log n)$ on the \emph{maximum average congestion}.
R\"{a}cke's result also implies an $\O(\log n)$ approximation algorithm to the Min-Bisection problem, improving upon
the $\O(\log^{3/2} n)$ approximation algorithm of Feige and Krauthgamer~\cite{FK2002}.
However, in the deterministic setting, such interchanging phenomenon does not hold:
there is a simple tight bound $\Theta(n)$ for dilation, 
but for congestion it can be as high as $\Theta(n\sqrt{n})$.
For the precise definitions, more background and key results about the concepts we have just discussed,
we recommend the writing of Andersen and Feige~\cite{AF2009}.

Graph partitioning/clustering is a prominent research topic with wide applications,
so it comes no surprise that a lot of work has been done on various aspects of the topic;
we refer readers to the two extensive surveys by Schaeffer~\cite{Schaeffer2007} and by Teng~\cite{Teng2016}.
Kiwi, Spielman and Teng~\cite{KST2001} formulated the min-max $k$-partitioning problem
and gave bounds for classes of graphs with \emph{small separators}, which are then improved by Steurer~\cite{Steurer2006}.
On the algorithmic side, many of the related problems are \textsf{NP}-hard,
so the focus is on devising approximation algorithms.
Sparkled by the seminal work of Arora, Rao and Vazirani~\cite{ARV2009} on sparsest cut and of Spielman and Teng~\cite{ST2013} on local clustering,
graph partitioning/clustering algorithms with various constraints have attracted attention across theory and practice;
we refer readers to~\cite{BFKMNNS2014} for a fairly recent account of the development.
The min-sum objective has been extensively studied;
the min-max objective, while striking as the more natural objective in some applications, has received much less attention.
The only algorithmic work on this objective (and its variants) are Svitkina and Tardos~\cite{ST2004} and Bansal et al.~\cite{BFKMNNS2014}.
None of the above work addresses the induced-connectivity constraint.

The classical version of Gy\H{o}ri-Lov{\'{a}}sz Theorem (i.e., the vertex weights are uniform)
was proved independently by Gy\H{o}ri~\cite{Gyori1976} and Lov{\'{a}}sz~\cite{Lovasz1977}.
Lov{\'{a}}sz's proof uses homology theory and is non-constructive.
Gy\H{o}ri's proof is elementary and is constructive implicitly, but he did not analyze the running time.
Polynomial time algorithms for constructing the $k$-partition were devised for $k=2,3$~\cite{STN1990,WK1993},
but no non-trivial finite-time algorithm was known for general graphs with $k\ge 4$.\footnote{In 1994,
there was a paper by Ma and Ma in \emph{Journal of Computer Science and Technology},
which claimed a poly-time algorithm for all $k$. However, according to a recent study~\cite{HL2014}, Ma and Ma's algorithm can fall into an endless loop.
Also, Gy\H{o}ri said the algorithm should be wrong (see~\cite{NRN1997}).}
Recently, Hoyer and Thomas~\cite{HT2016} provided a clean presentation of Gy\H{o}ri's proof by introducing their own terminology,
which we use for our constructive proof of Theorem~\ref{thm:weighted-Gyori-Lovasz-no-time}.

\parabold{Notation.} Given a graph $G=(V,E)$, an edge set $F\subseteq E$ and $2$ disjoint vertex subsets $V_1,V_2\subset V$,
we let $F(V_1,V_2) ~:=~ \left\{~e=\{v_1,v_2\}\in F ~|~ v_1\in V_1~\text{and}~v_2\in V_2~\right\}$.

\section{Technical Overview}\label{sect:technical-overview}

To prove the generalized Gy\H{o}ri-Lov{\'{a}}sz theorem constructively, we follow the same framework of Gy\H{o}ri's proof~\cite{Gyori1976},
and we borrow terminology from the recent presentation by Hoyer and Thomas~\cite{HT2016}.
But it should be emphasized that proving our generalized theorem is not straight-forward,
since in Gy\H{o}ri's proof, at each stage a single vertex is moved from one set to other to make progress,
while making sure that the former set remains connected.
In our setting, in addition to this we also have to ensure that the weights in the partitions do not exceed the specified limit;
and hence any vertex that can be moved from one set to another need \emph{not} be candidate for being transferred.
The proof is presented in Section \ref{sect:gen-GL}.

As discussed, a crucial ingredient for our upper bound results is Lemma~\ref{lm:high-connected-STC-short},
which is a direct corollary of the generalized Gy\H{o}ri-Lov{\'{a}}sz theorem.
The lemma takes care of the highly-connected cases; for other cases we provide a recursive way to construct a low congestion spanning tree;
see Section~\ref{sect:STC-upper} for details.
For showing our lower bound for general graphs, the challenge is to maintain high congestion while keeping density small.
To achieve this, we combine three expander graphs with \emph{little} overlapping between them,
and we further make those overlapped vertices of very high degree.
This will force a tree-edge adjacent to the centroid of any spanning tree to have high congestion;
see Section~\ref{sect:STC-lower} for details.

We formulate a set of expanding properties which permit constructing a spanning tree of better congestion guarantee in polynomial time.
The basic idea is simple: start with a vertex $v$ of high degree as the root.
Now try to grow the tree by keep attaching new vertices to it, while keeping the invariant that
the subtrees rooted at each of the neighbours of $v$ are roughly balanced in size; each such subtree is called a \emph{branch}.
But when trying to grow the tree in a balanced way, 
we will soon realize that as the tree grow, all the remaining vertices may be seen to be adjacent only to a few number of ``heavy'' branches.
To help the balanced growth, the algorithm will identify a \emph{transferable} vertex which is in a heavy branch,
and it and its descendants in the tree can be transferred to a ``lighter'' branch.
Another technique is to use multiple rounds of matching between vertices in the tree and the remaining vertices to attach new vertices to the tree.
This will tend to make sure that all subtrees do not grow uncontrolled.
By showing that random graph satisfies the expanding properties with appropriate parameters,
we show that a random graph has STC of $\Theta(n)$ with high probability.

\section{Generalized Gy\H{o}ri-Lov{\'{a}}sz Theorem}\label{sect:gen-GL}
We prove Theorem~\ref{thm:weighted-Gyori-Lovasz} in this section.
Observe that the classical Gy\H{o}ri-Lov{\'{a}}sz Theorem follows from Theorem~\ref{thm:weighted-Gyori-Lovasz-no-time}
by taking $w(v)=1$ for all $v\in V$ and $T_j=n_j$ for all $j\in[k]$.
We note that a perfect generalization where one requires that $w(V_j)=T_j$ is not possible
--- think when all vertex weights are even integers, while some $T_j$ is odd.

Let $G=(V,E)$ be a $k$-connected graph on $n$ vertices and $m$ edges, and $w:V\ra \rrplus$ be a weight function.
For any subset $U\subseteq V$, $w(U) := \sum_{u\in U} w(u)$. %
Let $\wmax := \max_{v\in V} w(v)$.

\subsection{Key Combinatorial Notions}

We first highlight the key combinatorial notions used for proving Theorem~\ref{thm:weighted-Gyori-Lovasz};
see Figures~\ref{fig:cascade} and~\ref{fig:rank_level} for illustrations of some of these notions.

\parabold{Fitted Partial Partition.}
First, we introduce the notion of \emph{fitted partial partition} (FPP).
An FPP $A$ is a tuple of $k$ subsets of $V$, $(A_1,\dots,A_k)$, such that 
the $k$ subsets are pairwise disjoint, and for each $j\in [k]$:
\begin{enumerate}
	\item $t_j\in A_j$,
	\item $G[A_j]$ is connected and 
	\item $w(A_j)\le T_j+\wmax-1$ (we say the set is \emph{fitted} for satisfying this inequality).
\end{enumerate}
We say an FPP is a \emph{Strict Fitted Partial Partition} (SFPP) if $A_1\cup\dots\cup A_k$ is a proper subset of $V$.
We say the set $A_j$ is \emph{light} if $w(A_j)< T_j$, and we say it is \emph{heavy} otherwise.
Note that there exists at least one light set in any SFPP, for otherwise $w(A_1\cup \dots\cup A_k) \ge \sum_{j=1}^k T_j  = w(V) $, which means $A_1\cup\dots \cup A_k=V$.
Also note that by taking $A_j=\{t_j\}$, we have an FPP, and hence at least one FPP exists.

\parabold{Configuration.}
For a set $A_j$ in an FPP $A$ and a vertex $v\in A_j\setminus \{t_j\}$, we define the \underline{reservoir} of $v$ with respect to $A$,
denoted by $R_A(v)$, as the vertices in the same connected component as $t_j$ in $G[A_j]\setminus \{v\}$. Note that $v\notin R_A(v)$.

For a \emph{heavy} set $A_j$, a sequence of vertices $(z_1,\dots,z_p)$ for some $p\ge 0$ is called a \underline{cascade} of $A_j$
if $z_1\in A_j\setminus\ \{t_j\}$ and $z_{i+1}\in A_j\setminus R_A(z_i)$ for all $1\le i<p$.
The cascade is called a \underline{null cascade} if $p=0$, i.e., if the cascade is empty.
Note that for light set, we do \emph{not} need to define  its cascade since we do not use it in the proof. (See Figure~\ref{fig:cascade}.)

\begin{figure}[h]
\centering
\includegraphics[scale=0.4]{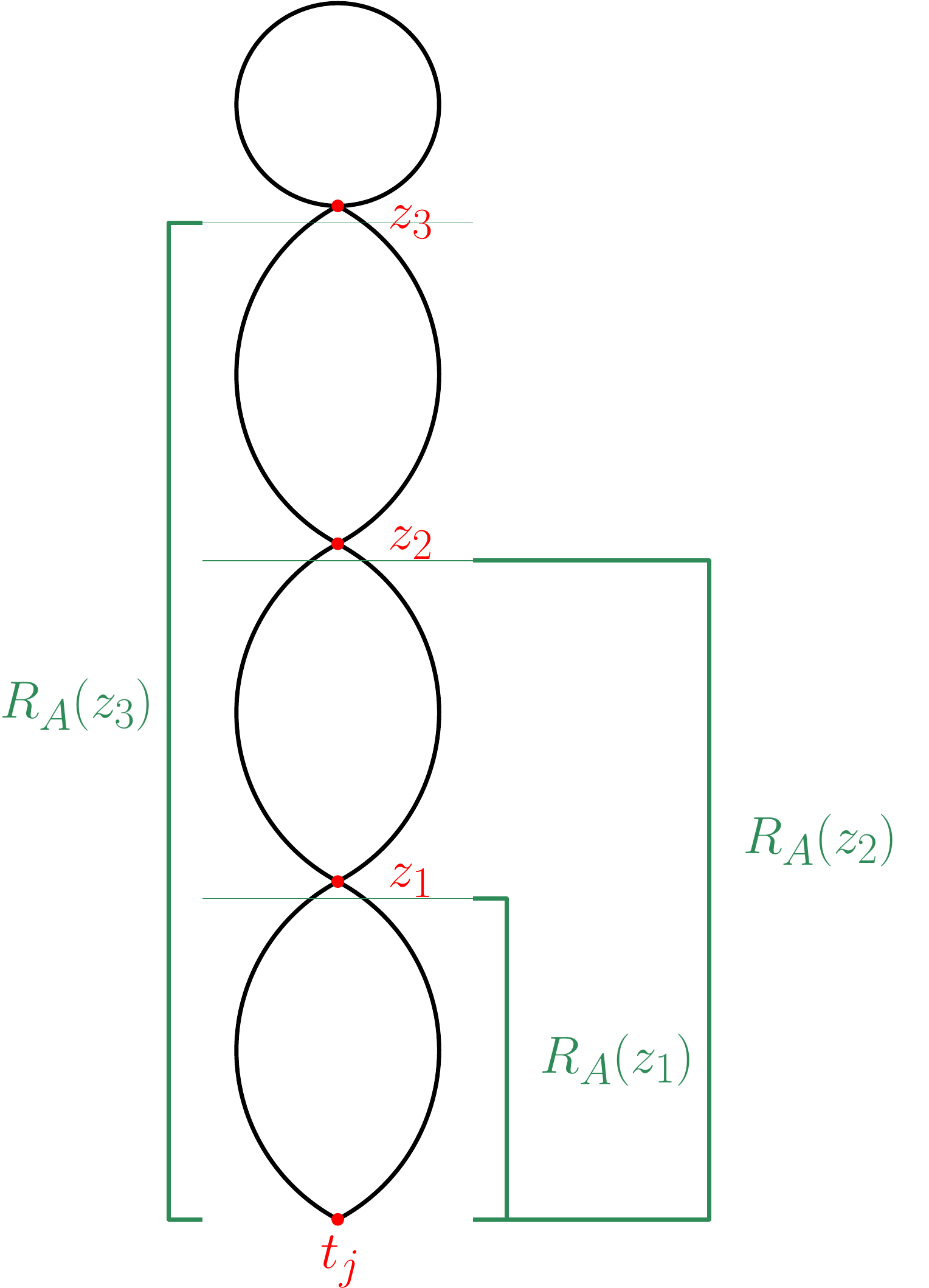}
\caption{Given a configuration $(A,D)$ and a heavy set $A_j$ in $A$,
the figure shows a cascade $(z_1,z_2,z_3)$ for the heavy set $A_j$ and several reservoirs of the cascade vertices.\newline
For any $z_\ell$, $z_\ell \notin R_A(z_\ell)$.
A cascade vertex $z_\ell$ is a cut-vertex of $G[A_j]$, i.e., $G[A_j\setminus \{z_\ell\}]$ is disconnected.
The removal of $z_\ell$ from $A_j$ will lead to at least two connected components in $G[A_j\setminus \{z_\ell\}]$,
and the connected component containing $t_j$ is the reservoir of $z_\ell$.\newline
We identify $t_j = z_0$, but we clarify that a terminal vertex is never in a cascade.
Each epoch between $z_\ell$ and $z_{\ell+1}$,
and also the epoch above $z_3$, is a subset of vertices $B\subset A_j$, where $B\ni z_\ell$ and $G[B]$ is connected.
Note that in general, it is possible that there is no vertex above the last cascade vertex.}
\label{fig:cascade}
\end{figure}

A \underline{configuration} $\calC_A$ is defined as a pair $(A,D)$,
where $A=(A_1,\cdots,A_k)$ is an FPP,
and $D$ is a set of cascades, which consists of exactly one cascade (possibly, a null cascade) for each heavy set in $A$.
A vertex that is in some cascade of the configuration is called a \underline{cascade vertex}.

Given a configuration, we define \underline{rank} and \underline{level} inductively as follows.
Any vertex in a light set is said to have level $0$.
For $i\ge 0$, a cascade vertex is said to have rank $i+1$ if it has an edge to a level-$i$ vertex
but does not have an edge to any level-$i'$ vertex for $i'<i$.
A vertex $u$ is said to have level $i$, for $i\ge 1$, if $u\in R_A(v)$ for some rank-$i$ cascade vertex $v$,
but $u\notin R_A(w)$ for any cascade vertex $w$ such that rank of $w$ is less than $i$.
A vertex that is not in $R_A(v)$ for any cascade vertex $v$ is said to have level $\infty$.

A configuration is called a \underline{valid configuration} if for each heavy set $A_j$,
rank is defined for each of its cascade vertices and the rank is strictly increasing in the cascade,
i.e., if $\left\{ z_1,\dots,z_p \right\}$ is the cascade, then $\rank(z_1)<\cdots<\rank(z_p)$.
Note that by taking $A_j=\{t_j\}$ and taking the null cascade for each heavy set
(in this case $A_j$ is heavy if $w(t_j)=T_j$), we get a valid configuration. (See Figure~\ref{fig:rank_level}.)

\begin{figure}[h]
\centering
\includegraphics[scale=0.48]{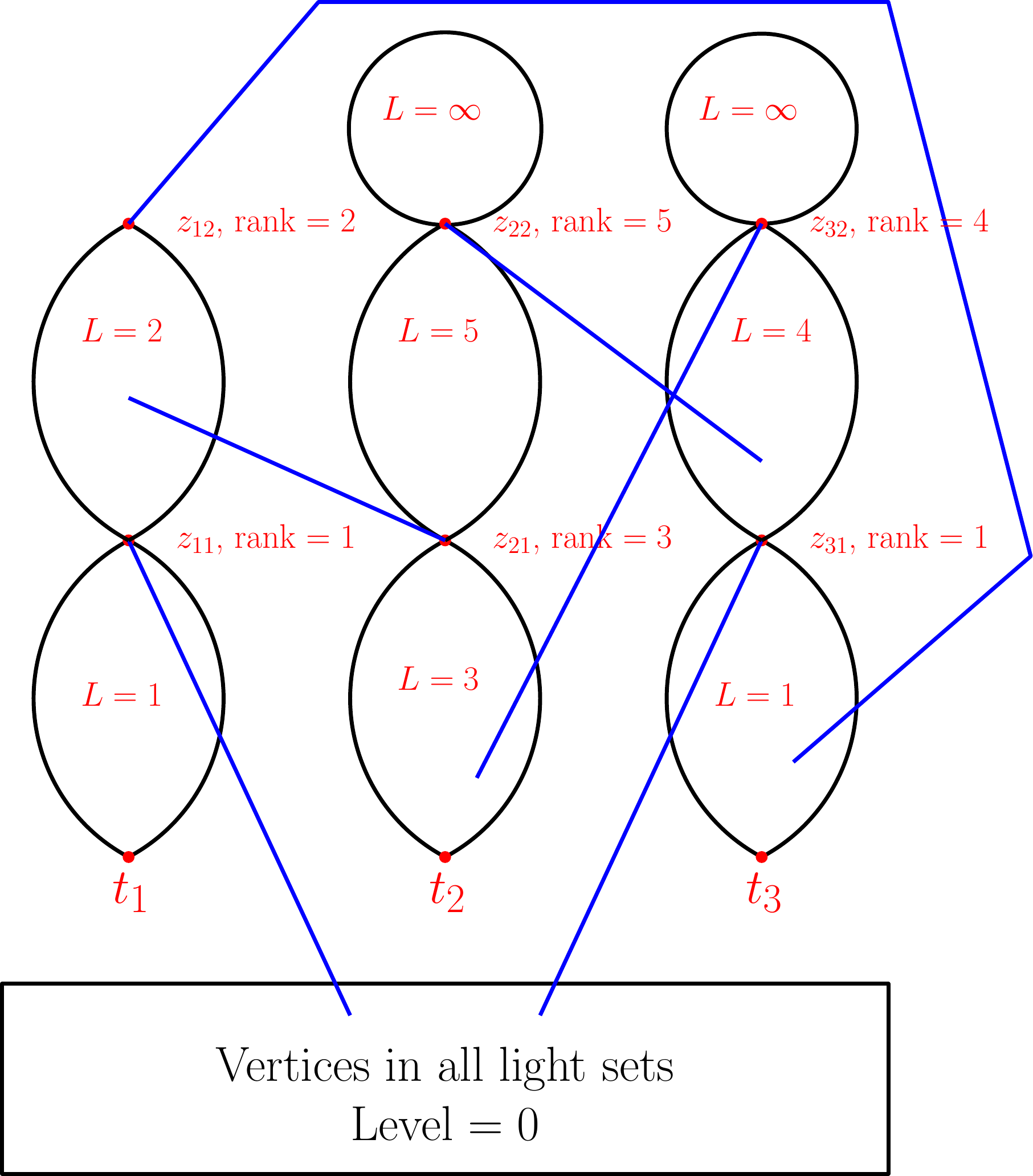}
\caption{An instance of a valid configuration. Every blue segment/curve represent an edge from a cascade vertex to a vertex in some reservoir or light set.\newline
Every cascade vertex connected to a light set has rank $1$, and all vertices in the epoch immediately below a rank $1$ cascade vertex
are of level $1$. Inductively, every cascade vertex connected to a vertex of level $i$ has rank $i+1$,
and all vertices in the epoch immediately below a rank $i$ cascade vertex
are of level $i$.
All vertices above the last cascade vertex of each cascade has level $\infty$.}
\label{fig:rank_level}
\end{figure}

\parabold{Configuration Vectors and Their Total Ordering.}
For any vertex, we define its neighborhood level as the smallest level of any vertex adjacent to it.
A vertex $v$ of level $\ell$ is said to satisfy \underline{maximality property}
if each vertex adjacent on it is either a rank-$(\ell+1)$ cascade vertex, has a level of at most $\ell+1$, or is one of the terminals $t_j$ for some $j$.
For any $\ell\ge 0$, a valid configuration is called an \underline{$\ell$-maximal configuration}
if all vertices having level at most $\ell-1$ satisfy the maximality property.
Note that by definition, any valid configuration is a $0$-maximal configuration.

For a configuration $\calC_A=\left( \left( A_1,\dots,A_k \right),D \right)$, we define $S_A:=V\setminus (A_1\cup\dots\cup A_k)$.
An edge $uv$ is said to be a \underline{bridge} in $\calC_A$ if $u\in S_A$, $v\in A_j$ for some $j\in[k]$, and $level(v)\neq\infty$.

A valid configuration $\calC_A$ is said to be \underline{$\ell$-good} if the highest rank of a cascade vertex in $\calC_A$ is exactly $\ell$
(if there are no cascade vertices, then we take the highest rank as $0$), $\calC_A$ is $\ell$-maximal,
and all bridges $uv$ in $\calC_A$ (if any) are such that $u\in S_A$ and $level(v)=\ell$.
Note that taking  $A_j=\left\{ t_j \right\}$ and taking the null cascade for each heavy set gives a $0$-good configuration.

For each configuration $\calC_A=(A,D)$, we define a \underline{configuration vector} as below:
$$
(~L_A~,~N^0_A~,~N^1_A~,~N^2_A~,~\dots~,~N^n_A~),
$$
where $L_A$ is the number of light sets in $A$, and $N^\ell_A$ is the total number of all level-$\ell$ vertices in $\calC_A$.

Next, we define ordering on configuration vectors.
Let $\calC_A$ and $\calC_B$ be configurations.
We say $\calC_A>_0 \calC_B$ if
\begin{itemize}
	\item $L_A < L_B$, or
	\item $L_A=L_B$, and $N^{0}_A > N^{0}_B$.
\end{itemize}
We say $\calC_A=_0\calC_B$ if $L_A=L_B$ and $N^{0}_A=N^0_B$.
We say $\calC_A\ge_0\calC_B$ if $\calC_A=_0\calC_B$ or $\calC_A>_0\calC_B$.
We say $\calC_A=_\ell\calC_B$ if $L_A=L_B$, and $N^{\ell'}_A=N^{\ell'}_B$ for all $\ell'\le \ell$.

For $1\le \ell \le n$, we say $\calC_A>_\ell \calC_B$ if
\begin{itemize}
	\item $\calC_A >_{\ell-1} \calC_B$, or
	\item $\calC_A=_{\ell-1}\calC_B$, and $N^{\ell}_A >N^{\ell}_B$.
\end{itemize}
We say $\calC_A\ge_\ell \calC_B$ if $\calC_A=_\ell \calC_B$ or $\calC_A>_\ell \calC_B$.
We say $\calC_A>\calC_B$ ($\calC_A$ is \emph{strictly better} than $\calC_B$) if $\calC_A>_n \calC_B$.

\subsection{Proof of Theorem \ref{thm:weighted-Gyori-Lovasz}}

We use two technical lemmas about configuration vectors and their orderings to prove Theorem \ref{thm:weighted-Gyori-Lovasz}(a).
The proof of Theorem \ref{thm:weighted-Gyori-Lovasz}(b) follows closely with the proof of Theorem \ref{thm:weighted-Gyori-Lovasz}(a),
but makes use of an observation about the rank of a vertex in the local search algorithm, to give an improved bound on the number of configuration vectors navigated by the algorithm.
\begin{lemma}
	\label{lem:good-inc}
	Given any $\ell$-good configuration $\calC_A=\left( A=(A_1,\dots,A_k),D_A \right))$ that does not have a bridge, we can find an $(\ell+1)$-good configuration $\calC_B=\left(B = \left( B_1,B_2,\dots,B_k \right),D_B \right)$ in polynomial time such that $\calC_B> \calC_A$.
\end{lemma}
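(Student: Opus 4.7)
The plan is to design a local search procedure that starts from $\calC_A$ and iteratively applies \emph{cascade-extension moves}, each of which strictly raises the configuration vector in the $>$ order, until an $(\ell+1)$-good configuration $\calC_B$ is obtained. A cascade-extension move is triggered by finding a level-$\ell$ vertex $v$ that has a neighbor $u$ which is a level-$\infty$ vertex not currently on any cascade; such a $u$ must live in some heavy set $A_{j'}$, since light-set vertices have level $0$ and $S_A$ vertices carry no level. The move appends $u$ to the end of $A_{j'}$'s cascade. I would first verify that $u$'s new rank is exactly $\ell+1$: by the $\ell$-maximality of the current configuration (inherited from $\calC_A$), no level-$\infty$ vertex can be adjacent to a vertex of level strictly less than $\ell$ except through a cascade vertex or a terminal, so the smallest level of any neighbor of $u$ is exactly $\ell$, forcing $u$'s rank to be $\ell+1$; this rank also strictly exceeds the rank of the former last cascade vertex of $A_{j'}$, so the cascade remains valid.

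The key structural point is that such a move never alters the vertex membership of any $A_i$, so fitted-ness, connectedness, and the light/heavy partition are preserved, together with the reservoirs of the earlier cascade vertices of $A_{j'}$ and of the other sets. Consequently, the only level updates are for vertices in $R_A(u)\setminus R_A(z_p)$, where $z_p$ is the previous last cascade vertex of $A_{j'}$; each such vertex (including $z_p$ itself) switches from level $\infty$ to level $\ell+1$. Thus $L_A$ and $N^0_A,\dots,N^\ell_A$ stay fixed while $N^{\ell+1}_A$ strictly grows by at least one, so the new configuration is strictly greater in the $>$ order. Since $N^{\ell+1}_A\le n$, the procedure terminates after at most $n$ moves, and each move runs in polynomial time by finding $(v,u)$ and recomputing the affected reservoirs and levels via BFS/DFS.

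The hard part will be showing that the precondition for a move---existence of a level-$\ell$ vertex with a \emph{non-cascade} level-$\infty$ neighbor---is always met whenever the current configuration is not yet $(\ell+1)$-good. When the maximum cascade rank is still exactly $\ell$, the last cascade vertex $z_p$ of some heavy set is a cut vertex adjacent to a level-$\ell$ vertex in its reservoir, already witnessing a failure of $(\ell+1)$-maximality; but $z_p$ is itself a cascade vertex and so not a valid target $u$. I expect to lean on the no-bridge hypothesis on $\calC_A$---which forces every $S_A$--$A$ edge to land on a level-$\infty$ vertex and thereby sharply constrains the structure of the region above $z_p$---together with $k$-connectivity of $G$ (via Menger-type disjoint-path arguments) to rule out the pathological case in which every level-$\infty$ neighbor of every level-$\ell$ vertex is already a cascade vertex. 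Once this is resolved, verifying that the terminal configuration $\calC_B$ is $(\ell+1)$-good is routine: the maximum rank is $\ell+1$ because at least one rank-$(\ell+1)$ cascade vertex has been appended; $(\ell+1)$-maximality follows from the termination condition combined with the preserved $\ell$-maximality; and every bridge has its $A$-endpoint at level exactly $\ell+1$, since $\calC_A$ is bridge-free and so every $S_A$--$A$ edge originally landed on a level-$\infty$ vertex, which either still has level $\infty$ (and so its incident edges remain non-bridges) or has been promoted to level $\ell+1$ by a cascade-extension move.
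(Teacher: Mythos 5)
Your overall plan is faithful to the paper's strategy: extend cascades so that new level-$\ell$-adjacent vertices become rank-$(\ell+1)$ cascade vertices, observe that $L_A$ and $N^0_A,\dots,N^\ell_A$ are unaffected while $N^{\ell+1}_A$ grows, and appeal to no-bridge plus $k$-connectivity to guarantee progress. However, the single-vertex iterative scheme as written is unsound: it can produce an invalid configuration by appending two rank-$(\ell+1)$ vertices to the same cascade. Take $u,u'$ in the same heavy set $A_{j'}$, both non-cascade, both level $\infty$, both adjacent to level-$\ell$ vertices (necessarily in other sets, since $z_p$ is a cut vertex separating the level-$\infty$ region of $A_{j'}$ from $R_A(z_p)$), and suppose $u'\notin R_A(u)$ (a path $t_{j'}$--$\cdots$--$z_p$--$u$--$u'$ realizes this). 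If your procedure picks $u$ first, then $u'$ remains level $\infty$ and still qualifies as a move target; appending it yields a cascade ending in two consecutive vertices of rank $\ell+1$, violating the strictly-increasing-rank requirement of a valid configuration. The paper avoids exactly this by doing one batch update per call and, for each heavy set $A_j$ with $X_j\neq\emptyset$, selecting the unique $x_j\in X_j$ with $X_j\setminus\{x_j\}\subseteq R_A(x_j)$, so every other candidate in that set falls into $R_A(x_j)$ and is promoted to level $\ell+1$ at once. Your version can be repaired by always choosing, among same-set candidates, the one whose reservoir is maximal (equivalently, the one all other candidates lie in the reservoir of), but this must be stated and justified.

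The second issue is that you defer the key existence claim --- that whenever the configuration is not yet $(\ell+1)$-good, a qualifying $(v,u)$ pair exists --- to a Menger-style argument you do not carry out, and this is where the entire lemma earns its keep. The paper's argument is short and you should reproduce it rather than gesture at it: supposing $X_j=\emptyset$ for all $j$, let $Y$ consist of the highest cascade vertex of each heavy set (or $t_j$ for null cascades); since the FPP is strict there is at least one light set, so $|Y|\le k-1$; the level-$\infty$ vertices of $V\setminus Y$ and the remaining vertices of $V\setminus Y$ are both nonempty and, using the no-bridge hypothesis and $\ell$-maximality, admit no edge between them, contradicting $k$-connectivity. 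Without this, and without the repair above, your proof does not establish the lemma. Two smaller points: you should explicitly exclude $u=t_{j'}$ (terminals cannot be cascade vertices), and note that $S_A$ vertices are not "without a level" but have level $\infty$; they are excluded as move targets because an edge from $S_A$ to a level-$\ell$ vertex would be a bridge in $\calC_A$.
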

\begin{proof}
Since $\calC_A$ is $\ell$-maximal, any vertex that is at level $\ell'<\ell$ satisfies maximality property.
So, for satisfying $(\ell+1)$-maximality, we only need to worry about the vertices that are at level $\ell$.
Let $X_j$ be the set of all vertices $x\in A_j$ such that $x$ is adjacent to a level-$\ell$ vertex, $level(x)\ge\ell+1$ (i.e., $level(x)=\infty$ as the highest rank of any cascade vertex is $\ell$), $x\neq t_j$, and $x$ is not a cascade vertex of rank $\ell$.

We claim that there exists at least one $j$ for which $X_j$ is not empty. 
If that is not the case, then  we exhibit a cut set of size at most $k-1$.
For each $j$ such that $A_j$ is a heavy set with a non-null cascade, let $y_j$ be the highest ranked cascade vertex in $A_j$.
For each $j$ such that $A_j$ is a heavy set with a null cascade, let $y_j$ be $t_j$.
Let $Y$ be the set of all $y_j$ such that $A_j$ is a heavy set.
Note that $|Y|\le k-1$ as $A$ is an SFPP and hence has at least one light set.
Let $Z_{\infty}$ be the set of all vertices in $V\setminus Y$ that have level $\infty$ and $Z$ be the remaining vertices in $V\setminus Y$.
Since $A$ is an SFPP, $S_A\neq\emptyset$, and since all vertices in $S_A$ have level $\infty$,  we have that $Z_\infty\neq \emptyset$.
$Z$ is not empty because there exists at least one light set in $A$ and the vertices in a light set have level $0$.
We show that there is no edge between $Z_\infty$ and $Z$ in $G$.
Suppose there exists an edge $uv$ such that $u\in Z_\infty$ and $v\in Z$.
If $u\in S_A$, then $uv$ is a bridge which is a contradiction by our assumption that $\calC_A$ does not have a bridge.
Hence $u\in A_j$ for some $j\in[k]$.
Note that $A_j$ has to be a heavy set, otherwise $u$ has level $0$.
We have that $u$ is not a cascade vertex (as all cascade vertices with level $\infty$ are in $Y$) and $u\neq t_j$ (as all $t_j$ such that $level(t_j)=\infty$ are in $Y$).
Also, $v$ is not of level $\ell$ as otherwise, $u\in X_j$ but we assumed $X_j$ is empty. 
But then, $v$ has level at most $\ell-1$, $u$ has level $\infty$, and there is an edge $uv$.
This means that $\calC_A$ was not $\ell$-maximal, which is a contradiction.
Thus, there exists at least one $j$ for which $X_j$ is not empty.

For any $j$ such that $X_j\neq \emptyset$ , there is at least one vertex $x_j$ such that $X_j\setminus \{x_j\}\subseteq R_A(x_j)$.
Now we give the configuration $\calC_B$ as follows.	
We set $B_{j}=A_{j}$ for all $j\in[k]$. 
For each heavy set $A_j$ such that $X_j\neq \emptyset$, we take the cascade of $B_j$ as the cascade of $A_j$ appended with $x_j$.
For each heavy set $A_j$ such that $X_j= \emptyset$, we take the cascade of $B_j$ as the cascade of $A_j$.
It is easy to see that $\calC_B$ is $(\ell+1)$-maximal as each vertex that had an edge to level-$\ell$ vertices in $\calC_A$ is now either a rank $\ell+1$ cascade vertex or a level-$(\ell+1)$ vertex or is $t_j$ for some $j$.
Also, notice that all the new cascade vertices that we introduce (i.e., the $x_j$'s) have their rank as $\ell+1$ and there is at least one rank $\ell+1$ cascade vertex as $X_j$ is not empty for some $j$.
Since there were no bridges in $\calC_A$, all bridges in $\calC_B$ has to be from $S_B$ to a vertex having level $\ell+1$.
Hence, $\calC_B$ is $(\ell+1)$-good.
All vertices that had level at most $\ell$ in $\calC_A$ retained their levels in $\calC_B$. And, at least one level-$\infty$ vertex of 
$\calC_A$ became a level-$(\ell+1)$ vertex in $\calC_B$ because the cascade vertex that was at rank $\ell$ becomes level-$(\ell+1)$ vertex now in at least one set. 
Since $\calC_A$ had no level-$(\ell+1)$ vertices, this means that $\calC_B>\calC_A$.
\end{proof}

\begin{lemma}	\label{lem:bridging}
Given an $\ell$-good configuration $\calC_A=(A=(A_1,\dots,A_k),D_A)$ having a bridge, we can find in polynomial time a valid configuration $
\calC_B=\left( B=\left( B_1,\dots,B_k \right),D_B \right)$  such that one of the following holds:
\begin{itemize}
	\item $\calC_B>_\ell \calC_A$, and $\calC_B$ is an $\ell$-good configuration, or
	\item $\calC_B\ge_{\ell-1} \calC_A$, there is a bridge $u'v'$ in $\calC_B$ such that $u'\in S_B$ and $level(v')\le\ell-1$, and  $\calC_B$ is an $(\ell-1)$-good configuration.
\end{itemize}
\end{lemma}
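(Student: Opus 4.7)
My plan is to prove Lemma~\ref{lem:bridging} by case analysis on $\ell$ and on whether adding $u$ to $A_j$ overflows its fitted bound. The easy cases yield outcome~(a) directly; the hard case requires a cascade-respecting shift that either achieves outcome~(a) (when the shift is clean) or degrades to outcome~(b) (when it overflows somewhere).

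First I would dispose of $\ell=0$: since $A_j$ is light, $B_j:=A_j\cup\{u\}$ remains fitted by integrality, and either $N^0$ increases by one (if $B_j$ stays light) or $L$ drops by one (if $B_j$ becomes heavy with null cascade), so $\calC_B>_0\calC_A$; $\calC_B$ is trivially $0$-good, yielding outcome~(a). Next, if $\ell\geq 1$ but $w(A_j)+w(u)\leq T_j+\wmax-1$, I would set $B_j:=A_j\cup\{u\}$ and retain all cascades. Writing $z$ for the rank-$\ell$ cascade vertex of $A_j$ with $v\in R_A(z)$, $u$ joins $R_B(z)=R_A(z)\cup\{u\}$ and acquires a level in $[0,\ell]$; no other vertex's level changes, so some $N^{\ell'}_B=N^{\ell'}_A+1$ with $\ell'\leq\ell$, giving $\calC_B>_\ell\calC_A$. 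The $\ell$-maximality of $\calC_A$ forbids any non-terminal level-$(<\ell)$ vertex from being adjacent to the formerly level-$\infty$ vertex $u$, so $\calC_B$ inherits $\ell$-maximality and is $\ell$-good, again producing outcome~(a).

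The core case is $\ell\geq 1$ with $w(A_j)+w(u)>T_j+\wmax-1$, where some vertex must be evicted. I would take the topmost cascade vertex $z$ of $A_j$ (necessarily of rank $\ell$, with $v\in R_A(z)$), set $U:=A_j\setminus R_A(z)$ (which contains $z$ and is connected through it), and pick a level-$(\ell-1)$ neighbour $w$ of $z$ in some set $A_{j'}$. The natural shift is $B_j:=R_A(z)\cup\{u\}$ and $B_{j'}:=A_{j'}\cup U$; integrality together with $w(U)\geq w(z)\geq 1$ and $w(A_j)+w(u)\geq T_j+\wmax$ gives $w(B_j)\leq T_j+\wmax-1$, so $B_j$ is fitted, and $B_{j'}$ is connected via $zw$. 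If $B_{j'}$ is also fitted, I would rebuild the cascades (dropping $z$ from $A_j$'s cascade, appending or re-ranking $z$ in $B_{j'}$'s cascade as appropriate) and verify that the formerly level-$\infty$ vertices in $U$ acquire finite levels in $\calC_B$; careful counting then yields outcome~(a) when a rank-$\ell$ cascade vertex survives (so $\calC_B$ is $\ell$-good), or outcome~(b) when the top rank drops to $\ell-1$ and a fresh bridge materialises in the process. If $B_{j'}$ overflows instead, the shift is abandoned and $U$ is placed into $S_B$ with $B_j:=R_A(z)\cup\{u\}$ and other sets unchanged; then $zw$ is the required bridge of level $\ell-1$ (because $A_{j'}$ is unchanged), $\calC_B$ becomes $(\ell-1)$-good once $z$ leaves $A_j$'s cascade, and the unchanged lower-rank reservoirs ensure $\calC_B\geq_{\ell-1}\calC_A$.

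The main obstacle is the cascade and level bookkeeping in the core case, especially in the overflow sub-case. Evicting $U$ wholesale may drop $w(B_j)$ below $T_j$, turning $B_j$ light and raising $L_B$ above $L_A$, which would violate $\calC_B\geq_{\ell-1}\calC_A$ at the $L$ coordinate. The operation therefore has to be refined in that scenario: one may instead split $U$ among several recipient sets, evict a smaller non-cut-vertex of $A_j$, or iterate the shift along a descending chain of cascade vertices so that $B_j$ stays heavy while the bridge descends exactly one level. Ensuring that the two outcomes exhaust all sub-cases with the correct $\ell$-goodness versus $(\ell-1)$-goodness and the correct bridge level is the delicate part of the argument.
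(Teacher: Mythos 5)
Your $\ell=0$ case and your no-overflow case (your ``easy'' cases) match the paper's Cases 1 and 2.1. But your core overflow case takes a genuinely different route, and the route has a gap you yourself flag but do not close. You propose evicting $U := A_{j^*}\setminus R_A(z)$ \emph{wholesale}, setting $B_{j^*}:=R_A(z)\cup\{u\}$, and moving $U$ into a neighbouring set $A_{j'}$. The paper never touches any set other than $A_{j^*}$. Two concrete problems arise. First, your claim that $B_{j^*}$ is fitted does not follow from the stated inequalities: $w(A_{j^*})+w(u)\ge T_{j^*}+\wmax$ and $w(U)\ge 1$ give a \emph{lower} bound $w(B_{j^*})\ge T_{j^*}+\wmax-w(U)$, not the needed upper bound $w(B_{j^*})\le T_{j^*}+\wmax-1$; if $w(U)$ is small, $B_{j^*}$ may remain unfitted. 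Second, and more seriously, if $w(U)$ is large then $B_{j^*}$ becomes light, so $L_B>L_A$ and $\calC_B\not\ge_{\ell-1}\calC_A$. You acknowledge this but leave the repair (``split $U$'', ``evict a smaller non-cut-vertex'', ``iterate the shift'') as a sketch, which is exactly the crux of the lemma.

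The missing idea is the paper's incremental deletion: initialize $B_{j^*}:=A_{j^*}\cup\{u\}$, take a spanning tree $\tau$ of $G[\bar R\cup\{z\}]$ where $\bar R=A_{j^*}\setminus(R_A(z)\cup\{z\})$, and repeatedly delete a leaf of $\tau$ other than $z$ (then $z$, then $u$ if needed), stopping \emph{as soon as} $B_{j^*}$ becomes fitted. This keeps $B_{j^*}$ connected at every step, guarantees fittedness by construction, and --- crucially --- guarantees $B_{j^*}$ stays heavy: just before the last deletion its weight was at least $T_{j^*}+\wmax$, and the deleted vertex has weight at most $\wmax$. No second set $A_{j'}$ is ever altered, no overflow sub-case exists, and the dichotomy you want falls out cleanly: if $z$ survives the deletions, keep all cascades and $\calC_B>_\ell\calC_A$ remains $\ell$-good (outcome (a)); if $z$ is deleted, drop the rank-$\ell$ vertex from every cascade, observe that $z$'s former edge to a level-$(\ell-1)$ vertex $z'$ is the required bridge from $S_B$, and $\calC_B\ge_{\ell-1}\calC_A$ is $(\ell-1)$-good (outcome (b)). Your sub-case where the wholesale eviction is ``abandoned'' and $U$ goes to $S_B$ is morally the paper's Case 2.2.2, but without the incremental deletion you cannot establish heaviness of $B_{j^*}$, so as written the proposal does not prove the lemma.
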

\begin{proof}
Let $uv$ be a bridge where $u\in S_A$.
Let $A_{j^*}$ be the set containing $v$. 
Note that $level(v)=\ell$ because $\calC_A$ is $\ell$-good.
We keep $B_j=A_j$ for all $j\neq j^*$. 
But we modify $A_{j^*}$ to get $B_{j^*}$ as described below. 
We maintain that if $A_{j}$ is a heavy set then $B_j$ is also a heavy set for all $j$, and hence maintain that $L_B\le L_A$. 

\emph{Case 1: $A_{j^*}$ is a light set} (i.e., when $\ell=0$).
We take $B_{j^*}=A_{j^*}\cup \left\{ u \right\}$.  
For all $j$ such that $B_j$ is a heavy set, cascade of $B_j$ is taken as the null cascade. 
We have $w(A_{j^*})\le T_j-1$ because $A_{j^*}$ is a light set.
So, $w(B_{j^*})=w(A_{j^*})+w(u)\le (T_j-1)+\wmax$, and hence $B_{j^*}$ is fitted.
Also, $G[B_{j^*}]$ is connected and hence $(B_1,\dots, B_k)$ is an FPP\@.
We have $\calC_B>_0\calC_A$ because either $B_{j^*}$ became a heavy set in which case $L_B<L_A$, or it is a light set in which case $L_B=L_A$ and $N^0_B>N^0_A$.
It is easy to see that $\calC_B$ is $0$-good.

\emph{Case 2: $A_{j^*}$ is a heavy set i.e., when $\ell\ge 1$}. 

\emph{Case 2.1: $w(A_{j^*}\cup \{u\})\le T_j+\wmax-1$}.
We take $B_{j^*}=A_{j^*}\cup \{u\}$.
For each $j$ such that $B_{j}$ is a heavy set ($A_j$ is also heavy set for such $j$), the cascade of $B_{j}$ is taken as the cascade of $A_j$. 
$G[B_{j^*}]$ is clearly connected and $B_{j^*}$ is fitted by assumption of the case that we are in.
Hence $B$ is indeed an FPP.
Observe that all vertices that had level $\ell'\le \ell$ in $\calC_A$ still has level $\ell'$ in $\calC_B$. 
Since $level(v)$ was $\ell$ in $\calC_A$ by $\ell$-goodness of $\calC_A$, $u$ also has level $\ell$ in $\calC_B$; 
and $u$ had level $\infty$ in $\calC_A$.
Hence, $\calC_B>_\ell \calC_A$.
It is also easy to see that $\calC_B$ remains $\ell$-good.

\emph{Case 2.2: $w(A_{j^*}\cup \{u\})\ge T_j+\wmax$.}
Let $z$ be the cascade vertex of rank $\ell$ in $A_{j^*}$.
Note that $A_{j^*}$ should have such a cascade vertex as $v\in A_{j^*}$ has level $\ell$.
Let $\bar{R}$ be $A_{j^*}\setminus (R_A(z)\cup z)$, i.e., $\bar{R}$ is the set of all vertices in $A_{j^*} \setminus \{z\}$ with level $\infty$.
We initialize $B_{j^*}:=A_{j^*}\cup \{u\}$.
Now, we delete vertices one by one from $B_{j^*}$ in a specific order until $B_{j^*}$ becomes fitted.
We choose the order of deleting vertices such that $G[B_{j^*}]$ remains connected.
Consider a spanning tree $\tau$ of $G[\bar{R}\cup \{z\}]$.
$\tau$ has at least one leaf, which is not $z$.
We delete this leaf from $B_{j^*}$ and $\tau$.
We repeat this process until $\tau$ is just the single vertex $z$ or $B_{j^*}$ becomes fitted.
If $B_{j^*}$ is not fitted even when $\tau$ is the single vertex $z$, then delete $z$ from $B_{j^*}$.
If $B_{j^*}$ is still not fitted then delete $u$ from $B_{j^*}$.
Note that at this point $B_{j^*}\subset A_{j^*}$  and hence is fitted.
Also, note that $G[B_{j^*}]$ remains connected. 
Hence $\left( B_1,\dots,B_k \right)$ is an FPP.
$B_{j^*}$ does not become a light set because  $B_j$ became fitted when the last vertex was deleted from it. 
Before this vertex was deleted, it was not fitted and hence had weight at least $T_{j^*}+\wmax$ before this deletion.
Since the last vertex deleted has weight at most $\wmax$, $B_{j^*}$ has weight at least $T_{j^*}$ and hence is a heavy set.
Now we branch into two subcases for defining the cascades.

\emph{Case 2.2.1: $z\in B_{j*}$ (i.e, $z$ was not deleted from $B_{j^*}$ in the process above).} For each $j$ such that $B_{j}$ is a heavy set, the cascade of $B_{j}$ is taken as the cascade of $A_{j}$.
Since a new $\ell$ level vertex $u$ is added and all vertices that had level at most $\ell$ retain their level, we have that
$\calC_B>_\ell \calC_A$.
It is also easy to see that $\calC_B$ remains $\ell$-good.

\emph{Case 2.2.2: $z\notin B_{j^*}$ (i.e, $z$ was deleted from $B_{j^*}$).} 
For each $j$ such that $B_{j}$ is a heavy set, the cascade of $B_{j}$ is taken as the cascade of $A_{j}$ but with the rank $\ell$ cascade vertex (if it has any) deleted from it.
$\calC_B\ge_{\ell-1} \calC_A$ because all vertices that were at a level of $\ell'=\ell - 1$ or smaller, retain their levels. 
Observe that there are no bridges in $\calC_B$ to vertices that are at a level at most $\ell-2$, all vertices at a level at most $\ell-2$ still maintain the maximality property, and we did not introduce any cascade vertices.
Hence, $\calC_B$ is $(\ell-1)$-good.
It only remains to prove that there is a bridge $u'v'$ in $\calC_B$ such that $level(v')\le\ell-1$.
We know $z\in S_B$.
Since $z$ was a rank $\ell$ cascade vertex in $\calC_A$, $z$ had an edge to $z'$ such that $z'$ had level $\ell-1$ in $\calC_A$.
Observe that level of $z'$ is at most $\ell-1$ in $\calC_B$ as well.
Hence, taking $u'v'=zz'$ completes the proof.
\end{proof}

\begin{pfof}{Theorem~\ref{thm:weighted-Gyori-Lovasz}(a)}
We always maintain a configuration $\calC_A=(A,D_A)$ that is $\ell$-good for some $\ell\ge 0$.
If the FPP $A$ is not an SFPP at any point, then we are done.
So assume $A$ is an SFPP.

We start with the $0$-good configuration where $A_j=\left\{ t_j \right\}$ and the cascades of all heavy sets are null cascades.
If our current configuration $\calC_A$ is an $\ell$-good configuration that has no bridge,
then we use Lemma~\ref{lem:good-inc} to get a configuration $\calC_B$ such that $\calC_B>\calC_A$ and $B$ is $(\ell+1)$-good.
We take $\calC_B$ as the new current configuration $\calC_A$.
If our current configuration $\calC_A$ is an $\ell$-good configuration with a bridge, then 
we get an $\ell'$-good configuration $\calC_B$ for some $\ell'\ge 0$ such that $\calC_{B}>\calC_A$
by repeatedly applying Lemma~\ref{lem:bridging} at most $\ell$ times.
So in either case, we get a strictly better configuration that is $\ell'$-good for some $\ell'\ge 0$ in polynomial time.
We call this an iteration of our algorithm.

Notice that the number of iterations possible is at most the number of distinct configuration vectors possible.
It is easy to see that the number of distinct configuration vectors with highest rank at most $r$ is at most $ {n+r-1 \choose n}$.
Since rank of any point is at most $n$, the number of iterations of our algorithm is at most $(k+1)\cdot {2n\choose n}$, which is at most $n\cdot 4^n$.
Since each iteration runs in polynomial time as guaranteed by the two lemmas, the required running time is $\Os(4^n)$.

When the algorithm terminates, the FPP given by the current configuration is not an SFPP and this gives the required partition. 
\end{pfof}

\begin{pfof}{Theorem~\ref{thm:weighted-Gyori-Lovasz}(b)}
Since any $k$-connected graph is also $\khalf-$vertex connected, the algorithm will give the required partition
due to Theorem~\ref{thm:weighted-Gyori-Lovasz}(a).
We only need to prove the better running time claimed by Theorem~\ref{thm:weighted-Gyori-Lovasz}(b).
For this, we show that the highest rank attained by any vertex during the algorithm is at most $2n/(k-2)$.
Since the number of distinct configuration vectors with highest rank
$r$ is at most $ {n+r-1 \choose n}$, we then have that the running time is $\Os{n+\frac{2n}{k-2}-1 \choose n}$,
which is $\Os(2^{\O((n/k)\log k)})$, as claimed.
Hence, it only remains to prove that the highest rank is at most $2n/(k-2)$.

For this, observe that in an $\ell$-good configuration, for each $0\le i< \ell$,
the union of all vertices having level $i$ and the set of $\khalf$ terminals together forms a cutset.
Since the graph is $k$-connected, this means that for each $0\le i<\ell$,
the number of vertices having level $i$ is at least $k/2-1$. The required bound on the rank easily follows.
\end{pfof}
\makeatletter{}%
\section{Upper Bounds for Spanning Tree Congestion}\label{sect:STC-upper}
We first state the following easy lemma, which together with Proposition~\ref{pr:min-max}, implies Lemma~\ref{lm:high-connected-STC-short}.
\begin{lemma}~\label{lm:high-connected-STC-full}
In a graph $G=(V,E)$, let $t_1$ be a vertex, and let $t_2,\cdots,t_\ell$ be any $(\ell-1)$ neighbours of $t_1$.
Suppose that there exists a $\ell$-connected-partition $\cup_{j=1}^\ell V_\ell$ such that for all $j\in \ell$, $t_j\in V_j$,
and the sum of degree of vertices in each $V_j$ is at most $D$.
Let $\tau_j$ be an arbitrary spanning tree of $G[V_j]$. Let $e_j$ denote the edge $\{t_1,t_j\}$.
Let $\tau$ be the spanning tree of $G$ defined as $\tau:=
\left(\cup_{j=1}^\ell~\tau_j\right)~\bigcup~ \left( \cup_{j=2}^\ell~e_j \right)$.
Then $\tau$ has congestion at most $D$.
\end{lemma}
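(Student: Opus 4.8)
The plan is to invoke the cut-based characterization of edge-congestion recorded just above the lemma: for each tree-edge $e\in\tau$, deleting $e$ splits $\tau$ into two components, and if $U_e$ denotes one of them then $\ec(e,\tau)=|E(U_e,V\setminus U_e)|$. The whole argument reduces to showing that for \emph{every} tree-edge $e$ of $\tau$ we can choose the side $U_e$ so that $U_e\subseteq V_j$ for a single part $V_j$; once this is done, $\ec(e,\tau)=|E(U_e,V\setminus U_e)|\le\sum_{v\in U_e}\deg(v)\le\sum_{v\in V_j}\deg(v)\le D$, and taking the maximum over tree-edges gives $\conge(\tau)\le D$.

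First I would check that $\tau$ is genuinely a spanning tree of $G$. Each $\tau_j$ is a spanning tree of $G[V_j]$, the parts $V_1,\dots,V_\ell$ partition $V$, and the edges $e_2,\dots,e_\ell$ exist in $G$ because $t_2,\dots,t_\ell$ are neighbours of $t_1$; they attach $\tau_2,\dots,\tau_\ell$ to $\tau_1$ through $t_1\in V_1$ and $t_j\in V_j$. Hence $\tau$ is connected with $\sum_{j}(|V_j|-1)+(\ell-1)=n-1$ edges, so it is a spanning tree.

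Next I would record the key structural observation: the edges of $\tau$ with exactly one endpoint inside $V_j$ are precisely $e_j$ when $j\ge 2$, and $e_2,\dots,e_\ell$ (all incident to $t_1$) when $j=1$; moreover $\tau$ restricted to $V_j$ equals the connected tree $\tau_j$. From this, two cases cover all tree-edges. If $e=e_j$ for some $j\ge 2$, then deleting $e$ isolates exactly $V_j$ as one component, so put $U_e=V_j$. If $e$ lies inside some $\tau_j$, then $\tau_j\setminus e$ has two components with vertex sets $P\sqcup Q=V_j$; because $V_j$ is joined to the rest of $\tau$ only through its ``gateway'' vertex ($t_j$ if $j\ge 2$, $t_1$ if $j=1$), whichever of $P,Q$ does not contain that gateway is itself a connected component of $\tau\setminus e$, and I take $U_e$ to be that side, so $U_e\subseteq V_j$.

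Combining the two cases, every tree-edge $e$ admits a side $U_e$ contained in some $V_j$, and the degree bound above yields $\ec(e,\tau)\le D$ in all cases, hence $\conge(\tau)\le D$. The only step needing a little care is the structural observation that each part $V_j$ hangs off the remainder of $\tau$ at a single attachment vertex, which forces the ``correct'' side of each cut to lie within one part; after that the argument is pure bookkeeping, and I do not anticipate any genuine obstacle.
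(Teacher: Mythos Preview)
Your proof is correct and is exactly the natural argument via the cut-based definition of edge-congestion; the paper in fact states the lemma as ``easy'' and omits the proof, so there is no alternative approach to compare against. Your structural observation that each $V_j$ is attached to the rest of $\tau$ through a single gateway vertex is precisely the point that makes the lemma routine.
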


\begin{theorem}\label{thm:congestion-faster}
For any connected graph $G=(V,E)$, there is an algorithm which computes a spanning tree with congestion at most $8\sqrt{mn}$ in 
$\Os\left(2^{\O\left(n\log n / \sqrt{m/n}\right)}\right)$ time.
\end{theorem}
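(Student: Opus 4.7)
I plan a recursive algorithm with the Gy\H{o}ri--Lov\'asz partitioning as the base case, triggered by a connectivity threshold $k := \lceil \sqrt{m/n}\rceil$.

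\emph{High-connectivity base case.} Suppose $G$ is $k$-connected. I apply Theorem~\ref{thm:weighted-Gyori-Lovasz}(b) with weights $w(v) := \deg_G(v)$, so $w(V) = 2m$; since $k$-connectivity forces minimum degree at least $k$ and hence $n \le 2m/k$, we have $\wmax \le 2m/k$. Pick $t_1$ arbitrarily and let $t_2, \ldots, t_{\lfloor k/2\rfloor + 1}$ be $\lfloor k/2\rfloor$ distinct neighbours of $t_1$ (guaranteed since $\deg(t_1) \ge k$); set nearly-equal targets $T_j \approx 2m/(\lfloor k/2\rfloor + 1)$ summing to $2m$ with $T_j \ge w(t_j)$. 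The theorem then returns, in time $\Os(2^{O((n/k)\log k)}) = \Os(2^{O(n\log n / \sqrt{m/n})})$, a $(\lfloor k/2\rfloor + 1)$-connected-partition $V_1, \ldots, V_{\lfloor k/2\rfloor + 1}$ satisfying $w(V_j) \le T_j + \wmax - 1 \le 2m/(\lfloor k/2\rfloor + 1) + 2m/k \le 6m/k \le 6\sqrt{mn}$. Applying Lemma~\ref{lm:high-connected-STC-full} to this partition, with $t_1$ as hub, assembles a spanning tree of $G$ of congestion at most $6\sqrt{mn} < 8\sqrt{mn}$.

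\emph{Low-connectivity recursion.} If $G$ fails to be $k$-connected, I compute a minimum vertex cut $S$ (with $|S|<k$) in polynomial time and locate a smallest component $C$ of $G\setminus S$, so that $|C|\le (n-|S|)/2$. I then recursively build spanning trees $T_C$ of $G[C\cup S]$ and $T'$ of $G\setminus C$, and splice them along the shared vertices $S$ into a single spanning tree $T$ of $G$.

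\emph{Main obstacle.} The principal difficulty is the splicing step and its congestion analysis. The naive union $T_C\cup T'$ has cycles through $S$, so I would designate one sub-tree (say $T'$) as the authoritative spanning structure on $S$ and attach only the $C$-incident portion of $T_C$. After verifying the result is a spanning tree of $G$, each tree-edge's congestion must be shown to stay below $8\sqrt{mn}$: every original edge lies in exactly one of $G[C\cup S]$ or $G\setminus C$ (no edge crosses between $C$ and $V\setminus(C\cup S)$), so a tree-edge on either side initially routes only the edges from its own subproblem---matching the inductive guarantee---plus at most an $\O(|S|)=\O(\sqrt{m/n})$ overhead from detours re-routed near $S$, which is absorbable into the $8\sqrt{mn}$ bound. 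For the running time, each recursive call strictly decreases $n$, and the dominant cost is the eventual Gy\H{o}ri--Lov\'asz invocation at the base; a careful telescoping argument then collects the overall work inside $\Os(2^{O(n\log n / \sqrt{m/n})})$.
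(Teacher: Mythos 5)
Your high-connectivity base case matches the paper's, and the arithmetic there ($w(V_j)\le T_j+\wmax-1 \le 4m/k + 2m/k = 6m/k \le 6\sqrt{mn}$) is fine. The problem is the low-connectivity recursion, which departs from the paper in a way that does not work.

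You recurse on $G[C\cup S]$ and $G\setminus C$, which overlap on the cut set $S$. This creates the splicing problem you flag, and your proposed fix---keep $T'$ authoritative on $S$ and graft on the ``$C$-incident portion'' of $T_C$---is not sound: the edges of $T_C$ incident to $C$ form a forest whose components can each contain \emph{several} vertices of $S$ (e.g.\ $T_C$ a path $s_1$--$c$--$s_2$ with $s_1,s_2\in S$); attaching such a component to $T'$ closes a cycle. The paper avoids the issue entirely by recursing on \emph{disjoint} vertex sets: it lets $X$ be the smallest component of $H\setminus Y$ and $Z$ the rest, recurses on $H[X]$ and on $H[Y\cup Z]$ (which is connected precisely because $Y$ is a \emph{global minimum} vertex cut), and joins the two trees by a single bridging $X$--$Y$ edge. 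The disjointness also makes the runtime telescoping trivial---the subproblem sizes sum to $n_H$, not $n_H+|S|$---whereas your overlapping recursion could in principle have an exponentially large recursion tree.

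The congestion accounting in the low-connectivity step is also wrong. The overhead from merging is not $\O(|S|)$ per tree edge; it is up to $|E(X,Y)|\le |X|\cdot |Y|$, because a tree edge inside $X$ may pick up congestion from \emph{every} original edge leaving $X$ (all of which go to $Y$). The paper's inductive bound is
\[
8\sqrt{\bm(\nH-x)} \;+\; \sqrt{\bm/\nH}\cdot x \qquad (x=|X|,\ |Y|<k\le\sqrt{\bm/\nH}+1),
\]
and one checks by differentiating in $x$ that this is maximized at $x=1$ and stays below $8\sqrt{\bm\nH}$. There is no way to ``absorb'' a term of that size as a constant overhead. Finally, you are missing the second base case: when $\mH\le 8\sqrt{\bm\nH}$ the algorithm simply returns an arbitrary spanning tree (whose congestion is trivially at most $\mH$); without this the induction has no floor when the graph is sparse.
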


\begin{theorem}\label{thm:congestion-polytime}
For any connected graph $G=(V,E)$, there is a polynomial time algorithm which computes a spanning tree with congestion at most $16\sqrt{mn} \cdot \log n$.
\end{theorem}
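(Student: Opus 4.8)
The plan is to mimic the recursive strategy behind Theorem~\ref{thm:congestion-faster}, but to replace every appeal to the exponential-time generalized Gy\H{o}ri--Lov\'asz algorithm (Theorem~\ref{thm:weighted-Gyori-Lovasz}) by a polynomial-time surrogate, at the cost of an extra $\log n$ factor in the congestion bound. The recursion is driven by vertex connectivity: if the current graph $G$ (on $n$ vertices, $m$ edges) is $k$-connected for a suitably large $k$, we partition it directly; if it is not $k$-connected, it has a small separator, so we cut along that separator, recurse on the pieces, and stitch the spanning trees together through the separator vertices. The threshold will be chosen around $k \approx \sqrt{m/n}$ (equivalently $k \approx \sqrt{d_{avg}}$), which is exactly the point where the ``highly connected'' bound $4m/k = \O(\sqrt{mn})$ from Lemma~\ref{lm:high-connected-STC-short} balances against the cost of peeling off $O(k)$-size separators.

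The key steps, in order. \textbf{(1)} If $G$ is $\khalf$-connected for $k = \Theta(\sqrt{m/n})$, we need a $k$-connected-partition with small total degree per part, computed in polynomial time. Here the generalized Gy\H{o}ri--Lov\'asz partition is not available in poly-time, so instead I invoke the \emph{single-commodity confluent flow} machinery of Chen et al.~\cite{CKLRSV2007}: one sets up a splittable (fractional) flow routing each vertex's demand (its degree) to the $k$ chosen terminals with per-terminal load $\O(m/k)$, which is solvable by LP/max-flow in polynomial time; Chen et al.'s rounding then converts this to an integral confluent flow, whose flow-paths to terminal $t_j$ induce the connected part $V_j$, with congestion blown up by an $\O(\log n)$ factor — giving total degree $\O((m/k)\log n)$ per part. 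Feeding this into Lemma~\ref{lm:high-connected-STC-full} yields a spanning tree of $G$ with congestion $\O((m/k)\log n) = \O(\sqrt{mn}\cdot\log n)$. \textbf{(2)} Otherwise $G$ has a vertex cut $C$ with $|C| < \khalf = \O(\sqrt{m/n})$; find it in polynomial time by standard max-flow-based connectivity testing. Removing $C$ splits $G$ into components; group them (together with $C$) into pieces $G_1,\dots,G_r$ each of which, after adding $C$, is still connected, and recurse on each $G_i$ to get spanning trees $\tau_i$. \textbf{(3)} Combine: keep a BFS/spanning-tree structure on $C$ inside one designated piece and attach the $\tau_i$'s through the (few) separator vertices, carefully accounting for how many original edges each combined tree-edge now routes. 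The congestion added at each level of the recursion by edges incident to $C$ is $\O(|C|\cdot n) = \O(\sqrt{mn})$ — this is where $|C| = \O(\sqrt{m/n})$ and the trivial per-vertex degree bound $\le n$ are used — and one argues the recursion depth contributes only a constant factor (or is absorbed into the $\O(\cdot)$), e.g.\ because each separator removal strictly decreases a suitable potential, or because the recursion tree can be charged so that each original edge is counted $O(1)$ times across all the ``stitching'' steps.

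The main obstacle I expect is Step~(1): correctly setting up the splittable-flow instance so that (a) the fractional solution has per-terminal load $\O(m/k)$ — which requires the graph to genuinely be $\Omega(k)$-connected so that Menger/flow arguments give the needed routing capacity — and (b) verifying that the confluent-flow rounding of Chen et al.\ applies in this single-commodity, integral-demand setting and that the resulting flow-path decomposition really does carve $V$ into \emph{connected} parts $V_j$ each containing $t_j$, with the claimed $\O(\log n)$ congestion loss. The bookkeeping in Step~(3) — making sure the stitching through separators does not accumulate more than an $\O(\sqrt{mn})$ additive term per level and that the depth does not contribute an extra logarithmic factor (so the final bound is $\O(\sqrt{mn}\log n)$ and not worse) — is the second delicate point, but it is essentially the same accounting already needed for Theorem~\ref{thm:congestion-faster}, so I would lift it from there with only the connectivity threshold re-tuned.
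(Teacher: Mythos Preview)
Your proposal is correct and follows essentially the same approach as the paper: the same connectivity-threshold recursion (Algorithm~\ref{alg:congestion}), with Chen et al.'s confluent-flow rounding (packaged as Corollary~\ref{co:flow}, whose splittable-flow input exists by the non-constructive Proposition~\ref{pr:min-max}) replacing the exponential-time Gy\H{o}ri--Lov\'asz step in the high-connectivity base case. One caution on Step~(3): the paper's stitching charge is not a per-level $\O(|C|\cdot n)$ term but $|X|\cdot|Y|$ with $X$ the \emph{smallest} component after removing the min-cut $Y$, and this is absorbed into the induction by showing $16\sqrt{\hat m(n_H-x)}\log(n_H-x)+(\sqrt{\hat m/n_H}+1)x$ is maximized at $x=1$ --- exactly the Theorem~\ref{thm:congestion-faster} bookkeeping you correctly defer to.
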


\newcommand{\bm}{\hat{m}}
\newcommand{\bn}{\hat{n}}
\newcommand{\nH}{n_{_H}}
\newcommand{\mH}{m_{_H}}

The two algorithms follow the same framework, depicted in Algorithm~\ref{alg:congestion}.
It is a recursive algorithm; the parameter $\bm$ is a global parameter,
which is the number of edges in the input graph $G$ in the first level of the recursion;
let $\bn$ denote the number of vertices in this graph.

The only difference between the two algorithms is in Line~\ref{ln:partition} on how this step is executed,
with trade-off between the running time of the step $T(\bm,\nH,\mH)$, and the guarantee $D(\bm,\nH,\mH)$.
For proving Theorem~\ref{thm:congestion-faster}, we use Theorem~\ref{thm:weighted-Gyori-Lovasz}(b),
Proposition~\ref{pr:min-max} and Lemma~\ref{lm:high-connected-STC-full},
yielding $D(\bm,\nH,\mH) \le 8\mH\sqrt{\nH/\bm}$ and $T(\bm,\nH,\mH) = \Os\left(2^{\O\left(\nH\log \nH / \sqrt{\bm/\nH}\right)}\right)$.
For proving Theorem~\ref{thm:congestion-polytime}, we make use of an algorithm in Chen et al.~\cite{CKLRSV2007},
which yields $D(\bm,\nH,\mH) \le 16\mH\sqrt{\nH/\bm}\cdot \log \nH$ and $T(\bm,\nH,\mH) = \poly(\nH,\mH)$.

\begin{algorithm}[H]
	\caption{\flcs($H,\bm$)}\label{alg:congestion}
	\SetKwInOut{Input}{Input}\SetKwInOut{Output}{Output}
	\Input{A connected graph $H=(V_H,E_H)$ on $\nH$ vertices and $\mH$ edges}
	\Output{A spanning tree $\tau$ of $H$}%
	\BlankLine
	\If {$\mH \le 8\sqrt{\bm \nH}$}{
		\Return an arbitrary spanning tree of $H$
	}
	$k\leftarrow \ceil{\sqrt{\bm/\nH}}$\\
	$Y\leftarrow$ a global minimum vertex cut of $H$\label{ln:min-vertex-cut}\\
	\eIf {$|Y|<k$}{
		$X\leftarrow$ the smallest connected component in $H[V_H\setminus Y]$
		(See Figure~\ref{fig:alg-low-conn})\\
		$Z\leftarrow~V_H\setminus (X\cup Y)$\\
		$\tau_1 \leftarrow$ \flcs( $H[X],\bm$ )\\
		$\tau_2 \leftarrow$ \flcs( $H[Y\cup Z],\bm$);     \textsf{    ($H[Y\cup Z]$ is connected as $Y$ is a global min cut}) \label{ln:connected}\\
		\Return $\tau_1 ~\cup~ \tau_2 ~\cup~$(an arbitrary edge between $X$ and $Y$)
	}
	{
		$t_1\leftarrow$ an arbitrary vertex in $V_H$\label{ln:begin}\\
		Pick $\floor{k/2}$ neighbours of $t_1$ in the graph $H$; denote them by $t_2,t_3,\cdots,t_{\floor{k/2}+1}$.
		Let $e_j$ denote edge $t_1t_j$ for $2\le j\le \floor{k/2}+1$.
		(See Figure~\ref{fig:alg-high-conn})\\
		Compute a $(\floor{k/2}+1)$-connected-partition of $H$, denoted by $\cup_{j=1}^{\floor{k/2}+1} V_j$, such that
		for each $j\in [\floor{k/2}+1]$, $t_j\in V_j$, and the total degree (w.r.t.~graph $H$) of vertices in each $V_j$ is at most $D(\bm,\nH,\mH)$.
		Let the time needed be $T(\bm,\nH,\mH)$.\label{ln:partition}\\
		For each $j\in [\floor{k/2}+1]$, $\tau_j\leftarrow $ an arbitrary spanning tree of $G[V_j]$\\
		\Return $\left(\cup_{j=1}^{\floor{k/2}+1}~\tau_j\right)~\bigcup~ \left( \cup_{j=2}^{\floor{k/2}+1}~e_j \right)$\label{ln:end}
	}
\end{algorithm}

\begin{figure}[h]
\centering
\includegraphics[scale=0.3]{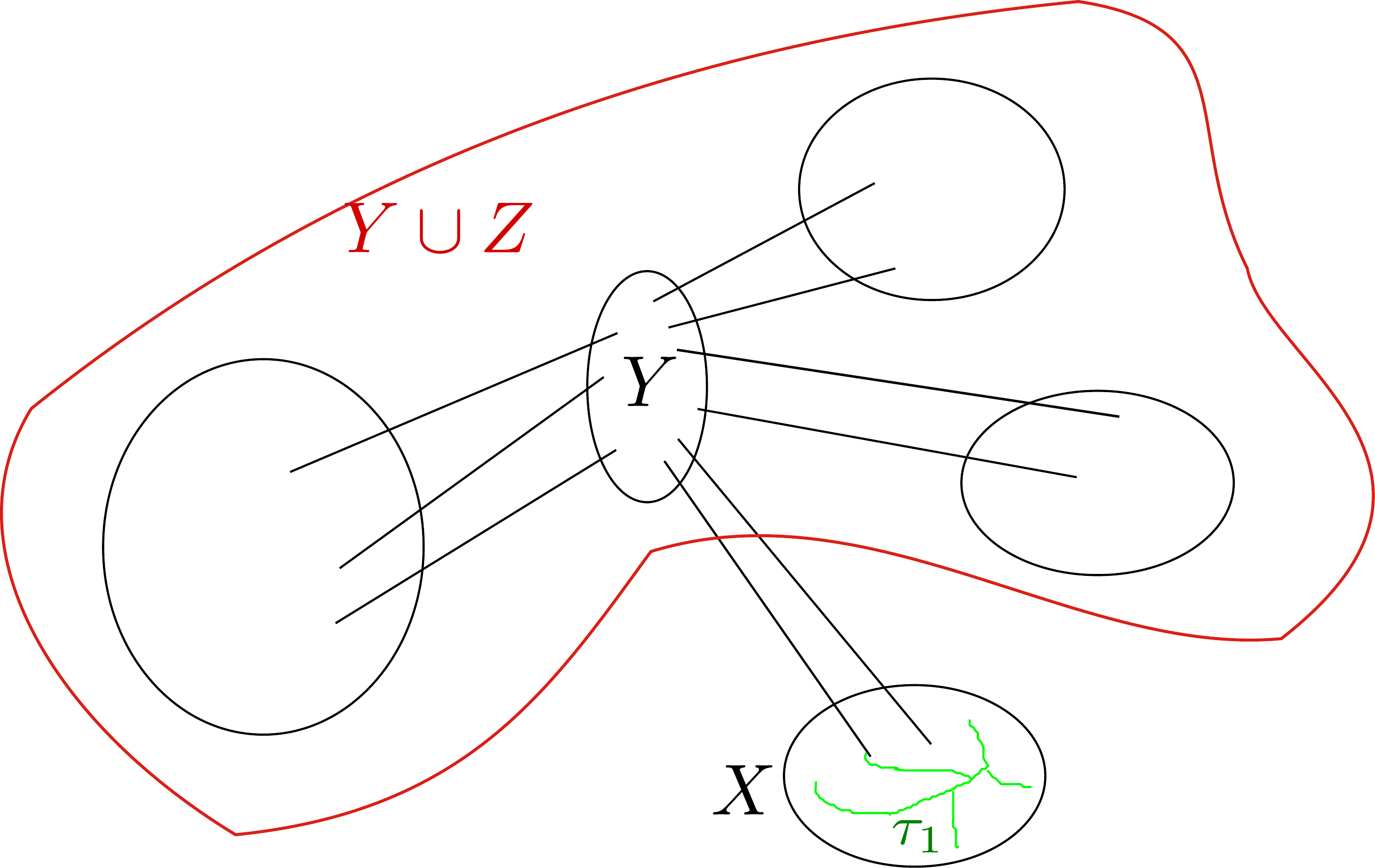}
\caption{The scenario in Algorithm~\ref{alg:congestion} when the graph has low connectivity.
The vertex set $Y$ is a global minimum vertex cut of the graph.
The vertex set $X$ is the smallest connected component after the removal of $Y$, and $Z$ is the union of all the other connected components.}\label{fig:UB}
\label{fig:alg-low-conn}
\end{figure}
\begin{figure}[h]
\centering
\includegraphics[scale=0.33]{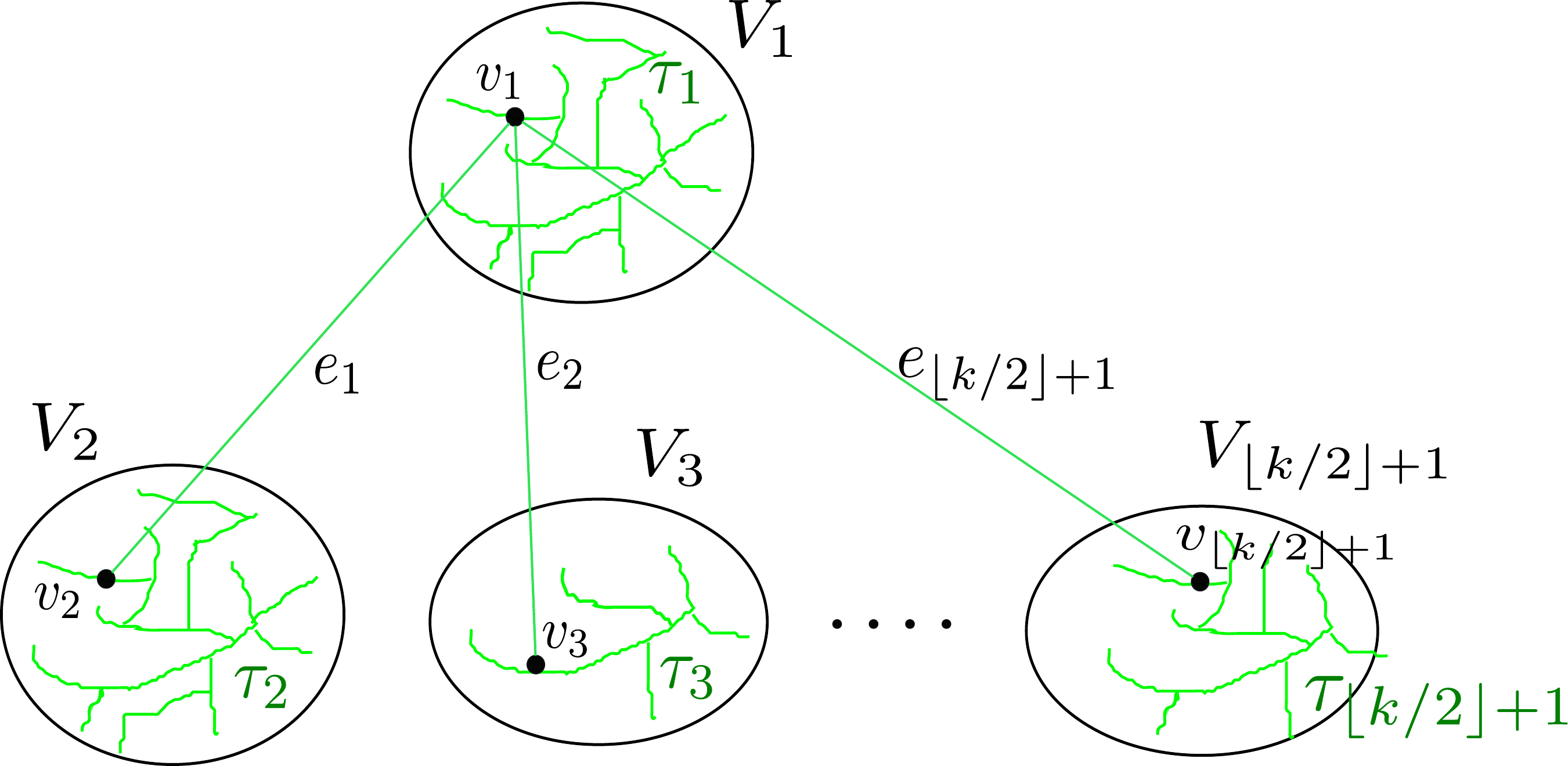}
\caption{The scenario in Algorithm~\ref{alg:congestion} when the graph has high connectivity.}
\label{fig:alg-high-conn}
\end{figure}

In the rest of this section, we first discuss the algorithm in Chen et al.,
then we prove Theorem~\ref{thm:congestion-polytime}.
The proof of Theorem~\ref{thm:congestion-faster} is almost identical, and is deferred to Appendix~\ref{app:sub-exp}.

\parabold{Single-Commodity Confluent Flow and The Algorithm of Chen et al.}
In a \emph{single-commodity confluent flow} problem, the input includes a graph $G=(V,E)$,
a \emph{demand} function $w:V\ra \rrplus$ and $\ell$ sinks $t_1,\cdots,t_\ell \in V$.
For each $v\in V$, a flow of amount $w(v)$ is routed from $v$ to one of the sinks.
But there is a restriction: at every vertex $u\in V$, the outgoing flow must leave $u$ on at most $1$ edge,
i.e., the outgoing flow from $u$ is unsplittable.
The problem is to seek a flow satisfying the demands which minimizes the \emph{node congestion}, i.e., the maximum incoming flow among all vertices.
Since the incoming flow is maximum at one of the sinks, it is equivalent to minimize the maximum flow received among all sinks.
(Here, we assume that no flow entering a sink will leave.)

\emph{Single-commodity splittable flow} problem is almost identical to single-commodity confluent flow problem,
except that the above restriction is dropped, i.e., now the outgoing flow at $u$ can split along multiple edges.
Note that here, the maximum incoming flow might not be at a sink.
It is known that single-commodity splittable flow can be solved in polynomial time.
For brevity, we drop the phrase ``single-commodity'' from now on.

\begin{theorem}[{\cite[Section 4]{CKLRSV2007}}]\label{thm:flow}
Suppose that given graph $G$, demand $w$ and $\ell$ sinks, there is a splittable flow with node congestion $q$.
Then there exists a polynomial time algorithm which computes a confluent flow with node congestion at most $(1+\ln \ell)q$ for the same input.
\end{theorem}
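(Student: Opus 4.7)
The plan is to start from the given splittable flow $f^*$ of node congestion $q$ and iteratively make it confluent by repeatedly resolving the splits at internal vertices, while charging the extra congestion introduced in a way that telescopes into a harmonic sum.

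First I would put $f^*$ into a convenient canonical form: cancel all flow cycles so that $\support(f^*)$ is a DAG oriented toward the sinks, and observe that the incoming flow at every vertex is at most $q$. Label the sinks $t_1,\ldots,t_\ell$ in decreasing order of the total splittable flow they receive, $q_1 \ge q_2 \ge \cdots \ge q_\ell$, with $q_1 \le q$. The algorithm then processes the vertices in reverse topological order relative to this DAG: at each vertex $u$ whose outgoing flow is still split across several edges, collapse all of $u$'s outflow onto a single outgoing edge $uv^\star$. The natural and provably good choice for $v^\star$ is the neighbour whose \emph{current} confluent routing leads to the sink $t_i$ of smallest index $i$ (equivalently: the sink whose absorbed flow is currently largest); intuitively this reuses already-congested downstream paths rather than opening new ones.

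For the analysis I would set up a charging argument parametrised by the sink ranking. When a merge operation moves outflow amount $\delta \le q$ away from some edge and onto the path leading to $t_i$, the extra congestion can be charged to $t_i$. The crucial invariant, proved by induction along the reverse topological order, is that the total amount charged to $t_i$ over the entire execution is at most $q/i$: this follows because whenever $t_i$ is the chosen recipient of a merge, the merge is the rank-$i$ sink only if $i-1$ higher-ranked sinks have already exhausted the room granted by being "cheaper" recipients, which is where the factor $1/i$ emerges. Summing over $i$, the extra congestion introduced on top of $q$ is at most $q \sum_{i=1}^\ell 1/i = qH_\ell \le q\ln\ell$, giving final node congestion at most $(1+\ln \ell)q$. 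Polynomial running time is then immediate: there are at most $|V|$ vertices to process, and each merge plus the downstream update can be done by one flow-rerouting computation.

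The hard part, and the reason this theorem is more than a routine greedy, is controlling the \emph{downstream cascade}: collapsing at $u$ changes how flow is distributed at vertices between $u$ and the sinks, which can in principle force later merges to be much worse. The reverse-topological processing combined with the fixed sink ranking is what prevents this cascade from compounding — by the time $u$ is processed, everything below it is already confluent, so the collapse at $u$ only redistributes flow along already-committed confluent paths. I would therefore spend the bulk of the proof carefully stating the invariants maintained after each merge (the subflow below the currently processed level is confluent; the subflow above is a valid splittable flow of node congestion $\le q$) and verifying that the sink-ranking charging scheme is preserved across a single merge; once that is in place, the harmonic bound falls out directly.
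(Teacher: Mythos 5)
The paper does not actually prove Theorem~\ref{thm:flow}; it is imported directly from Chen et al.~\cite{CKLRSV2007}, so there is no in-paper argument to compare against. But your proposal fails on its own terms. The central rule --- collapse each split node's outflow onto the edge routing to the currently \emph{heaviest} sink --- is wrong for a node-congestion objective, and the intuition that it ``reuses already-congested paths'' is backwards. Concretely, take $\ell$ sinks $t_1,\dots,t_\ell$, each already receiving $q$ units from dedicated sources, and $\ell$ splitter nodes $u_1,\dots,u_\ell$, each with inflow $q$, each adjacent to every sink and splitting $q/\ell$ to each. The splittable congestion is $2q$. Processing the $u_i$ in reverse topological order with your rule routes every $u_i$'s entire $q$ into the same heaviest sink, whose load grows to $(\ell+1)q$: a factor $\Theta(\ell)$ loss, not $1+\ln\ell$. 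Switching to ``lightest accessible sink'' patches this example but is a different algorithm, and one for which the harmonic charging is still not established.

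The charging invariant is a second, independent gap. You assert that each sink $t_i$ is charged at most $q/i$ in total, but the justification offered --- that $t_i$ can only be chosen after the $i-1$ heavier sinks ``have exhausted the room granted by being cheaper recipients'' --- is a hope, not a derivation; nothing in the construction forces this. And even granting it, summing per-sink charges to $q\sum_{i=1}^\ell 1/i$ does not bound node congestion, which is a \emph{maximum}: the invariant would directly give $2q$ on each sink, while the sum is relevant only if all charges could land on one vertex, which the invariant itself forbids. The actual proof in Chen et al.\ takes a structurally different route: it commits to no merge order, and instead alternates node aggregation (contract any node whose outflow already goes confluently to one sink) with sawtooth-cycle elimination (shift flow around an alternating cycle of frontier edges until an edge empties or a node becomes aggregatable), showing that a potential function on the sink-load vector, with initial value at most $q(1+\ln\ell)$, is non-increasing under both operations and dominates the final maximum sink load. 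The harmonic term falls out of that potential, not out of a per-merge charging scheme. To repair your proposal you would need an explicit monovariant on the sink-load vector that is preserved under the downstream rerouting a merge causes --- which in practice is precisely their sawtooth argument.
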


\begin{corollary}\label{co:flow}
Let $G$ be a $k$-connected graph with $m$ edges. Then for any $\ell\le k$ and for any $\ell$ vertices $t_1,\cdots,t_\ell\in V$,
there exists a polynomial time algorithm which computes an $\ell$-connected-partition $\cup_{j=1}^\ell V_\ell$
such that for all $j\in \ell$, $t_j\in V_j$, and the total degrees of vertices in each $V_j$ is at most $4(1+\ln \ell)m/\ell$.
\end{corollary}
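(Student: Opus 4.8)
The plan is to derive this from Theorem~\ref{thm:flow} by following the flow-based route to Gy\H{o}ri--Lov\'{a}sz used by Chen et al.: I set up a confluent flow instance on $G$ whose in-arborescences are exactly the desired parts. Concretely, give every non-terminal vertex $v$ the demand $\deg(v)$ (terminals carry no demand), and take $t_1,\dots,t_\ell$ as the sinks. In any confluent flow for this instance each non-terminal routes its whole demand out along a single edge, so the ``flow edges'' form a forest of in-arborescences rooted at the sinks; setting $V_j := \{t_j\}\cup\{v : v\text{ routes to }t_j\}$ then yields a partition of $V$ into parts that are connected in $G$ and satisfy $t_j\in V_j$ (distinct terminals sit in distinct arborescences since a sink routes nowhere). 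Moreover the flow entering $t_j$ equals $\sum_{v\in V_j\setminus\{t_j\}}\deg(v)$, so $\sum_{v\in V_j}\deg(v) = \deg(t_j) + (\text{flow into }t_j) \le \deg(t_j) + (\text{node congestion of the flow})$. Hence it suffices to produce a confluent flow of node congestion at most about $(1+\ln\ell)\cdot 2m/\ell$.

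To invoke Theorem~\ref{thm:flow} I first need a \emph{splittable} flow of node congestion at most $2m/\ell$, which I would build explicitly. For each non-terminal $v$, the fan version of Menger's theorem --- applicable because $G$ is $k$-connected and the sink set has size $\ell\le k$ --- gives $\ell$ paths from $v$ to the terminals that share only the vertex $v$ and meet $\{t_1,\dots,t_\ell\}$ only in their (distinct) endpoints; route $\deg(v)/\ell$ units of $v$'s demand along each of these $\ell$ paths. This is a valid splittable flow, and since the $\ell$ paths emanating from $v$ are internally disjoint, any fixed vertex $u$ lies on at most one of them, so $v$ contributes at most $\deg(v)/\ell$ to the flow through $u$; summing over all non-terminals gives node congestion at most $\sum_v \deg(v)/\ell \le 2m/\ell$ (and the same bound at each sink). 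Theorem~\ref{thm:flow} then converts this, in polynomial time, into a confluent flow of node congestion at most $(1+\ln\ell)\cdot 2m/\ell$.

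Combining the two observations, $\sum_{v\in V_j}\deg(v) \le \deg(t_j) + (1+\ln\ell)\cdot\frac{2m}{\ell}$ for every $j$. Finally, $k$-connectivity forces minimum degree at least $k$, hence $n\le 2m/k\le 2m/\ell$ and in particular $\deg(t_j)\le n \le 2m/\ell$; substituting, $\sum_{v\in V_j}\deg(v)\le (2+\ln\ell)\cdot\frac{2m}{\ell}\le 4(1+\ln\ell)\cdot\frac{m}{\ell}$, the last step being equivalent to $\ln\ell\ge 0$. All ingredients --- the fan paths (max-flow computations), the explicit splittable flow, Theorem~\ref{thm:flow}, and reading off the arborescences --- run in polynomial time, which finishes the argument.

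I expect the only subtle point to be the splittable-flow step: one must state the fan lemma in the form valid for a sink set of size $\ell\le k$ (rather than exactly $k$) --- which can be arranged by padding the sink set to size $k$ and discarding the surplus paths --- and make sure no routing path of a non-terminal passes through another terminal, so that flow is not absorbed prematurely and the partition/congestion correspondence is exact. Everything downstream of that is routine bookkeeping.
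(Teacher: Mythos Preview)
Your argument is correct and takes a genuinely different route from the paper. The paper obtains the low-congestion splittable flow by invoking Proposition~\ref{pr:min-max} (hence, ultimately, the non-constructive generalized Gy\H{o}ri--Lov\'{a}sz theorem): since a connected $\ell$-partition with degree-sum at most $4m/\ell$ per part \emph{exists}, routing every vertex's demand to its own terminal inside that part gives a splittable flow of node congestion at most $4m/\ell$, and then Theorem~\ref{thm:flow} is applied. You bypass Gy\H{o}ri--Lov\'{a}sz entirely and instead use the fan version of Menger's theorem to build an explicit splittable flow of congestion $\approx 2m/\ell$. Your route is more elementary and self-contained; the paper's route has the conceptual virtue of exhibiting Corollary~\ref{co:flow} as the ``$(1+\ln\ell)$-loss algorithmic shadow'' of Proposition~\ref{pr:min-max}, and it does not need to re-derive the $\deg(t_j)\le n\le 2m/\ell$ bound at the end because the terminal's own degree is already folded into the $4m/\ell$.

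One small caution: depending on whether ``node congestion'' counts a vertex's own demand, your bound at a non-terminal $u$ is either $2m/\ell$ or $\deg(u)+2m/\ell$; in the latter case, using $\deg(u)\le n\le 2m/\ell$ gives $4m/\ell$, matching the paper's splittable bound, and the final inequality still holds (with a slightly worse constant that is nonetheless $O((1+\ln\ell)m/\ell)$). Also note that you do not actually need to compute the fan paths: an optimal splittable flow is polynomial-time computable, so the fan construction serves purely as an existence certificate---which is all Theorem~\ref{thm:flow} requires.
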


Corollary~\ref{co:flow} follows from Theorem~\ref{thm:flow} and Proposition~\ref{pr:min-max}. See Appendix~\ref{app:flow} for details.

\parabold{Congestion Analysis.} We view the whole recursion process as a recursion tree.
There is no endless loop, since down every path in the recursion tree,
the number of vertices in the input graphs are \emph{strictly} decreasing.
On the other hand, note that the leaf of the recursion tree is resulted by either
(i) when the input graph $H$ to that call satisfies $\mH \le 8\sqrt{\bm \nH}$,
or (ii) when Lines~\ref{ln:begin}--\ref{ln:end} are executed.
An internal node appears only when the vertex-connectivity of the input graph $H$ is \emph{low},
and it makes two recursion calls.%

We prove the following statement by induction from bottom-up:
for each graph which is the input to some call in the recursion tree,
the returned spanning tree of that call has congestion at most $16\sqrt{\bm \nH}\log \nH$.

We first handle the two basis cases (i) and (ii). In case (i), \flcs~returns an arbitrary spanning tree,
and the congestion is bounded by $\mH\le 8\sqrt{\bm \nH}$.
In case (ii), by Corollary~\ref{co:flow} and Lemma~\ref{lm:high-connected-STC-full},
\flcs~returns a tree with congestion at most $16\mH\sqrt{\nH/\bm}\cdot \log \nH \le 16\sqrt{\bm\nH} \cdot \log \nH$.

Next, let $H$ be the input graph to a call which is represented by an internal node of the recursion tree.
Recall the definitions of $X,Y,Z,\tau_1,\tau_2$ in the algorithm.

Let $|X| = x$. Note that $1\le x\le \nH/2$. Then by induction hypothesis, the congestion of the returned spanning tree is at most
\begin{align}
&\max \{~\text{congestion of }\tau_1\text{ in }H[X]~,~\text{congestion of }\tau_2\text{ in }H[Y\cup Z]~\} ~+~ |X|\cdot |Y|\nonumber\\
~\le~& 16 \sqrt{\bm (\nH - x)} \log (\nH-x) ~+~ \left(\sqrt{\bm/\nH}+1\right) \cdot x.\label{eq:change-one}
\end{align}

Viewing $x$ as a real variable, by taking derivative, it is easy to see that the above expression is maximized at $x=1$.
Thus the congestion is at most
$$16\sqrt{\bm(\nH-1)}\log (\nH-1) + \sqrt{\bm/\nH}+1 ~~\le~~ 16\sqrt{\bm\nH}\log \nH,~~\text{as desired by Theorem~\ref{thm:congestion-polytime}.}$$

\parabold{Runtime Analysis.} At every internal node of the recursion tree, the algorithm makes two recursive calls with two vertex-disjoint and strictly smaller (w.r.t.~vertex size) inputs.
The dominating knitting cost is in Line~\ref{ln:min-vertex-cut} for computing a global minimum vertex cut,
which is well-known that it can be done in polynomial time.
Since at every leaf of the recursion tree the running time is polynomial, by standard analysis on divide-and-conquer algorithms,
the running time of the whole algorithm is polynomial, which completes the proof of Theorem~\ref{thm:congestion-polytime}.
\makeatletter{}%
\section{Lower Bound for Spanning Tree Congestion}\label{sect:STC-lower}

Here, we give a lower bound on spanning tree congestion which matches our upper bound.

\begin{theorem}\label{thm:LB-matching}
For any sufficiently large $n$, and for any $m$ satisfying $n^2/2 \ge m\ge \max\{16n\log n,100n\}$,
there exists a connected graph with $N=(3-o(1))n$ vertices and $M\in [m,7m]$ edges,
for which the spanning tree congestion is at least $\Omega\left( \sqrt{mn} \right)$.
\end{theorem}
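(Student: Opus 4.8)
The plan is to construct an explicit family of graphs achieving the claimed lower bound, following the sketch from the Technical Overview: glue together a few expander graphs with only a small overlap, and make the overlap vertices very high-degree so that any spanning tree is forced to route a large number of edges across a single tree-edge near its centroid. Concretely, I would take three disjoint vertex sets $A, B, C$, each of size roughly $n$, and on each of them place a $d$-regular expander with $d = \Theta(m/n)$ (so each expander has $\Theta(m)$ edges). I then designate a small ``hub'' set $H$ of size $h = \Theta(\sqrt{m/n})$ (so that $h \cdot n = \Theta(\sqrt{mn})$, matching the target), and identify/connect the three expanders through $H$: every hub vertex is joined to many vertices in all three of $A, B, C$. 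Choosing the hub degrees appropriately keeps the total number of edges $M$ within a constant factor of $m$, while the vertex count is $N = (3-o(1))n$.

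The core of the argument is the congestion lower bound. Let $\tau$ be any spanning tree of this graph $G$. Pick a centroid edge $e$ of $\tau$ — or more precisely, using the standard centroid argument, find a tree-edge $e$ whose removal splits $V(\tau)$ into two parts $U_e$ and $V \setminus U_e$, each of size at least, say, $N/3$. Using the cut-based definition of edge-congestion, $\ec(e,\tau) = |E(U_e, V\setminus U_e)|$, so it suffices to show every such balanced cut of $G$ has $\Omega(\sqrt{mn})$ crossing edges. There are two cases. First, if $U_e$ contains a constant fraction of one of the expander vertex sets, say $\Omega(n)$ vertices of $A$, \emph{and} also $V\setminus U_e$ contains $\Omega(n)$ vertices of $A$, then expansion of the $A$-expander alone forces $\Omega(n \cdot d) = \Omega(m)$ crossing edges, which is $\gg \sqrt{mn}$ in the relevant regime $m \ge 100n$ — even stronger than needed. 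Second, the remaining case is that for each of $A, B, C$, one side of the cut ``owns'' almost all of it; but since $U_e$ is balanced with $|U_e| \ge N/3 \approx n$ and the three sets partition all but $O(h)$ vertices, at least one of the three expander-sets, say $A$, is split so that the minority side still has $\Omega(n)$ vertices — wait, that collapses into the first case. The genuinely different case is when $U_e$ essentially equals one expander set (say $\approx A$) up to $o(n)$ vertices and $V \setminus U_e$ essentially equals $B \cup C$; then the crossing edges are exactly the edges between $A$ and $B\cup C$, and by construction \emph{all} such edges pass through the hub $H$, and each hub vertex contributes $\Theta(n)$ edges to $A$ on one side. Since $|H| = h = \Theta(\sqrt{m/n})$, this gives $\Theta(h \cdot n) = \Theta(\sqrt{mn})$ crossing edges. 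A short case analysis over which expander-set(s) $U_e$ captures, together with the observation that in every case the cut either cuts an expander in a balanced way (case 1) or must sever the hub connections of at least one expander (case 2), yields the bound.

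The remaining steps are bookkeeping: (i) verify that suitable $d$-regular expanders on $n$ vertices exist for the required $d = \Theta(m/n)$ in the range $m \le n^2/2$ (standard, e.g. random $d$-regular graphs, and the condition $m \ge 16n\log n$ is presumably what guarantees the expansion/existence estimates go through cleanly); (ii) check the edge count: expander edges contribute $3 \cdot \Theta(m) = \Theta(m)$, hub edges contribute $h \cdot \Theta(n) = \Theta(\sqrt{mn}) = O(m)$ since $m \ge n$, so $M \in [m, 7m]$ after scaling constants; (iii) confirm $G$ is connected (immediate, since the hub touches all three expanders); (iv) make the centroid argument precise — in any tree on $N$ vertices there is an edge whose removal yields two components each of size $\ge N/3$, or alternatively a vertex whose removal leaves all components of size $\le N/2$, from which one extracts the desired balanced tree-edge.

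The main obstacle I expect is case 1 of the cut analysis — ensuring that whenever a balanced tree-cut does \emph{not} isolate a whole expander, it necessarily splits some single expander into two linear-sized pieces, so that expander edge-expansion kicks in. This requires arguing that the hub set $H$, being of size only $o(n)$, cannot by itself ``balance'' the cut, so the balancing must come from the expander sets themselves; and then a pigeonhole over $A, B, C$ gives a linearly-split expander unless $U_e$ is (up to $o(n)$ error) a union of whole expander sets — the finitely many such unions being handled by the hub-severing computation in case 2. Getting the $o(n)$ slack to propagate correctly through expansion (an $o(n)$ perturbation of a balanced cut in a $d$-regular expander still has $\Omega(nd)$ crossing edges) is the one place where a little care is needed, but it is standard expander mixing.
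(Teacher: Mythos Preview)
Your construction does not achieve the claimed lower bound. With a single hub $H$ of size $h=\Theta(\sqrt{m/n})$ whose vertices are adjacent to (essentially) all of $A\cup B\cup C$, the graph in fact has spanning tree congestion only $O(n)$: take a hub vertex $v$, make every vertex of $A\cup B\cup C$ a direct child of $v$, and hang each remaining hub vertex as a leaf off some child. Every branch then has at most two vertices, so every tree-edge cuts off a set of size at most $2$, and its congestion is bounded by the sum of two vertex degrees, which is $O(n)$ (the hub vertices have degree $\Theta(n)$, the expander vertices have degree $\Theta(m/n)+h$). Since $m\ge 100n$ can range up to $n^2/2$, this is $o(\sqrt{mn})$ in general.

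This same example also kills the centroid-edge step: it is simply false that every spanning tree has an edge whose removal leaves two parts each of size $\ge N/3$ --- the star is the basic counterexample, and your graph admits a near-star. So even though your observation ``every balanced vertex-cut of $G$ has $\Omega(\sqrt{mn})$ crossing edges'' may well be correct, it does not lower-bound STC, because tree-edge cuts are not forced to be balanced.

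The paper repairs both the construction and the argument with one structural idea. Rather than a single hub touching all three parts, it uses \emph{two} small overlap sets $V_1\cap V_2$ and $V_2\cap V_3$, each of size $\sqrt{m/n}$, and makes each overlap vertex complete only to the \emph{two} parts it lies in. Crucially, no vertex is then adjacent to all three parts (a vertex of $V_1\cap V_2$ has no edge to $V_3\setminus V_2$), so no global near-star tree exists. The lower-bound argument takes a centroid \emph{vertex} $u$; assuming without loss of generality $u\notin V_1$, every component of $T\setminus u$ that meets $V_1$ must contain a vertex of the separator $V_1\cap V_2$, hence there are at most $\sqrt{m/n}$ such components, and by pigeonhole one of them contains at least $n\sqrt{n/m}$ vertices of $V_1$. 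A short three-case analysis on that component's tree edge to $u$ --- using the expander edges inside $V_1$ when the component splits $V_1$ nontrivially, and the added complete-bipartite edges $F_1,F_3$ otherwise --- yields congestion $\Omega(\sqrt{mn})$ in every case.
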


We start with the following lemma, which states that for a random graph $\calG(n,p)$,
when $p$ is sufficiently large, its \emph{edge expansion} is $\Theta(np)$ with high probability.
The proof of the lemma uses only fairly standard arguments
and is deferred to Appendix~\ref{app:exist-expander}.

\begin{lemma}\label{lem:Hnm}
For any integer $n\ge 4$ and $1\ge p \ge 32 \cdot \frac{\log n}{n}$,
let $\calG(n,p)$ denote the random graph with $n$ vertices, in which each edge occurs independently with probability $p$.
Then with probability at least $1-\O(1/n)$, (i) the random graph is connected,
(ii) the number of edges in the random graph is between $pn^2/4$ and $pn^2$,
and (iii) for each subset of vertices $S$ with $|S|\le n/2$, the number of edges leaving $S$ is
at least $\frac p2 \cdot |S| \cdot (n-|S|)$.
\end{lemma}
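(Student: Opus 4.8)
\textbf{Proof plan for Lemma~\ref{lem:Hnm}.}

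The plan is to handle the three claims by standard concentration arguments, treating the edge count via a Chernoff bound and the connectivity plus the expansion via a union bound over vertex subsets. First I would deal with (ii): the number of edges $|E|$ in $\calG(n,p)$ is a sum of $\binom n2$ independent Bernoulli$(p)$ variables, so $\expect{|E|} = p\binom n2 = (1-o(1))\,pn^2/2$. A two-sided Chernoff/Hoeffding bound gives that $|E|$ deviates from its mean by more than a constant fraction with probability exponentially small in $pn^2 \ge 32 n\log n$, which is $\O(1/n)$ (in fact far smaller). Since $pn^2/4 \le (1-o(1))pn^2/2$ and $pn^2 \ge (1+o(1))pn^2/2$ for large $n$, this puts $|E| \in [pn^2/4, pn^2]$ with the claimed probability.

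Next I would prove (iii), from which (i) follows immediately (if the edge-expansion bound holds for every $S$ with $1 \le |S| \le n/2$, then in particular no such $S$ is disconnected from its complement, so the graph is connected). Fix a subset $S$ with $|S| = s \le n/2$. The number of edges $e(S, \bar S)$ leaving $S$ is a sum of $s(n-s)$ independent Bernoulli$(p)$ variables with mean $\mu_s := p\,s(n-s)$. By a lower-tail Chernoff bound, $\prob{e(S,\bar S) < \mu_s/2} \le \exp(-\mu_s/8) = \exp(-p\,s(n-s)/8)$. Since $s \le n/2$ we have $n - s \ge n/2$, so $p\,s(n-s)/8 \ge p s n/16 \ge (32 \log n / n)\cdot s n /16 = 2 s\log n$, giving a failure probability at most $n^{-2s}$ for each fixed $S$ of size $s$. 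Union-bounding over the at most $\binom ns \le n^s$ choices of $S$ of size $s$, the probability that some $s$-set fails is at most $n^s \cdot n^{-2s} = n^{-s}$; summing over $s = 1, \dots, \lfloor n/2\rfloor$ gives a total failure probability at most $\sum_{s\ge 1} n^{-s} = \O(1/n)$. Finally, a union bound over the three events (ii)-failure, (iii)-failure, and hence (i)) keeps the overall failure probability at $\O(1/n)$, completing the proof.

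The only mildly delicate point is bookkeeping the constants so that the exponent in the Chernoff bound for $e(S,\bar S)$ beats the $\binom ns \le n^s$ entropy term with room to spare; this is exactly where the hypothesis $p \ge 32\log n/n$ is used, and the factor $32$ is chosen generously so that $p\,s(n-s)/8 \ge 2 s \log n$. I expect no real obstacle here — the argument is the textbook ``random graphs above the connectivity threshold are good expanders'' computation — so the main care is just to state the Chernoff bound in a form valid for the full range $1 \le s \le n/2$ and to make sure the small-$s$ cases (e.g.\ $s = 1$, where the relevant sum has $n-1$ terms) are covered by the same inequality.
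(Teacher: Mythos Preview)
Your proposal is correct and matches the paper's approach essentially line for line: the paper handles (iii) by the same Chernoff bound $\prob{e(S,\bar S)\le \mu_s/2}\le e^{-p s(n-s)/8}\le e^{-pns/16}\le n^{-2s}$, then union-bounds via $\binom{n}{s}\le n^s$ and sums $\sum_{s\ge 1} n^{-s}\le 2/n$. The only cosmetic difference is that the paper dispatches (i) and (ii) by citing them as well known (Bollob\'as), whereas you derive (i) from (iii) and prove (ii) by a direct Chernoff bound---both of which are fine and arguably cleaner.
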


In particular, for any sufficiently large integer $n$, when $n^2/2\ge m\ge 16n\log n$, by setting $p=2m/n^2$,
there exists a connected graph with $n$ vertices and $[m/2,2m]$ edges, such that
for each subset of vertices $S$ with $|S|\le n/2$, the number of edges leaving $S$ is at least $\frac{m}{2n}\cdot |S| = \Theta(m/n)\cdot |S|$.
We denote such a graph by $H(n,m)$.

\begin{figure}[h]
	\centering
	\includegraphics[scale=0.45]{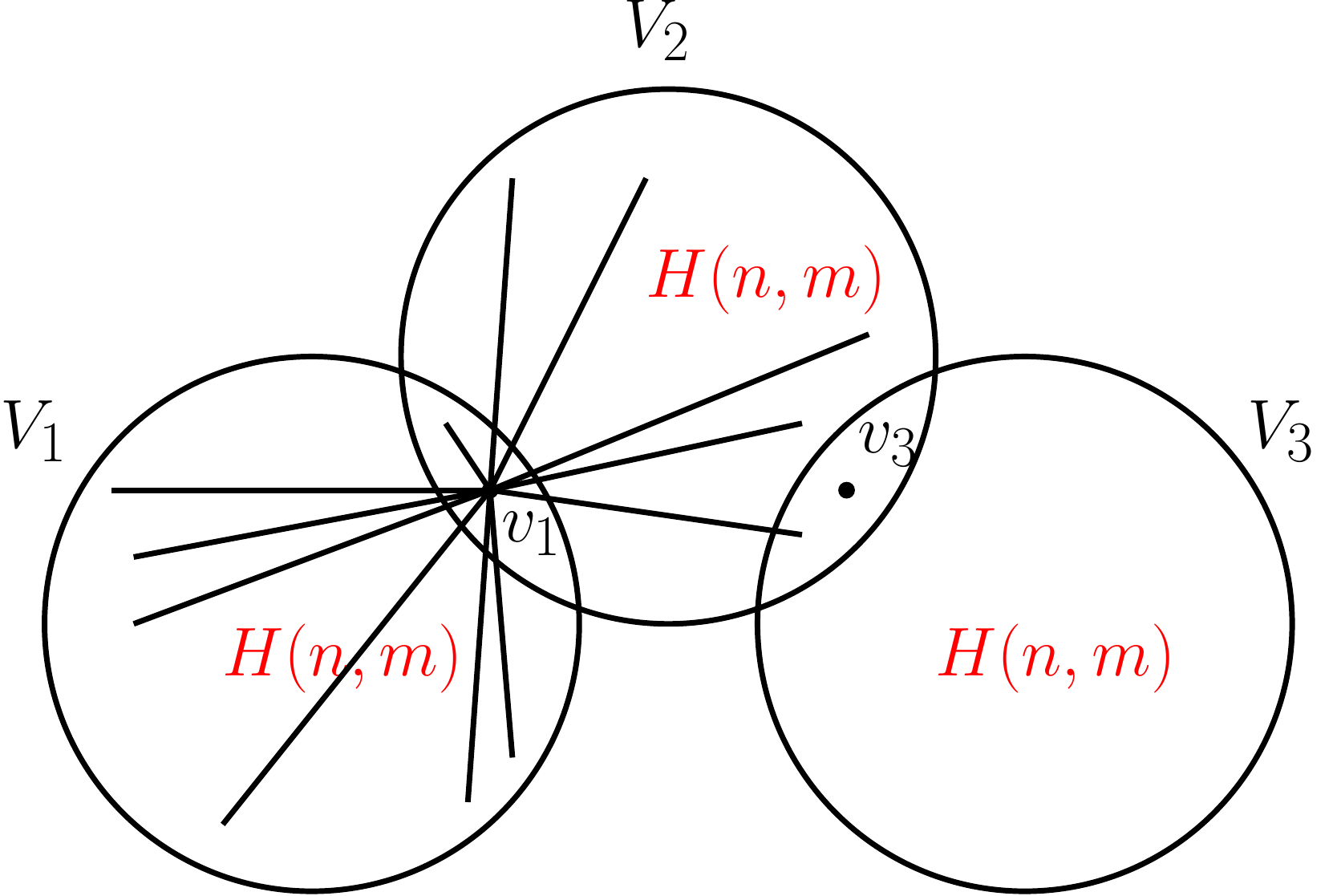}
	\caption{Our lower-bound construction for spanning tree congestion. $V_1,V_2,V_3$ are three vertex subsets of the same size. In each of the subsets, we embed expander $H(n,m)$.
	There is a small overlap between $V_2$ and $V_1,V_3$, while $V_1,V_3$ are disjoint.
	For any vertex $v_1\in V_1\cap V_2$, we add edges between it and any other vertex in $V_1\cup V_2$;
	similarly, for any vertex $v_3\in V_3\cap V_2$, we add edges (not shown in figure) between it and any other vertex in $V_3\cup V_2$.}
	\label{fig:LB}
\end{figure}

We discuss our construction here (see Figure~\ref{fig:LB}) before delving into the proof.
The vertex set $V$ is the union of three vertex subsets $V_1,V_2,V_3$,
such that $|V_1| = |V_2| = |V_3| = n$, $|V_1\cap V_2| = |V_2\cap V_3| = \sqrt{m/n}$, and $V_1,V_3$ are disjoint.
In each of $V_1$, $V_2$ and $V_3$, we embed $H(n,m)$.
The edge sets are denoted $E_1,E_2,E_3$ respectively.
Up to this point, the construction is similar to that of Ostrovskii~\cite{Ostrovskii2004}, except that we use $H(n,m)$ instead of a complete graph.

The new component in our construction is adding the following edges.
For each vertex $v\in V_1\cap V_2$, add an edge between $v$ and every vertex in $(V_1\cup V_2)\setminus \{v\}$.
The set of these edges are denoted $F_1$.
Similarly, for each vertex $v\in V_3\cap V_2$, add an edge between $v$ and every vertex in $(V_3\cup V_2)\setminus \{v\}$.
The set of these edges are denoted $F_3$.
This new component is crucial: without it, we could only prove a lower bound of $\Omega(m/\sqrt{n}) = \Omega(\sqrt{mn}\cdot\frac{\sqrt{m}}{n})$.

\begin{pfof}{Theorem~\ref{thm:LB-matching}}
Let $G=(V,E)$ be the graph constructed as above.
The whole graph has $3n - 2\sqrt{m/n}$ vertices. The number of edges is at least $m$ (due to edges in $E_1$ and $E_3$),
and is at most $6m+2\sqrt{m/n} \cdot 2n = 6m + 4\sqrt{mn}$, which is at most $7m$ for all sufficiently large $n$.

It is well known that for any tree on $n$ vertices, there exists a vertex $x$ called a centroid of the tree such that, removing $x$ decomposes the tree into connected components, each of size at most $n/2$.
Now, consider any spanning tree of the given graph, let $u$ be a centroid of the tree.
Without loss of generality, we can assume that $u\notin V_1$; otherwise we swap the roles of $V_1$ and $V_3$.
The removal of $u$ (and its adjacent edges) from the tree decomposes the tree into a number of connected components.
For any of these components which intersects $V_1$, it must contain at least one vertex of $V_1\cap V_2$,
thus the number of such components is at most $\sqrt{m/n}$, and hence there exists one of them, denoted by $U_j$,
such that
$$b_1 ~:=~ |U_j\cap V_1| ~\ge~ n/(\sqrt{m/n}) ~=~ n\sqrt{n/m}.$$
Let $e_j$ denote the tree-edge that connects $u$ to $U_j$.
Then there are three cases:

\parabold{Case 1: $n\sqrt{n/m} ~\le~ b_1 ~\le~ n-n\sqrt{n/m}$.}~~
Due to the property of $H(n,m)$, the congestion of $e_j$ is at least $\Theta(m/n)\cdot \min\{b_1,n-b_1\} ~\ge~ \Theta(\sqrt{mn})$.

\medskip

\parabold{Case 2: $b_1 ~>~ n-n\sqrt{n/m}$ and $|U_j \cap V_1\cap V_2|~\le~ \frac 12 \cdot \sqrt{m/n}$.}~~
Let $W := (V_1\cap V_2)\setminus U_j$. Note that by this case's assumption, $|W_1|~\ge~\frac 12 \cdot \sqrt{m/n}$.
Due to the edge subset $F_1$, the congestion of $e_j$ is at least
$$\Big|F_1(W~,~V_1\setminus W)\Big|~\ge~ \left(\frac 12 \cdot \sqrt{m/n}\right) \cdot \frac{n}{2} ~=~ \Theta\left(\sqrt{mn}\right).$$

\medskip

\parabold{Case 3: $b_1 ~>~ n-n\sqrt{n/m}$ and $|U_j \cap V_1\cap V_2|~>~ \frac 12 \cdot \sqrt{m/n}$.}~~
Let $W' := U_j \cap V_1\cap V_2$, and let $Z := (V_2\setminus V_1)\cap U_j$.

Note that $b_1 ~>~ n-n\sqrt{n/m} ~\ge~ 9n/10$. Suppose $|Z| ~\ge~ 6n/10$, then $|U_j|~>~ 9n/10 + 6n/10 > |V|/2$,
a contradiction to the assumption that $u$ is a centroid. Thus, $|Z| ~<~ 6n/10$.
Due to the edge subset $F_2$, the congestion of $e_j$ is at least
\begin{align*}
\Big|F_2(W'\cup Z~,~V_2\setminus (W'\cup Z))\Big| &~\ge~ |W'| \cdot \left(n-|W'|-|Z|\right)\\
&~\ge~ \left(\frac 12 \cdot \sqrt{m/n} \right) \cdot \left( n - \sqrt{m/n} - \frac{6n}{10} \right) ~=~ \Theta(\sqrt{mn}).\qedhere
\end{align*}
\end{pfof}

\makeatletter{}%
\section{Graphs with Expanding Properties}\label{sec:expanding}
\newcommand{\bvt}{\overline{V_T}}
\newcommand{\vt}{V_T}
\newcommand{\bs}{\overline{S}}
\newcommand{\ba}{\overline{A}}

For any vertex subset $U,W\subset V$, let $N_W(U)$ denote the set of vertices in $W$ which are adjacent to a vertex in $U$.
Let $N(U) := N_{V\setminus U}(U)$.

\begin{definition}\label{def:expanding}
A graph $G=(V,E)$ on $n$ vertices is an $(n,s,d_1,d_2,d_3,t)$-expanding graph if the following four conditions are satisfied:
\begin{enumerate}
\item[(1)] for each vertex subset $S$ with $|S| = s$, $|N(S)|\ge d_1 n$;
\item[(2)] for each vertex subset $S$ with $|S|\le s$, $|N(S)|\ge d_2 |S|$;
\item[(3)] for each vertex subset $S$ with $|S|\le n/2$ %
and for any subset $S'\subset S$, %
$|N_{V\setminus S}(S')| ~\ge~ |S'|-t$.
\item[(4)]For each vertex subset $S$, $\left|E(S,V\setminus S)\right|\le d_3|S|$.
\end{enumerate}
\end{definition}

\begin{theorem}\label{thm:expanding-UB}
For any connected graph $G$ which is an $(n,s,d_1,d_2,d_3,t)$-expanding graph, there is a polynomial time algorithm which computes a spanning tree
with congestion at most
$$
d_3 \cdot \left[4 \cdot \max\left\{s+1~,~\ceil{\frac{3d_1n}{d_2}}\right\} \cdot \left( \frac{1}{2d_1} \right)^{\log_{(2-\delta)}2} + t\right],~~\text{where}~
\delta = \frac{t}{d_1 n}.
$$
\end{theorem}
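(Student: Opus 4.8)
\textbf{Proof proposal for Theorem~\ref{thm:expanding-UB}.}

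The plan is to follow the greedy/matching-based construction sketched in the Technical Overview: grow a spanning tree rooted at a high-degree vertex, maintaining a collection of \emph{branches} (the subtrees hanging off the root's neighbours) whose sizes are kept roughly balanced, and bound the final congestion of every tree-edge by relating it to the size of the branch below it. Concretely, I would pick a root $r$ and designate each neighbour of $r$ as starting its own branch; a tree-edge $e$ other than the root-edges sits inside some branch, so removing $e$ separates off a set $U_e$ that is a subtree of that branch, and condition~(4) gives $\ec(e,\tau)=|E(U_e,V\setminus U_e)|\le d_3|U_e|$. For a root-edge, $U_e$ is an entire branch, so the same estimate gives $\ec(e,\tau)\le d_3\cdot(\text{max branch size})$. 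Hence everything reduces to a \emph{combinatorial invariant}: showing the construction can always be carried out so that every branch — and more importantly every subtree within a branch that can be cut off at a single edge — has size at most $4\max\{s+1,\lceil 3d_1n/d_2\rceil\}\cdot(1/2d_1)^{\log_{(2-\delta)}2}+t$.

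The core of the argument is the growth/balancing lemma. I would proceed in rounds: while there are unattached vertices, use conditions~(1) and~(2) to guarantee that the current tree $V_T$ has many neighbours outside it — if the ``frontier'' (the set of unattached vertices adjacent to $V_T$) is small, say of size $\le s$, condition~(2) forces its neighbourhood, hence $V_T$ itself, to be large; if it is moderate, condition~(1) forces a linear-size neighbourhood $\ge d_1 n$. Then run a \emph{matching} between frontier vertices and $V_T$ (several rounds, as hinted) to attach a constant fraction of the outstanding vertices while charging each newly attached vertex to a distinct vertex already in the tree, so that no branch grows by more than a bounded multiplicative factor per round. The recursion $(\text{size at round }i+1)\le(\text{size at round }i)\cdot\frac{1}{2-\delta}$-type bound, iterated $\log_{(2-\delta)}2$ times to halve, is exactly what produces the $(1/2d_1)^{\log_{(2-\delta)}2}$ factor; here $\delta=t/(d_1n)$ measures the slack lost because condition~(3) only guarantees $|N_{V\setminus S}(S')|\ge|S'|-t$ rather than a perfect matching, which is where the additive $+t$ term ultimately enters. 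When the frontier becomes small enough that conditions~(1),(2) no longer help (fewer than $s+1$ vertices left, or the $3d_1n/d_2$ threshold), I would attach the remaining vertices by a direct argument using condition~(3) — a deficient-version of Hall's theorem — paying the additive $t$ once.

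I expect the \emph{transferable-vertex} step to be the main obstacle. Even with good expansion, after several rounds the outstanding vertices may all attach only to one or two ``heavy'' branches, so balancing cannot be maintained by matching alone; the fix is to locate a vertex $x$ in a heavy branch such that $x$ together with its tree-descendants forms a subtree that is still adjacent (in $G$) to a lighter branch, and re-root that subtree there. Making this precise — proving such a transferable vertex always exists when imbalance is large, and checking that transferring it neither disconnects anything nor blows up the size bound (the transferred piece must itself be small, which again uses the expansion conditions on the heavy branch) — is the delicate part, and getting the bookkeeping to close with the stated constant $4$ in front of $\max\{s+1,\lceil3d_1n/d_2\rceil\}$ is where the real work lies. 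Once the invariant ``every cut-off subtree has bounded size'' is established, multiplying by $d_3$ via condition~(4) finishes the proof, and the whole procedure is manifestly polynomial time since each round does a matching computation and a connectivity/transfer check on $G$.
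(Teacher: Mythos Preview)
Your proposal has all the right ingredients and is essentially the paper's approach, but the \emph{phase structure} is inverted, and this matters for why the constants come out as they do. In the paper the two mechanisms are cleanly separated. \textbf{Phase~1} adds vertices \emph{one at a time} (not by matching), maintaining the strict invariant that every branch has at most $\mathcal{A}:=\max\{s+1,\lceil 3d_1 n/d_2\rceil\}$ vertices; the transferable-vertex argument is invoked only here, and the hidden-vertex counting (if all unsaturated branches are hidden then $|H|\le s$, hence $|V_T|\ge d_1 n$ by condition~(1)) is exactly what proves Phase~1 terminates with $|V_T|\ge d_1 n$. So $\mathcal{A}$ is not an end-game threshold for ``remaining vertices'' as you wrote --- it is the branch-size cap maintained throughout Phase~1, and it is the \emph{seed} for the geometric bound. \textbf{Phase~2} then consists solely of matching rounds between $V_T$ and $\overline{V_T}$: each round at most doubles every branch (because each tree vertex is matched to at most one new vertex), while condition~(3) plus a deficiency-Hall argument shows $|V_T|$ grows by a factor $\ge 2-\delta$ per round until $|V_T|\ge n/2$, and then one more round leaves $\le t$ vertices. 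Hence the number of rounds is $\le \lceil\log_{(2-\delta)}(1/2d_1)\rceil+1$, and the final branch size is $\le \mathcal{A}\cdot 2^{\text{rounds}}+t \le 4\mathcal{A}\cdot(1/2d_1)^{\log_{(2-\delta)}2}+t$.

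Two concrete corrections to your write-up: (i) the recursion is not ``branch size $\times\,1/(2-\delta)$'' but rather ``$|V_T|\times(2-\delta)$'' controlling the \emph{number} of doubling rounds; (ii) conditions~(1) and~(2) are used only in Phase~1 (to bound hidden vertices and to force a saturated branch to contain $\ge 3|U|$ neighbours of the unsaturated part, yielding a transferable vertex with small subtree), whereas condition~(3) is used only in Phase~2. If you try to interleave matching with transfers from the start as you suggest, the transferable-vertex existence argument breaks, because it relies on all branches being capped at exactly~$\mathcal{A}$.
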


Next, we present the polynomial time algorithm in Theorem~\ref{thm:expanding-UB} and its analysis.

\parabold{Algorithm.} Let $G$ be an $(n,s,d_1,d_2,d_3,t)$-expanding graph.
By Condition (2), every vertex has degree at least $d_2$.
Let $v_0$ be a vertex of degree $d\ge d_2$, and let $v_1,\cdots,v_d$ be its $d$ neighbours.
We maintain a tree $T$ rooted at $v_0$ such that 
$T=T_1\cup T_2 \cup \dots \cup T_d \cup \left\{ v_0v_1,v_0v_2,\dots,v_0v_d \right\}$ 
where $T_1,T_2,\cdots,T_d$ are trees rooted at $v_1,v_2,\dots,v_d$ respectively.
We call the $T_i's$ as \underline{branches}. (See Figure~\ref{fig:random-tree}).
We start with each branch $T_i=v_i$.
In order to minimize congestion, we grow $T$ in a balanced way, i.e., we maintain that the $T_i$'s are roughly of the same size.
A branch is \underline{saturated} if it contains at least $\max \left\{s+1~,~\frac{ 3d_1n}{d_2}\right\}$ vertices.%

\begin{figure}[h]
\centering
\includegraphics[scale=0.4]{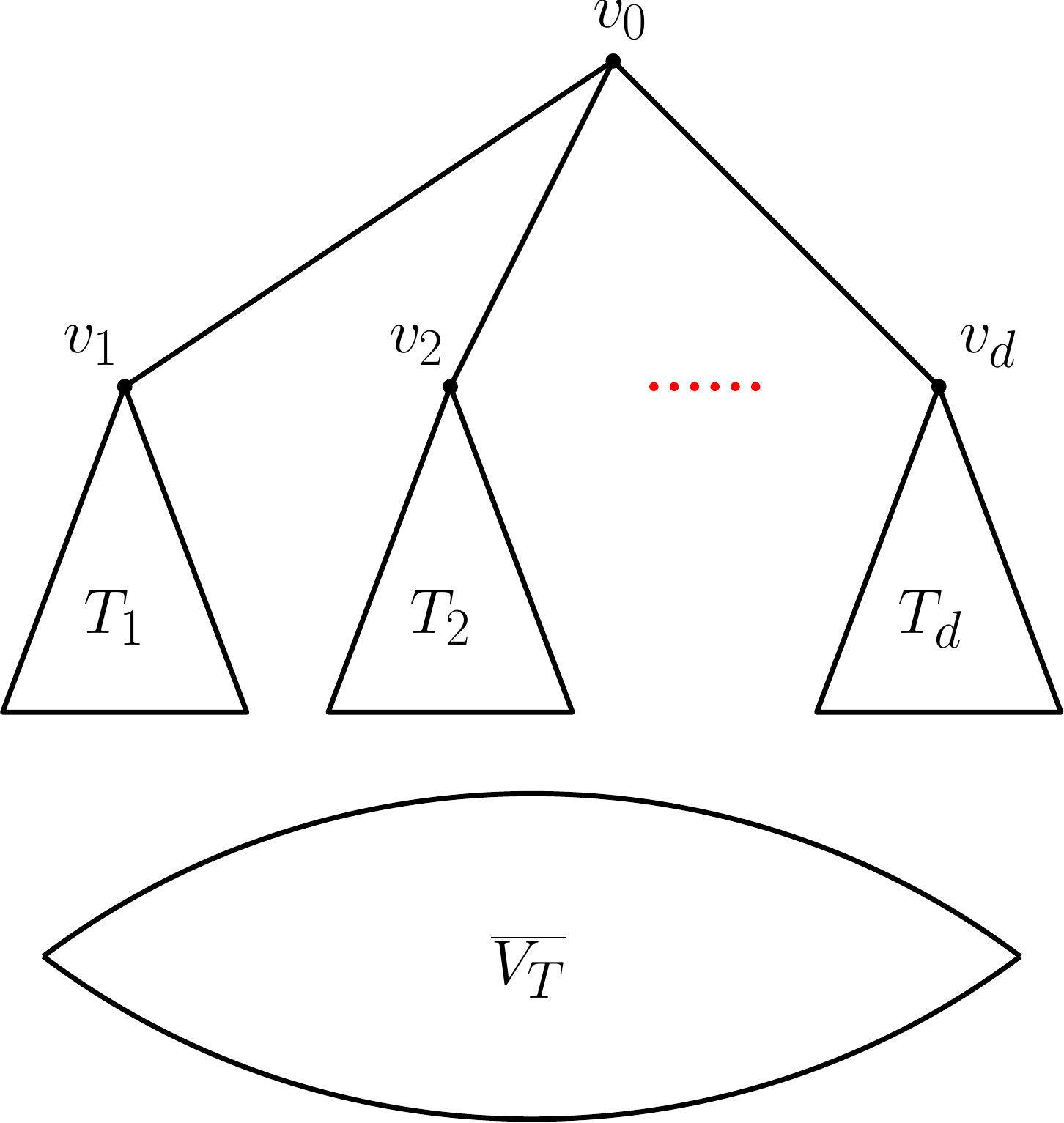}
\caption{The tree $T$ and its branches} 
\label{fig:random-tree}
\end{figure}

At any point of time, let $\vt$ be the set of vertices in $T$ and $\bvt$ be the vertices not in $T$.
Often, we will move a subtree of a saturated branch $T_i$ to an unsaturated branch $T_j$ to ensure balance.
For any $x\in \vt$, let $T_{x}$ denote the subtree of $T$ rooted at $x$.
A vertex $x$ of a saturated branch $T_i$ is called \underline{transferable} (to branch $T_j$)
if $x$ has a neighbour $y$ in $T_j$ and the tree $T_j\cup \{xy\} \cup T_x$ is unsaturated. (See Figure~\ref{fig:transfer}.)

\begin{figure}[h]
\centering
\includegraphics[scale=0.4]{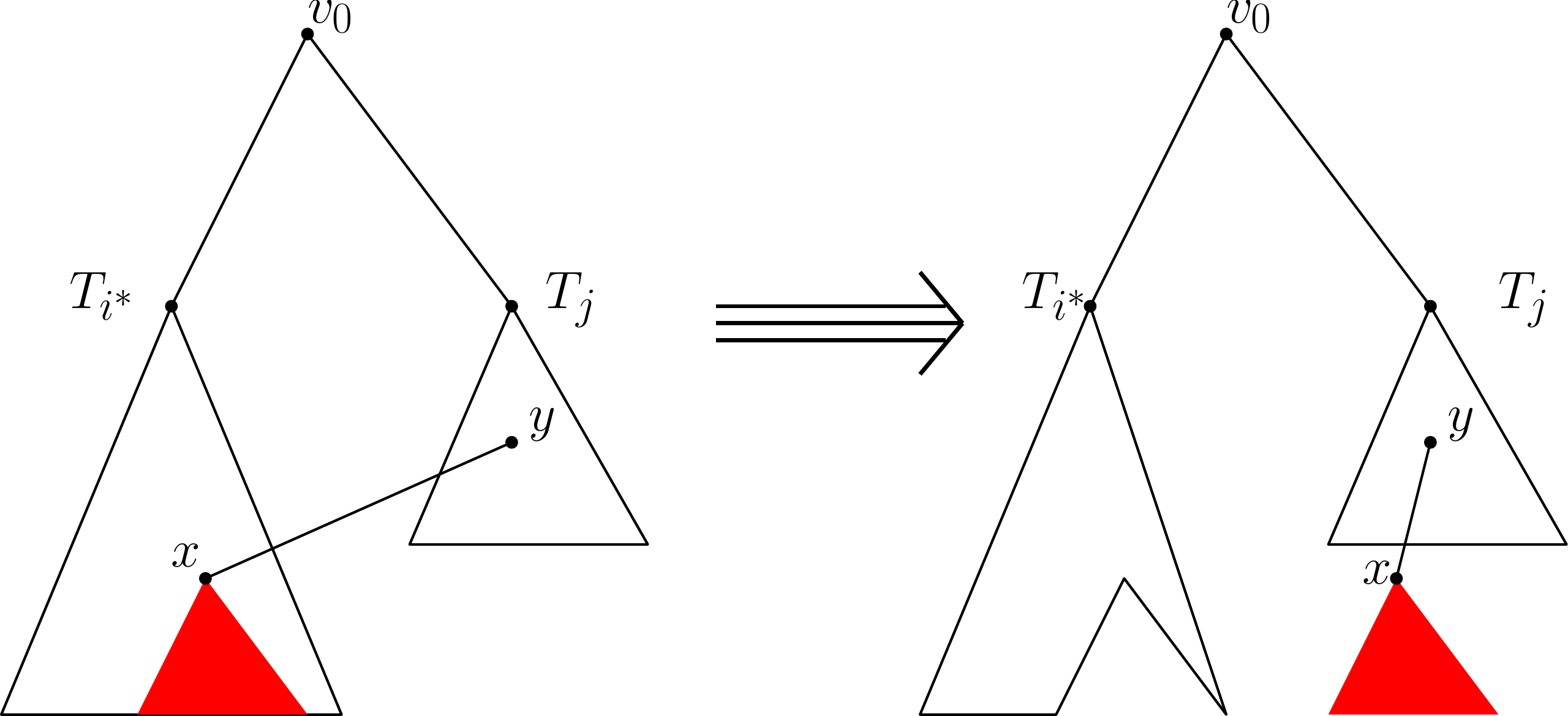}
\caption{Transfer of a subtree from a saturated branch to an unsaturated branch}\label{fig:transfer}
\end{figure}

The algorithm is divided into two phases which are described below.
Throughout the algorithm, whenever a branch $T_i$ gets modified, $T$ gets modified accordingly,
and whenever $T$ gets modified $\vt$ and $\bvt$ gets modified accordingly.

\smallskip

\parabold{Phase 1}: Repeatedly do one of the following two actions, until $|\vt|\ge d_1n$:\\
(We will prove %
that the precondition of at least one of the actions is satisfied if $|\vt|< d_1n$)%
\begin{enumerate}
	\item If there exists a $b\in \bvt$ such that $b$ has a neighbour $a$ in some unsaturated branch $T_i$:\\
	Add the vertex $b$ and the edge $ab$ to branch $T_i$.
	\item If there exists at least one transferable vertex: (see Figure~\ref{fig:transfer})\\
	Find the transferable vertex $x$ such that $T_x$ is the smallest. Let $T_{i^*}$ be the branch currently containing $x$, $T_j$ be a branch to which it is transferable, and $y$ be an arbitrarily chosen neighbour of $x$ in $T_j$.
	\begin{enumerate}
		\item
		Remove the subtree $T_x$ from $T_{i^*}$ and add it to $T_j$ with $x$ as a child of $y$.
		\item
		Pick a $b\in \bvt$ that has a neighbour $a$ (arbitrarily chosen, if many) either in $T_{i^*}$ or in $T_j$. (We will show in the analysis that such $b$ exists). 
		We add vertex $b$ and edge $ab$ to the branch containing $a$ (i.e. to $T_{i^*}$ or $T_j$).
	\end{enumerate}
\end{enumerate}
\parabold{Phase 2:}
While $\bvt\neq \emptyset$, repeat:	\\
Find a maximum matching of $G[\vt,\bvt]$, the bipartite graph formed by edges of $G$ between $\vt$ and $\bvt$. Let $M$ be the matching.
Add all edges of $M$ to $T$.

\smallskip
In the analysis below, we say that a tree is \underline{saturated} if it contains at least $\calA$ vertices;
we will determine its appropriate value by the end of the analysis.

\parabold{Analysis of Phase 1.} 
We claim that during Phase 1, i.e. if $|\vt|< d_1 n$, the precondition of either step 1 or step 2 is satisfied. We also show the existence of a vertex $b$ as specified in step 2b, whenever step 2b is reached. 
Given these and the fact that a vertex in $\bvt$ is moved to $\vt$ (either in step 1 or in step 2b) during each round of Phase 1, we have that Phase 1 runs correctly and terminates after a linear number of rounds. 

During Phase 1, we will also maintain the invariant that each branch has at most $\calA$ vertices; thus, each saturated branch
has exactly $\calA$ vertices. We call this invariant the \underline{balancedness}.
Note that balancedness is not violated due to step 1, as the new vertex is added to an unsaturated branch. 
It is not violated during step 2 as the branches $T_{i^*}$ and $T_j$ (as defined in step 2) become unsaturated at the end of the step.

We define the \underline{hidden vertices} of $T$ (denoted by $H\equiv H_T$) as follows:
they are the vertices which are not adjacent to any vertices outside the tree, i.e., to any vertex in $\bvt$.
If there is an unsaturated branch with a non-hidden vertex, clearly the precondition of step 1 is satisfied.
So, let us assume that all the vertices in all unsaturated branches are hidden.
In such a case, we show that the precondition of step 2 is satisfied if $|\vt|< d_1n$.

We argue that in this case $|H|\le s$: otherwise, take a subset $H'\subset H$ of cardinality $s$,
then by condition (2), $N(H')$, which is contained in $V_T$, has cardinality at least $d_1 n$, a contradiction.

Since $|\vt|<d_1 n$, the number of saturated branches is at most $d_1n/\calA$.
To ensure that at least one unsaturated branch exists, we set $\calA$ such that $d_1n/\calA < d_2$.
Let $U$ denote the set of vertices in all unsaturated branches. Since all vertices in $U$ are hidden vertices, $|U|\le s$.
Then by condition (2), $|N(U)|\ge d_2|U|$. Note that the vertices in $N(U)$ are all in the saturated branches.
By the pigeon-hole principle, there exists a saturated branch containing at least
$$N(U) / (d_1n/\calA) ~\ge~ \frac{\calA d_2|U|}{d_1n}$$
vertices of $N(U)$. 
By setting $\calA \ge \frac{ 3d_1n}{d_2} $, the above calculation guarantees the existence of a saturated branch containing
at least $3|U|\ge |U|+2$ vertices of $N(U)$; let $T_i$ be such a branch.

In $T_i$, pick a vertex $x\in T_i\cap N(U)$ such that $T_x$ does not contain any vertex in $N(U)$, except $x$.
Then the size of $T_x$ is at most $\calA-|N(U)\cap T_i|+1\ge \calA - (|U|+1)$.
Let $y\in U$ be a vertex which is adjacent to $x$ and $T_j$ be the branch containing $y$.
Since $T_j$ has at most $|U|$ vertices, $x$ is a transferable vertex (to $T_j$).
Thus precondition of step 2 is satisfied.

We further set $\calA > s$ so that in each saturated branch, there is at least one unhidden vertex. %
In particular, $T_i$ has an unhidden vertex, which is adjacent to some $b\in \bvt$.
The vertex $b$ is either adjacent to a vertex in $T_x$, or a vertex in $T_i \setminus T_x$ as required in step 2b.

\smallskip

\parabold{Analysis of Phase 2.} 
Since $G$ is connected, $M$ is non-empty in each iteration of Phase 2, and hence Phase $2$ terminates in linear number of rounds. 
At the end of Phase $2$, since $\bvt$ is empty, $T$ is clearly a spanning tree.
It only remains to estimate the congestion of this spanning tree.
Towards this, we state the following \emph{modified Hall's theorem}, which is an easy corollary of the standard Hall's theorem.

\begin{lemma}
	In a bipartite graph $(L,R)$ with $|L|\le |R|$, for any vertex $w\in L$, let $R(w)$ denote the neighbours of $w$ in $R$;
	then for any $W\subset L$, let $R(W) := \cup_{w\in W} R(w)$.
	Suppose that there exist $t\ge 0$ such that for any $W\subset L$, we have $|R(W)|\ge |W|-t$.
	Then the bipartite graph admits a matching of size at least $|L| - t$.
\end{lemma}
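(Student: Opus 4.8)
The plan is to reduce to the classical Hall's marriage theorem by a padding argument. First I would introduce a set $D$ of $t$ fresh ``dummy'' vertices on the right side and form an auxiliary bipartite graph on $(L, R \cup D)$ by adding every edge between a vertex of $L$ and a vertex of $D$, while leaving the original $L$--$R$ edges untouched. The role of the dummies is to artificially inflate the right-neighbourhood of \emph{every} subset of $L$ by exactly $t$, so that the hypothesis $|R(W)| \ge |W| - t$ turns into the genuine Hall condition.

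Next I would verify Hall's condition in the auxiliary graph: for any $W \subseteq L$, its neighbourhood there is $R(W) \cup D$, of size $|R(W)| + t \ge (|W| - t) + t = |W|$, where the inequality is the hypothesis. Hence the auxiliary graph has a matching $M'$ saturating all of $L$, i.e.\ $|M'| = |L|$. Finally I would restrict $M'$ back to the original graph: at most $|D| = t$ edges of $M'$ are incident to a dummy vertex, so deleting those edges leaves a matching $M \subseteq M'$ consisting only of original $L$--$R$ edges with $|M| \ge |L| - t$, which is the claimed matching.

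The only thing that needs care here is routine bookkeeping: checking that the padding does not create or destroy any original edge, noting that the assumption $|L| \le |R|$ is used only so that the phrase ``matching of size $|L| - t$'' is the natural way to state the conclusion (the combinatorial argument itself does not need it), and making sure the classical Hall theorem is invoked for the correct side $L$. As an alternative one could bypass the padding entirely and quote the deficiency form of Hall's theorem, which says the maximum matching has size $|L| - \max_{W \subseteq L}\bigl(|W| - |R(W)|\bigr)$; the hypothesis bounds that maximum deficiency by $t$, giving the statement at once. I would present the self-contained padding version.
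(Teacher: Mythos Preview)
Your proposal is correct. The paper itself does not spell out a proof of this lemma; it merely remarks that the statement ``is an easy corollary of the standard Hall's theorem,'' and your padding argument (equivalently, the deficiency form of Hall's theorem you mention) is exactly the standard way to realize that corollary.
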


Recall that Phase $2$ consists of multiple rounds of finding a matching between $V_T$ and $\bvt$.
As long as $|V_T|\le n/2$,
condition (3) (with $S = V_T$) plus the modified Hall's theorem (with $L=\vt$ and $R=\bvt$) guarantees that in each round, at least
$$|V_T|-t ~\ge~ \left(1-\frac{t}{d_1n}\right)\cdot |V_T| ~=:~ (1-\delta) |V_T|$$
number of vertices in $\bvt$ are matched. Thus, after at most $\ceil{\log_{(2-\delta)} \frac{1}{2d_1}}$ rounds of matching, $|V_T|\ge n/2$.
After reaching $|V_T|\ge n/2$, condition (3) (with $S=\bvt$) plus the modified Hall's theorem (with $L=\bvt$ and $R=\vt$) guarantees that after one more round of matching, all but $t$ vertices are left in $\bvt$.

By the end of Phase 1, each branch had at most $\calA$ vertices.
After each round of matching, the cardinality of each branch is doubled at most.
Thus, the maximum possible number of vertices in each branch after running the whole algorithm is at most
$$
\calA \cdot 2^{\ceil{\log_{(2-\delta)} \frac{1}{2d_1}}+1} + t %
~~\le~~ 4\calA \cdot \left( \frac{1}{2d_1} \right)^{\log_{(2-\delta)}2} + t.
$$
and hence the STC is at most
$$
d_3 \cdot \left[4\calA \cdot \left( \frac{1}{2d_1} \right)^{\log_{(2-\delta)}2} + t\right] .
$$

Recall that we need $\calA$ to satisfy $d_1n/\calA < d_2$, $\calA \ge \frac{ 3d_1n}{d_2} $ and $\calA > s$.
Thus we set $\calA := \max \left\{ s+1~,~\ceil{\frac{ 3d_1n}{d_2}}\right\}$.
\subsection{Random Graph}
Let $G \in \calG(n,p)$ where $p \ge c_0\log n/n$, and $c_0=64$.
The following lemmas show that 
with high probability $G$ is an $(n,s,d_1,d_2,d_3,t)$-expanding graph with
$s=\Theta(1/p)$, $d_1 = \Theta(1)$, $d_2 = \Theta(np)$, $d_3 = \Theta(np)$, $t = \Theta(1/p)$ (and hence $\delta = o(1)$).
The proof of the lemmas are deferred to Appendix~\ref{ssec:algo-random}.

\begin{lemma}\label{lemma:big-set-expansion}
	For any $S\subseteq V(G)$ such that $ |S|=\lceil 1/p \rceil$, we have $|N(S)|\ge c_2n$ with probability at least $1-e^{-n/16}$, where $c_2=1/25$.
\end{lemma}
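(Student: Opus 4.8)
The plan is a textbook Chernoff-bound computation on a sum of independent indicators. Fix an arbitrary subset $S\subseteq V(G)$ with $|S|=s:=\ceil{1/p}$. First I would note that $s$ is tiny compared with $n$: from $p\ge 64\log n/n$ we get $1/p\le n/(64\log n)$, so $s\le n/(64\log n)+1$ and in particular $n-s\ge n/2$. For each $v\in V\setminus S$ let $X_v$ be the indicator of the event that $v$ has at least one neighbour in $S$; then $N(S)=\{v\in V\setminus S:X_v=1\}$, so $|N(S)|=\sum_{v\in V\setminus S}X_v$. The structural point I would emphasise is that $\{X_v\}_{v\in V\setminus S}$ are mutually independent: $X_v$ is a function of the $s$ potential edges between $v$ and $S$, and for distinct $v,v'\in V\setminus S$ these edge-sets are disjoint, so independence of the edges of $\calG(n,p)$ carries over.

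Next I would estimate the mean. Since $1-p\le e^{-p}$ and $ps\ge p\cdot(1/p)=1$, we have $\prob{X_v=0}=(1-p)^s\le e^{-ps}\le e^{-1}$, hence $\prob{X_v=1}\ge 1-1/e$ for every $v$. Therefore $\expect{|N(S)|}=(n-s)\bigl(1-(1-p)^s\bigr)\ge(n/2)(1-1/e)>0.31\,n$, which is already much larger than the target value $c_2 n=n/25$.

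Finally I would apply the multiplicative Chernoff lower-tail inequality: for a sum $X$ of independent $[0,1]$-valued random variables with mean $\mu$ and any $\delta\in(0,1)$, $\prob{X\le(1-\delta)\mu}\le\exp(-\delta^2\mu/2)$. Applying this with $X=|N(S)|$, $\mu=\expect{|N(S)|}$, and $\delta$ chosen so that $(1-\delta)\mu=n/25$ (legitimate since $n/25<\mu$), and using that $\mu\ge(n/2)(1-1/e)$ forces $\delta\ge 1-\tfrac{2/25}{1-1/e}>0.87$, monotonicity of $\delta^2\mu$ in $\mu$ gives $\prob{|N(S)|<n/25}\le\exp(-c\,n)$ for an absolute constant $c$; a one-line numerical check gives $c\approx 0.12>1/16$, so the probability is at most $e^{-n/16}$, as claimed.

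I do not anticipate a real obstacle: the argument is standard and the wide gap between the expectation ($\approx 0.63\,n$) and the threshold ($n/25$) makes all constants robust. The only steps needing a moment's attention are the verification of independence of the $X_v$ (disjointness of the relevant edge-sets) and the final arithmetic confirming that the Chernoff exponent beats $n/16$. The surplus in that exponent is also exactly what allows a subsequent union bound over all $\binom{n}{s}\le e^{n/64+\log n}$ size-$s$ subsets $S$, which is what Condition~(1) of Definition~\ref{def:expanding} ultimately requires of the random graph.
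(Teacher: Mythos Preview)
Your proposal is correct and follows essentially the same approach as the paper: write $|N(S)|$ as a sum of independent indicator variables $X_v$ over $v\in V\setminus S$, lower-bound the mean using $(1-p)^{|S|}\le e^{-1}$, and apply the multiplicative Chernoff lower tail. One small point of alignment: the paper reads the statement as a ``for all $S$'' assertion and therefore performs the union bound \emph{inside} the proof (obtaining $e^{-n/8}$ for a single $S$ and then absorbing the $\le 2^{n/32}$ choices of $S$ to reach $e^{-n/16}$), whereas you prove the fixed-$S$ bound and defer the union bound to a remark. Since you explicitly note the surplus in the exponent suffices for that union bound, your argument delivers the same conclusion with the same mechanics.
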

\begin{lemma}
	\label{lemma:small-set-expansion}
	For any $S\subseteq V(G)$ such that $|S|\le 1/p$, we have $ |N(S)|\ge c_3np|S|$ with probability at least $1-\O(1/n^2)$, where $c_3=1/16$.
\end{lemma}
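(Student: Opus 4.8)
The plan is a first-moment estimate combined with a Chernoff bound and then a union bound over all small vertex sets. Fix a set $S\subseteq V(G)$ with $|S|=\ell$ for some $1\le \ell\le 1/p$. For a vertex $v\in V\setminus S$, the event $v\in N(S)$ is determined by the $\ell$ potential edges between $v$ and $S$; since these edge sets are pairwise disjoint over distinct $v$, the indicators $\{\,v\in N(S)\,\}_{v\in V\setminus S}$ are mutually independent, each taking value $1$ with probability $q:=1-(1-p)^{\ell}$. Hence $|N(S)|$ is \emph{exactly} a $\mathrm{Bin}(n-\ell,\,q)$ random variable, and the proof reduces to a lower-tail bound for this binomial together with a counting argument.

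First I would lower-bound $q$. Since $p\ell\le 1$, using $1-p\le e^{-p}$ and the concavity of $x\mapsto 1-e^{-x}$ on $[0,1]$ (so it lies above the chord joining its endpoints), one gets $q\ge 1-e^{-p\ell}\ge (1-e^{-1})p\ell\ge \tfrac12 p\ell$. Combined with $\ell\le 1/p\le n/(c_0\log n)$, which forces $n-\ell\ge(1-o(1))n$, this gives $\mu:=\expect{|N(S)|}=(n-\ell)q\ge np\ell/3$ for all large $n$. As $c_3=\tfrac1{16}$, the quantity $c_3 np\ell$ sits a constant factor below $\mu$, so the lower-tail Chernoff inequality $\prob{X\le(1-\delta)\mu}\le e^{-\delta^2\mu/2}$, applied with $(1-\delta)\mu=c_3np\ell$ (so that $\delta$ is bounded away from $0$), yields $\prob{|N(S)|<c_3np\ell}\le e^{-c\,np\ell}$ for an absolute constant $c>0$; and since $np\ge c_0\log n$, this is at most $n^{-c''\ell}$, where the choice $c_0=64$ makes the absolute constant $c''$ comfortably larger than $2$.

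To finish, I would take a union bound: there are at most $\binom{n}{\ell}\le n^{\ell}$ sets of size $\ell$, so the chance that some $S$ with $|S|=\ell$ violates the inequality is at most $n^{\ell}\cdot n^{-c''\ell}=n^{-(c''-1)\ell}$, and summing this geometric series over $\ell=1,\dots,\lfloor 1/p\rfloor$ bounds the total failure probability by $\O\!\left(n^{-(c''-1)}\right)=\O(1/n^2)$; a fortiori the claim holds for any single fixed $S$. The argument is routine. The only points that need a little care — and the closest thing to an obstacle — are checking that $|N(S)|$ is genuinely binomial (i.e.\ the independence of the per-vertex adjacency indicators), nailing down the Chernoff constant, and verifying that after inserting $np\ge c_0\log n$ the resulting exponent outruns the $n^{\ell}$ coming from the union bound over all size-$\ell$ sets; this last point is exactly why $c_0$ is taken as large as $64$.
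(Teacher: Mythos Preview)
Your proof is correct and follows the same high-level scheme as the paper's: a Chernoff lower-tail bound on $|N(S)|$ for a fixed $S$, followed by a union bound over all $S$ of size at most $1/p$. The one technical difference is that the paper does not work directly with the binomial $\mathrm{Bin}(n-\ell,q)$; instead it partitions $V\setminus S$ into groups of size $\lceil 1/(p|S|)\rceil$, so that each group independently touches $S$ with probability at least $1-1/e$, and then applies Chernoff to the number of ``successful'' groups. This grouping device buys constant-probability Bernoullis (which some authors find cleaner for tracking constants), but your direct-binomial route is equally valid and arguably more transparent, since $|N(S)|$ really is exactly binomial and there is no need to coarsen it. One small wording slip: to conclude $\O(1/n^{2})$ after the union bound you need $c''-1\ge 2$, i.e.\ $c''\ge 3$, not merely $c''>2$; your actual numbers (roughly $c''\approx 7$ with $c_0=64$) easily clear this, so the argument stands.
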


\begin{lemma}\label{lemma:halls}
For all $A\subseteq V(G)$ such that $|A|\le n/2$, and for all $S\subseteq A$,  with probability at least $1-e^{-n}$, $S$ has at least $|S|-c_4/p$
neighbors in $V\left( G \right)\setminus A$, where $c_4=12$.
\end{lemma}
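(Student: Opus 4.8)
\textbf{Proof proposal for Lemma~\ref{lemma:halls}.}

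The plan is to bound the probability that the stated property fails for a fixed pair $(A,S)$ with $|A| \le n/2$ and $S \subseteq A$, and then take a union bound over all such pairs. Fix $A$ and $S$, and write $a = |A|$, $s' = |S|$, and let $B := V(G) \setminus A$, so $|B| = n - a \ge n/2$. The quantity $|N_B(S)|$ counts vertices of $B$ that have at least one neighbour in $S$; equivalently, a vertex $b \in B$ is \emph{missed} if it has no edge to any vertex of $S$, which happens independently across $b \in B$ with probability $(1-p)^{s'}$. So the number of missed vertices is a sum of $|B|$ i.i.d.\ Bernoulli random variables, and $|N_B(S)| = |B| - (\text{number missed})$. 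We want $|N_B(S)| \ge s' - c_4/p$. If $s' \le c_4/p$ this is vacuous, so assume $s' > c_4/p$, which in particular forces $(1-p)^{s'} \le e^{-ps'} < e^{-c_4}$, a small constant.

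Next I would estimate the expected number of missed vertices: it is $|B| \cdot (1-p)^{s'} \le n \cdot e^{-ps'}$. The worry is that $s'$ could be as large as $n/2$, but then $e^{-ps'}$ is exponentially tiny and the expected number missed is $o(1)$; the binding case is when $s'$ is a modest multiple of $1/p$. In all cases I claim the expected number missed is at most, say, $\tfrac12 s'$ once $c_4$ is chosen large enough (for $s'$ just above $c_4/p$ we have expected missed $\le n e^{-c_4}$, and we will need to handle the regime $s' = O(1/p)$ separately if $n e^{-c_4} > \tfrac12 s'$ there — this is precisely why the bound is phrased as $|S| - c_4/p$ rather than a multiplicative bound). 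Concretely: either $s' \ge 2n e^{-ps'}$, in which case a Chernoff upper-tail bound on the number of missed vertices gives that it exceeds $s'$ (hence $|N_B(S)| < s' - 0$, but we need the stronger failure $< s' - c_4/p$, so really exceeds $s' - c_4/p \le s'$ suffices to absorb) with probability at most $e^{-\Omega(s')} \le e^{-\Omega(1/p)}$; or $s' < 2ne^{-ps'}$, which (since $e^{-ps'} \le e^{-c_4}$ when $s' > c_4/p$) forces $s' = O(n e^{-c_4})$, i.e.\ $s' = O(1/p)$ absorbed into the additive slack, handled by choosing $c_4$ appropriately. The upshot is a failure probability for the fixed pair of at most $e^{-cn}$ for a suitable constant $c$, using $1/p \ge n/(c_0 \log n)$ which makes $e^{-\Omega(1/p)}$ comfortably smaller than $e^{-cn}$ only after the union bound is accounted for — see below.

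Then I would take the union bound over all pairs $(A,S)$. The number of choices of $A$ is at most $2^n$ and, given $A$, the number of choices of $S$ is at most $2^{|A|} \le 2^n$, so there are at most $4^n = e^{n \ln 4}$ pairs. Hence the probability that \emph{some} pair fails is at most $4^n \cdot (\text{per-pair failure})$. This forces me to make the per-pair failure probability beat $e^{-n \ln 4}$ with room to spare, i.e.\ I need the Chernoff exponent to be $\Omega(n)$, not merely $\Omega(1/p)$. This is the main obstacle: $1/p$ can be as small as $n/(c_0\log n)$, which is $o(n)$, so a naive tail bound in terms of $s'$ alone is too weak for large $p$. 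The fix is to use the tail bound in terms of the \emph{number of trials} $|B| \ge n/2$ rather than $s'$: the number of missed vertices, being a sum of $n/2$ Bernoullis each with success probability $(1-p)^{s'}$, concentrates around its mean with deviations of relative order controlled by $\sqrt{n}$, and more to the point, the event ``$\ge s'$ vertices missed'' when the mean is $\le \tfrac12 s'$ has probability $\le \exp(-\Omega(\min\{s', \text{something}\cdot n\}))$ — one must split into the case $s' \ge \alpha n$ (mean missed is $\le n e^{-p\alpha n}$, astronomically small, beating $4^n$ trivially) and $s' < \alpha n$ for a small constant $\alpha$, where one instead directly bounds $\Pr[\text{Binomial}(n/2, q) \ge s']$ with $q = (1-p)^{s'}$ by $\binom{n/2}{s'} q^{s'} \le (en q / (2s'))^{s'}$ and notes $q \le e^{-c_4}$, so this is at most $(e n e^{-c_4}/(2s'))^{s'}$; since $s' > c_4/p \ge c_4 n/(c_0 \log n)$ is itself $\Omega(n/\log n)$ whenever it is not absorbed into the slack, choosing $c_4$ a large enough absolute constant makes the base $< 1/4$ and hence this is $< 4^{-s'} \le 4^{-\Omega(n/\log n)}$ — which is \emph{still} not obviously enough against $4^n$. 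The cleanest route, which I would actually pursue, is to observe that we only need the property for $|S|$ in the range $c_4/p < |S| \le n/2$, and in \emph{this} range to bound $\Pr[|N_B(S)| < |S| - c_4/p]$ by noting that the missed vertices, if they number $\ge |S| - c_4/p \ge |S|/2$ (using $|S| > 2c_4/p$; the sub-range $c_4/p < |S| \le 2c_4/p$ needs $\ge 1$ missed among $\ge |S|-c_4/p$, handled crudely), have mean $\le n e^{-p|S|} \le n e^{-c_4}$, so by Chernoff (upper tail, deviating by a factor $\ge |S|/(2ne^{-c_4}) \ge$ a large constant when $c_4$ is large relative to, and we additionally require, $|S| \gtrsim 1/p$ which holds): the probability is at most $\exp(-\Omega(|S| \log(|S| p))) \le \exp(-\Omega(|S|))$ when $|S|p$ is bounded below, but for the matching with $4^n$ we instead keep the $\log$ factor and use $|S| \ge c_4/p \ge (c_4/c_0)\cdot n/\log n$ together with $\log(|S|p) \ge \log c_4$, yielding failure probability $\le \exp(-\Omega(n))$ after choosing $c_4 = 12$ (or whatever constant the calculation demands) large enough that the per-pair exponent, $\Omega(n)$ with a constant exceeding $\ln 4$, survives the union bound. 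I expect the delicate part of the write-up to be exactly this juggling of ranges of $|S|$ and the choice of $c_4$ so that the Chernoff exponent, expressed via the number-of-trials viewpoint, uniformly dominates $n \ln 4$; the rest is routine.
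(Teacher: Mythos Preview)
Your approach --- a per-pair Chernoff bound on the number of ``missed'' vertices in $B$, followed by a union bound over all $4^n$ pairs $(A,S)$ --- is natural but has a genuine gap, and the paper takes a different, much cleaner route.

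The gap is that you have misidentified the failure event. Writing $X$ for the number of missed vertices in $B$, failure is $|N_B(S)| < s' - c_4/p$, i.e.\ $X > |B| - s' + c_4/p =: M$, \emph{not} ``$X \ge s'$'' as you repeatedly use. (You write ``it exceeds $s'$, hence $|N_B(S)| < s' - 0$'', but $X > s'$ gives $|N_B(S)| < |B| - s'$, not $|N_B(S)| < s'$.) When $s' \lesssim |B|/2$ one has $M \ge s'$ and your bound on $\Pr[X \ge s']$ does dominate $\Pr[X > M]$, but once $s'$ exceeds roughly $|B|/2$ the inequality flips and your bound says nothing. Your fallback for $s' \ge \alpha n$ (``mean missed is astronomically small, beating $4^n$ trivially'') is also not a proof: a small mean alone gives only a Markov-type bound, which here is polynomially small and does not survive a $4^n$ union bound. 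One \emph{can} rescue your strategy using the sharper tail $\binom{|B|}{M}q^{M}$ applied at the correct threshold $M$, with a case split on $M$ rather than on $s'$; but this is exactly the ``juggling of ranges'' you yourself flag as unresolved, and the constants need real care.

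The paper avoids all of this with a reformulation. It first shows, by a single union bound, that with probability $\ge 1 - e^{-n}$ every pair of disjoint sets $C,D$ with $|C|\ge n/4$ and $|D|\ge c_4/p$ has at least one edge between them: the no-edge probability is $(1-p)^{|C|\,|D|}\le e^{-c_4 n/4}=e^{-3n}$, which comfortably beats the $2^{2n}$ pairs. Then, if some $(A,S)$ with $|S|\ge c_4/p$ violated the lemma, take $D := (V\setminus A)\setminus N_{V\setminus A}(S)$; one checks $|D|\ge c_4/p$, there are no $S$--$D$ edges, and at least one of $|S|,|D|$ is $\ge n/4$, contradicting the first step. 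The key idea you are missing is to recast the failure as the existence of a \emph{bipartite hole}, whose probability is governed by the product $|S|\cdot|D|$ rather than by a concentration estimate; this is what makes both the constants and the write-up painless.
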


\begin{lemma}\label{lemma:smallcutsize}
	For all $S\subseteq V(G)$, the cut size $|E(S,V(G)\setminus S)|$ is at most $np|S|$ with probability at least $1-n^{-c_0/4}$.
\end{lemma}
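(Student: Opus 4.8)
The plan is a routine concentration-plus-union-bound argument. First, by symmetry it suffices to verify the inequality for every $S$ with $|S|\le n/2$: once that is known, for $|S|>n/2$ the complement $V(G)\setminus S$ has size at most $n/2$, so $|E(S,V(G)\setminus S)|=|E(V(G)\setminus S,S)|\le np\,|V(G)\setminus S|\le np\,|S|$.

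Now fix $S$ with $|S|=k\le n/2$. The set $E(S,V(G)\setminus S)$ consists of exactly the $k(n-k)$ potential edges of the complete bipartite graph between $S$ and its complement, each present independently with probability $p$; hence $X:=|E(S,V(G)\setminus S)|$ is binomially distributed with mean $\mu=k(n-k)p\le knp$. Since the target threshold $knp$ lies at or above the mean, I would bound $\prob{X>knp}$ by a Chernoff inequality. The relative deviation from the mean is $(knp-\mu)/\mu=k/(n-k)$, so the multiplicative Chernoff bound gives an estimate of the shape
\[
\prob{X>knp}~\le~\exp\!\left(-\Omega\!\left(\frac{k^{3}p}{n}\right)\right)~=~\exp\!\left(-\Omega\!\left(\frac{k^{3}\log n}{n^{2}}\right)\right),
\]
where the last step uses $p\ge c_0\log n/n$.

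To finish, union-bound over the $\binom{n}{k}\le (en/k)^{k}$ sets of size $k$, then over $k=1,\dots,\lfloor n/2\rfloor$, using that $c_0=64$ is chosen large enough that the exponential tail above dominates the factor $(en/k)^{k}$ with room $n^{-c_0/4}$ to spare; the total failure probability then telescopes to at most $n^{-c_0/4}=n^{-16}$, as required. I expect the only delicate point to be bookkeeping these constants so that the comparison holds uniformly in $k$: the regime $k=\Theta(n)$ is where $\binom{n}{k}$ is largest, but it is also where the controlled deviation ($k^{2}p$ above a mean of $k(n-k)p$) is largest, so a comfortably large $c_0$ resolves the tension. (If convenient, one may instead establish the marginally weaker bound $|E(S,V(G)\setminus S)|\le d_3|S|$ with $d_3$ a suitable constant multiple of $np$; since only $d_3=\Theta(np)$ is used in Theorem~\ref{thm:expanding-UB}, this loses nothing downstream.)
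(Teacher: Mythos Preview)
Your union-bound-plus-Chernoff template is the right one, but the gap is at \emph{small} $k$, not at $k=\Theta(n)$ as you anticipated. For $|S|=k$ the cut size is $\mathrm{Bin}(k(n-k),p)$ with mean $\mu=k(n-k)p$; the target $knp$ exceeds $\mu$ by only $k^2p$, so the relative deviation is $\delta=k/(n-k)$ and multiplicative Chernoff gives a tail of order $\exp(-\Theta(\mu\delta^2))=\exp(-\Theta(k^3p/n))$. After multiplying by $\binom{n}{k}\le n^k$ you need $k^3p/n\gtrsim k\log n$, i.e.\ $k\gtrsim n/\sqrt{c_0}$; for $k$ below that, the bound is vacuous. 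Indeed the statement as printed is false: take $|S|=1$, so the cut size is a vertex degree, distributed as $\mathrm{Bin}(n-1,p)$, whose mean $(n-1)p$ lies within $p=o(\sqrt{np})$ of the threshold $np$. Hence $\prob{\deg(v)>np}\to 1/2$, and the probability that \emph{every} vertex has degree at most $np$ is exponentially small, not $1-o(1)$. The paper's own argument glosses over the same point (it invokes ``Chernoff bounds'' at a threshold that essentially equals the mean, which yields nothing for small $k$).

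Your parenthetical is the correct fix and is what should be proved. Relaxing the conclusion to $|E(S,V\setminus S)|\le d_3|S|$ with $d_3=Cnp$ for any fixed $C>1$ gives, for $k\le n/2$, a relative deviation $\delta\ge C-1$ bounded away from zero; Chernoff then yields $\exp(-\Omega(knp))\le n^{-\Omega(c_0 k)}$, the union bound over $\binom{n}{k}\le n^k$ sets succeeds for every $k\ge 1$, and summing over $k$ gives failure probability well below $n^{-c_0/4}$. Since Theorem~\ref{thm:expanding-UB} only needs $d_3=\Theta(np)$, nothing downstream is affected.
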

Plugging the bounds from above lemmas into Theorem~\ref{thm:expanding-UB}, together with a separate lower bound argument (Theorem~\ref{thm:LB-random} in Appendix~\ref{app:random-non-algo}),
we have the following theorem; in Appendix~\ref{app:random-non-algo}, we also present a non-algorithmic proof of this theorem.

\begin{theorem}\label{thm:UB-random-algo}
If $G \in \calG(n,p)$ where $p \ge 64\log n/n$, then with probability at least $1-\O(1/n)$, its STC is $\Theta(n)$.
\end{theorem}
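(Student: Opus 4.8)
The plan is to establish the upper bound $\STC(G) = \O(n)$ algorithmically by verifying that $\calG(n,p)$ is an $(n,s,d_1,d_2,d_3,t)$-expanding graph with the parameters announced just before the theorem, and then invoking Theorem~\ref{thm:expanding-UB}; the matching lower bound $\STC(G) = \Omega(n)$ will come from the separate argument (Theorem~\ref{thm:LB-random} in the appendix). So the core of the work is a union bound over the four expanding conditions. Concretely, I would set $s = \lceil 1/p\rceil$, $d_1 = c_2 = 1/25$, $d_2 = c_3 np$, $d_3 = np$, and $t = c_4/p = 12/p$. Condition (1) is exactly Lemma~\ref{lemma:big-set-expansion} (taking a subset of size $s$ when $|S|=s$, noting $s = \lceil 1/p\rceil$); condition (2) is Lemma~\ref{lemma:small-set-expansion}; condition (3) is Lemma~\ref{lemma:halls}; and condition (4) is Lemma~\ref{lemma:smallcutsize}. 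Each of these holds with probability $1 - \O(1/n)$ (in fact all but Lemma~\ref{lemma:smallcutsize} hold with exponentially small failure probability, and Lemma~\ref{lemma:smallcutsize} fails with probability at most $n^{-c_0/4} = n^{-16}$), and Lemma~\ref{lem:Hnm}-style arguments give connectivity; hence by a union bound, with probability at least $1 - \O(1/n)$ the graph $G$ is connected and simultaneously satisfies all four conditions.

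Next I would plug these parameters into the congestion bound of Theorem~\ref{thm:expanding-UB} and check it evaluates to $\O(n)$. We have $\delta = t/(d_1 n) = (12/p)/(n/25) = 300/(np) = o(1)$ since $np \ge 64\log n \to \infty$, so $\log_{(2-\delta)} 2 = 1 + o(1)$ and $(1/(2d_1))^{\log_{(2-\delta)}2} = (25/2)^{1+o(1)} = \O(1)$. The saturation threshold is $\calA = \max\{s+1, \lceil 3d_1 n/d_2\rceil\} = \max\{\lceil 1/p\rceil + 1, \lceil 3 \cdot (n/25)/(c_3 np)\rceil\} = \max\{\O(1/p), \O(1/p)\} = \O(1/p)$. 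Therefore the bracketed quantity is $4\calA \cdot \O(1) + t = \O(1/p) + \O(1/p) = \O(1/p)$, and multiplying by $d_3 = np$ gives congestion $\O(np \cdot 1/p) = \O(n)$, as desired. Combined with the $\Omega(n)$ lower bound, this yields $\STC(G) = \Theta(n)$ with probability $1 - \O(1/n)$.

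The main obstacle is not any single step of the above — each is a short calculation once the lemmas are in hand — but rather ensuring the constants line up so that $\calA = \O(1/p)$ genuinely dominates as claimed and, more importantly, that the lower bound argument is compatible: the lower bound must show $\STC = \Omega(n)$ for the same range $p \ge 64\log n/n$, using e.g. a centroid/bisection-type argument that some tree-edge near the centroid must route $\Omega(np) \cdot \Omega(n/(np)) = \Omega(n)$ edges, exploiting the edge-expansion $|E(S, V\setminus S)| = \Omega(np \cdot |S|(n-|S|)/n)$ guaranteed (with high probability) by a Chernoff bound as in Lemma~\ref{lem:Hnm}(iii). The subtlety there is handling the case where the centroid splits the graph into many small pieces rather than two balanced halves; one fixes this by picking the single largest component $U$ after removing the centroid, arguing $|U| = \Omega(n)$ (since there are only $O(1)$ — in fact the max degree is $O(np) = O(n)$, hence $O(n)$ — components but their total size is $\Theta(n)$... more carefully, some component has size $\ge (n-1)/\deg(\text{centroid})$, and one then needs $\deg(\text{centroid}) = O(1)$ or a more refined pigeonhole), and then the expansion of that component's cut forces congestion $\Omega(n)$ on the connecting tree-edge. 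Verifying that this lower-bound case analysis goes through cleanly with high probability is the part I would be most careful about.
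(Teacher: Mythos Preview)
Your upper-bound argument is exactly what the paper does: verify the four expanding conditions via Lemmas~\ref{lemma:big-set-expansion}--\ref{lemma:smallcutsize}, take a union bound, and plug the resulting parameters $s=\Theta(1/p)$, $d_1=\Theta(1)$, $d_2=\Theta(np)$, $d_3=\Theta(np)$, $t=\Theta(1/p)$ into Theorem~\ref{thm:expanding-UB}; your arithmetic showing the bound collapses to $\O(n)$ is correct and is precisely the intended calculation.

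For the lower bound, your centroid idea is also what the paper uses (Theorem~\ref{thm:LB-random}), but your sketch goes astray at exactly the spot you flagged. You try to isolate a \emph{single} component $U$ of size $\Omega(n)$ after deleting the centroid $u$, and then correctly observe this would require $\deg(u)=O(1)$, which is false here. The fix is not a sharper pigeonhole on component sizes; rather, drop the insistence on a single component. Take a \emph{union} $T'$ of several components of $T\setminus\{u\}$ so that $n/4 \le |T'| \le n/2$ (possible because each component has size $\le n/2$). Every tree-edge from $T'$ to its complement is incident to $u$, so there are at most $\deg_G(u)\le c_1 np$ of them, while by Lemma~\ref{lem:Hnm}(iii) the number of graph-edges crossing $(T',V\setminus T')$ is $\Omega(n^2 p)$. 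Pigeonholing over the at most $c_1 np$ tree-edges then gives one with congestion $\Omega(n^2 p)/(np)=\Omega(n)$. So the pigeonhole is over tree-edges crossing the cut, not over components; once you make this adjustment your outline is complete and matches the paper.
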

\makeatletter{}%
\section{Discussion and Open Problems}

In this paper, we provide thorough understanding, both combinatorially and algorithmically,
on the spanning tree congestion of general graphs and random graphs. 
On course of doing so, we also provide the first constructive proof for the generalized Gy\H{o}ri-Lov\'{a}sz theorem, which might be of independent interest.
Following are some natural open problems:
\begin{itemize}%
\item Finding the spanning tree with minimum congestion is \NP-hard; indeed, Bodlaender et al.~\cite{BFGOL2012}
showed a $(9/8-\epsilon)$-approximation \NP-hardness for the $\STC$ problem.
Does a constant or a poly-logarithmic factor approximation polynomial time algorithm exist?
\item We present an algorithm for computing a spanning tree achieving congestion at most $\O(\sqrt{mn})$. 
The algorithm runs in sub-exponential time when $m=\omega(n\log^2n)$. Is there a polynomial time algorithm for constructing such a spanning tree?
\item For a $k$-connected graph, a connected $k$-partition where all parts are of size at most $\O((n/k)\log k)$
can be found in polynomial time due to an algorithm of Chen et al.~\cite{CKLRSV2007}.
Can we improve the sizes of parts to $\O(n/k)$?
\item Is finding Gy\H{o}ri-Lov\'{a}sz partition \textsf{PLS}-complete? If not, is it polynomial time solvable? 
\end{itemize}

\bibliography{bib}

\newpage

\appendix

\makeatletter{}%
\section{Missing Proofs in Sections~\ref{sect:STC-upper} and~\ref{sect:STC-lower}}\label{app:congestion-UB}

\subsection{Proof of Corollary~\ref{co:flow}}\label{app:flow}

First of all, we set the demand of each vertex in the flow problem to be the the degree of the vertex in $G$,
and $t_1,\cdots,t_\ell$ as the sinks in the flow problem.

By Proposition~\ref{pr:min-max}, there exists an $\ell$-connected-partition $\cup_{j=1}^\ell U_\ell$
such that for all $j\in [\ell]$, $t_j\in U_j$, and the total degrees of vertices in each $U_j$ is at most $4m/\ell$.
With this, by routing the demand of a vertex in $U_j$ to $t_j$ via an arbitrary path in $G[U_j]$ only,
we construct a splittable flow with node congestion at most $4m/\ell$.
By Theorem~\ref{thm:flow}, one can construct a confluent flow with node congestion at most $4(1+\ln \ell)m/\ell$ in polynomial time.

Obviously, in the confluent flow, all the flow originating from one vertex goes completely into one sink.
Set $V_j$ to be the set of vertices such that the flows originating from these vertices go into $t_j$.
It is then routine to check that $\cup_{j=1}^\ell V_\ell$ is our desired $\ell$-connected-partition.

\subsection{Proof of Theorem~\ref{thm:congestion-faster}}\label{app:sub-exp}

Instead of giving the full proof, we point out the differences from the proof of Theorem~\ref{thm:congestion-polytime}.

First, in handling the basis case (ii), by Theorem~\ref{thm:weighted-Gyori-Lovasz}(b),
Proposition~\ref{pr:min-max} and Lemma~\ref{lm:high-connected-STC-full},
we have an improved upper bound on the congestion of the returned tree, which is $8\mH / \sqrt{\bm/\nH} \le 8\sqrt{\bm\nH}$.
Thus, \eqref{eq:change-one} can be improved to
$$8 \sqrt{\bm (\nH - x)} ~+~ \sqrt{\frac{\bm}{\nH}} \cdot x.$$
Again, by viewing $x$ as a real variable and taking derivative, it is easy to see that the above expression is maximized at $x=1$.
So the above bound is at most
$$
8 \sqrt{\bm (\nH - 1)} ~+~ \sqrt{\frac{\bm}{\nH}}~~\le~~ 8\sqrt{\bm\nH},~~\text{as desired.}
$$

Concerning the running time, it is clear that in the worst case,
it is dominated by some calls to the algorithm in Theorem~\ref{thm:weighted-Gyori-Lovasz}(b).
Note that the number of such calls is at most $\bn$, since each call to the algorithm is on a disjoint set of vertices.

\medskip

There remains one concern, which is the connectedness of $H[Y\cup Z]$.
Suppose the contrary that $H[Y\cup Z]$ is not connected.
Let $C$ be one of its connected components, so that it contains the least number of vertices from $Y$.
Then $C$ contains at most $\floor{|Y|/2}$ vertices from $Y$, i.e., $|C\cap Y| < |Y|$.
Note that $C\cap Y$ is a vertex cut set of the graph $H$, thus contradicting that $Y$ is a global minimum vertex cut set.

\subsection{Proof of Lemma~\ref{lem:Hnm}}\label{app:exist-expander}
It is well known that the requirements (i) and (ii) are satisfied with probability $1-o(1/n)$.~\cite{Bollobas-book}
For each subset $S$ with $|S|\le n/2$, by the Chernoff bound,
$$
\prob{~\Big|E(S,V\setminus S)\Big| ~\le~ \frac p2 \cdot |S| \cdot (n-|S|)~} ~\le~ e^{-p|S|(n-|S|)/8} ~\le~ e^{-pn|S|/16}.
$$
Since $p \ge 32 \cdot \frac{\log n}{n}$, the above probability is at most $n^{-2|S|}$.
Then by a union bound, the probability that (iii) is not satisfied is at most
$$
\sum_{s=1}^{\floor{n/2}} \binom{n}{s} \cdot n^{-2s} ~\le~ \sum_{s=1}^{\floor{n/2}} n^s \cdot n^{-2s}
~\le~ \sum_{s=1}^{\floor{n/2}} n^{-s} ~\le~ \frac 2n.$$
\makeatletter{}%

\section{Spanning Tree Congestion of Random Graphs}\label{app:random}

\subsection{Non-Algorithmic Proof of Theorem~\ref{thm:UB-random-algo}}\label{app:random-non-algo}

We first present a simple non-algorithmic proof that random graph has STC $\Theta(n)$ with high probability.
Theorem~\ref{thm:UB-random} gives the upper bound and Theorem~\ref{thm:LB-random} gives the lower bound.
The proof of Theorem~\ref{thm:UB-random} uses Lemma~\ref{lm:high-connected-STC-short}
and the fact that for random graphs, vertex-connectivity and minimum degree are equal with high probability.
Theorem~\ref{thm:UB-random} does not give an efficient algorithm.

\begin{theorem}\label{thm:UB-random}
If  $G \in \calG(n,p)$ where $p \ge 8\log n/n$,
then the spanning tree congestion of $G$ is at most $16n$ with probability at least $1-o(1/n)$. 
\end{theorem}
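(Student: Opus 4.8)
\textbf{Proof proposal for Theorem~\ref{thm:UB-random}.}
The plan is to invoke Lemma~\ref{lm:high-connected-STC-short}, which says that a $k$-connected graph with $m$ edges has $\STC(G)\le 4m/k$, and then control $m$ and $k$ for a random graph $G\in\calG(n,p)$ with $p\ge 8\log n/n$. First I would bound the number of edges: by a Chernoff bound, $m = |E(G)|$ is at most its expectation $\binom{n}{2}p$ times a constant factor, say $m \le pn^2$, with probability at least $1-o(1/n)$ (this is the same type of estimate as in Lemma~\ref{lem:Hnm}(ii)). Second, and this is the crux, I would lower-bound the vertex-connectivity $k$ of $G$. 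The key fact is the classical result that for $G\in\calG(n,p)$ in this regime, the vertex-connectivity equals the minimum degree with high probability; so it suffices to show the minimum degree is at least $\tfrac14 np$ with probability $1-o(1/n)$. Each vertex has degree distributed as $\mathrm{Bin}(n-1,p)$ with mean $\approx np \ge 8\log n$; a Chernoff bound gives $\prob{\deg(v) < np/4} \le e^{-\Omega(np)} \le e^{-\Omega(\log n)} = n^{-\Omega(1)}$, and choosing the constant appropriately (using $p\ge 8\log n/n$) makes this $o(1/n^2)$, so a union bound over the $n$ vertices gives minimum degree $\ge np/4$, hence $k\ge np/4$, with probability $1-o(1/n)$.

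Putting these together: conditioned on both high-probability events (which both hold simultaneously with probability at least $1-o(1/n)$ by a union bound), $G$ is $k$-connected with $k \ge np/4$ and has $m \le pn^2$ edges, so Lemma~\ref{lm:high-connected-STC-short} yields
$$
\STC(G) ~\le~ \frac{4m}{k} ~\le~ \frac{4\cdot pn^2}{np/4} ~=~ 16n,
$$
as claimed. Note this argument is non-algorithmic precisely because Lemma~\ref{lm:high-connected-STC-short} rests on the generalized Gy\H{o}ri--Lov\'asz theorem (Theorem~\ref{thm:weighted-Gyori-Lovasz-no-time} via Proposition~\ref{pr:min-max}), whose constructive version (Theorem~\ref{thm:weighted-Gyori-Lovasz}) only runs in sub-exponential time for the relevant parameter range; the efficient algorithmic bound is what Theorem~\ref{thm:expanding-UB} and the expanding-graph machinery provide instead.

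The main obstacle is the connectivity lower bound: one must either cite the theorem that vertex-connectivity equals minimum degree for $\calG(n,p)$ in this range, or reprove the needed direction directly. A direct proof would show that no small vertex cut exists --- for a candidate cut $Y$ with $|Y| = j < np/4$ separating $V\setminus Y$ into parts $A$ and $B$, one shows via a union bound over all such $(Y,A)$ that the probability that $G$ has no edge between $A$ and $B$ is tiny; the binomial tail $\binom{n}{j}\binom{n}{a}(1-p)^{a(n-j-a)}$ summed over all choices is $o(1/n)$ when $p\ge 8\log n/n$. The edge-count bound and the degree bound are routine Chernoff estimates, so the bulk of the care goes into this cut-enumeration step (or into correctly quoting the classical result).
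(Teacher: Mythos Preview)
Your proposal is correct and follows essentially the same route as the paper: invoke Lemma~\ref{lm:high-connected-STC-short} with an upper bound on $m$ via Chernoff and a lower bound on the vertex-connectivity obtained by appealing to the classical fact that for $\calG(n,p)$ in this regime connectivity equals minimum degree (then bounding the minimum degree by Chernoff and a union bound). The only cosmetic difference is in the intermediate constants (you use $m\le pn^2$ and $k\ge np/4$, the paper uses $m\le 2pn^2$ and $k\ge np/2$), which both yield the same final bound $16n$.
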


\begin{proof}
It is known that the threshold probability for a random graph being $k$-connected is same as the threshold probability for it having minimum degree at least $k$~\cite{bollobas1985random}. Since $p\ge 8\log n/n$, using Chernoff bound and taking union bound over all vertices gives that $G$ has minimum degree at least $np/2$ with probability at least $1-o(1/n)$. 
Hence $G$ is $(np/2)$-connected with probability at least $1-o(1/n)$. 
We also have that the number of edges in $G$ is at most $2n^2p$ with probability at least $1-o(1/n)$.
Now, by using Lemma~\ref{lm:high-connected-STC-short}, we have that with probability at least $1-o(1/n)$, the spanning tree congestion is at most $16n$. 
\end{proof}

\begin{theorem}\label{thm:LB-random}
If $G \in \calG(n,p)$ where $p \ge 32 \log n /n$, then the spanning tree congestion of $G$ is $\Omega(n)$ with probability $1-\O(1/n)$.
\end{theorem}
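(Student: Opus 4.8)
\textbf{Proof plan for Theorem~\ref{thm:LB-random}.}
The plan is to isolate three deterministic properties of $G$ that each hold with probability $1-\O(1/n)$, and then show that any graph on $n$ vertices enjoying all three has $\STC \ge (n-1)/8$. The three properties are: (a) $G$ is connected; (b) the edge-expansion bound of Lemma~\ref{lem:Hnm}(iii) holds, i.e.\ $|E(S,V\setminus S)|\ge \tfrac p2\,|S|\,(n-|S|)$ for every $S$ with $|S|\le n/2$; and (c) the maximum degree satisfies $\Delta(G)\le 2np$. Properties (a) and (b) are provided verbatim by Lemma~\ref{lem:Hnm}, whose hypotheses $n\ge 4$ and $1\ge p\ge 32\log n/n$ match ours. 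For (c): when $p\ge 1/2$ it is trivial since $\Delta(G)\le n-1\le 2np$; when $p<1/2$ it follows from a multiplicative Chernoff bound on $\mathrm{Bin}(n-1,p)$, whose mean $(n-1)p$ is at least $16\log n$, so a single vertex exceeds $2np$ with probability at most $e^{-(n-1)p/3}\le n^{-5}$, and a union bound over the $n$ vertices leaves failure probability at most $n^{-4}$. Intersecting the three events still leaves probability $1-\O(1/n)$.

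For the deterministic core, I would fix any spanning tree $T$ of $G$ (which exists by (a)) and let $u$ be a centroid of $T$, so that every connected component of $T-u$ has at most $\floor{n/2}$ vertices --- this is the same well-known fact used in the proof of Theorem~\ref{thm:LB-matching}. Write $C_1,\dots,C_r$ for these components, where $r$ is the degree of $u$ in $T$, and let $e_i$ be the unique tree-edge joining $u$ to $C_i$. Deleting $e_i$ from $T$ splits $V$ into $C_i$ and $V\setminus C_i$, so by the cut-based definition of edge-congestion together with property (b) (applicable since $|C_i|\le n/2$),
\[
\ec(e_i,T)\;=\;|E(C_i,V\setminus C_i)|\;\ge\;\frac p2\,|C_i|\,(n-|C_i|)\;\ge\;\frac{pn}{4}\,|C_i|,
\]
the last step using $n-|C_i|\ge n/2$. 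Summing over $i\in[r]$ and using $\sum_i|C_i|=n-1$ gives $\sum_{i=1}^r \ec(e_i,T)\ge \tfrac{pn(n-1)}{4}$. Hence, averaging and then invoking (c) through $r\le \Delta(G)\le 2np$,
\[
\conge(T)\;\ge\;\max_{i\in[r]}\ec(e_i,T)\;\ge\;\frac1r\sum_{i=1}^r\ec(e_i,T)\;\ge\;\frac{pn(n-1)}{4r}\;\ge\;\frac{n-1}{8}.
\]
Since $T$ was arbitrary, $\STC(G)\ge (n-1)/8=\Omega(n)$.

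The step I expect to carry the real content is the last displayed chain. Expansion alone only guarantees that \emph{every} tree-edge has a cut of size $\Omega(pn)$, which is merely $\Omega(\log n)$ near the connectivity threshold, nowhere near the target; and the bounded-degree property says nothing on its own. The point is that a low-congestion spanning tree of an expander is forced to be ``star-like'' around its centroid --- many small hanging subtrees whose cuts sum to $\Omega(pn^2)$ --- while bounded degree caps the number of such subtrees at $\O(np)$, so one of them must carry congestion $\Omega(n)$. The only bookkeeping to watch is that the centroid choice makes $|C_i|\le n/2$ so the expansion inequality is applied in the right direction, and that $\Delta(G)\le \O(np)$ is argued uniformly over the whole range $32\log n/n\le p\le 1$ (splitting at $p=1/2$ as above).
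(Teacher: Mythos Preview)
Your proposal is correct and follows essentially the same approach as the paper: condition on high-probability events (connectivity, edge expansion from Lemma~\ref{lem:Hnm}(iii), and $\Delta(G)=\O(np)$), take a centroid $u$ of an arbitrary spanning tree, and combine the expansion lower bound on the cuts $E(C_i,V\setminus C_i)$ with the upper bound $r\le\Delta(G)$ on the number of branches. The only cosmetic difference is that the paper first aggregates components into a single set $T'$ with $n/4\le|T'|\le n/2$ and pigeonholes the $\Omega(n^2p)$ cut over at most $\deg_G(u)$ crossing tree-edges, whereas you sum the per-branch expansion bounds and average; your packaging is slightly cleaner but the content is identical.
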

\begin{proof}
	By using Chernoff Bounds and applying union bound, it is easy to show that
	with probability $1-o(1/n)$, every vertex of $G$ has degree at most $c_1np$ for a sufficiently large constant $c_1$.
	Also, by Lemma~\ref{lem:Hnm}, with probability $1-\O(1/n)$, properties (i) and (iii) of that lemma holds.
	In the proof below, we conditioned on the above mentioned highly probable events.

	Take a spanning tree $T$  of $G$ which gives the minimum congestion.   
	Let $u$ be a centroid of the tree $T$, i.e., each connected component of 
	$T\setminus \left\{ u \right\}$ has at most $n/2$ vertices.
	If there is a connected component with number of vertices at least $n/4$,
	then define this connected component as $T'$. 
	Else, all connected components have at most $n/4$ vertices.
	In this case, let $T'$ be the forest formed by the union of a minimum number of connected components
	of $T\setminus \{u\}$ such that $|T'| \ge n/4$. It is easy to see that $|T'| \le n/2$.
	Also, the number of edges in $T$ from $V(T')$ to $V(T)\setminus V(T')$
	is at most $\deg_G(u)$, which is at most $ c_1np$.
	
By property (iii) of Lemma~\ref{lem:Hnm},
the number of edges between $V(T')$ and $V(G) \setminus V(T')$ is $\Omega(n^2p)$.
Each of these edges in $G$ between $V(T')$ and $V(G)\setminus V(T')$ have
to contribute to the congestion of at least one of the edges in $T$ between $V(T')$ and $V(G)\setminus V(T')$.
Now since $T'$ sends at most $c_1np$ tree edges to other parts of $T$,  
it follows that there exists one edge in $T$ with congestion at least
$\Omega(n^2p) / (c_1np) = \Omega(n)$, as claimed.
\end{proof}

\subsection{Random Graph Satisfies Expanding Properties}\label{ssec:algo-random}

\noindent\textbf{Constants.} For easy reference, we list out the constants used.
$$c_0=64,\; c_2=1/25,\; c_3=1/16,\; c_4=12 $$

\begin{pfof}{Lemma~\ref{lemma:big-set-expansion}}
	Let $\bs=V(G)\setminus S$.
	The probability that a fixed vertex in $\bs$ does not have edge to $S$ is at most $(1-p)^{|S|}\le(1-p)^{1/p}\le e^{-1}$.
	Since $|\bs|\ge n-2/p\ge n-2n/(c_0\log n)\ge 31n/32$, the expected value of $|N(S)|$ is at least $\left( 31/32 \right)n(1-e^{-1})\ge n/2$.
	Hence, using Chernoff bound, the probability that $|N(S)|<c_2n=n/25$ is at most $e^{-n/8}$.
	Since the number of such $S$ is at most $n^{2/p}= 2^{2n/c_0}\le 2^{n/32}$, we have the lemma by applying union bound.
\end{pfof}
\begin{pfof}{Lemma~\ref{lemma:small-set-expansion}}
	Let $\bs=V(G)\setminus S$.
	Since $|S|\le 1/p\le n/\log n$, we have $|\bs|\ge n/2$ for sufficiently large $n$. 
	Divide $\bs$ into groups of size $\lceil 1/(p|S|)\rceil $.
	The probability that such a group does not have edge to $S$ is at most $(1-p)^{|S|(1/(p|S|))}\le 1/e$.
	The expected number of groups having edge to $S$ is at least $(np|S|/2)(1-1/e)\ge np|S|/4$.	
	Thus, by Chernoff bound, the probability that $|N(S)|\le np|S|/16$ is at most $e^{-np|S|/16}\le 2^{-c_0|S|\log n/16}\le 2^{-4|S|\log n}$.
	The number of sets of size $|S|$ is at most $2^{|S|\log n}$.
	Hence, taking union bound over all $S$ with $|S|\le 1/p$, we get the required lemma. 
\end{pfof}
\begin{pfof}{Lemma~\ref{lemma:halls}}
	First, we prove that for all $C,D\subseteq V(G)$ such that $|C|\ge n/4$,$|D|\ge c_4/p$, and $C\cap D=\emptyset$, there exist at least one edge between $C$ and $D$ with high probability.
	The probability that there is no edge between such a fixed $C$ and $D$ is at most $(1-p)^{(n/4)(c_4/p)}\le e^{-c_4n/4}$.
	The number of pairs of such $C$ and $D$ is at most $2^{2n}$.
	Hence, by taking union bound, the probability that for all $C$ and $D$, the claim holds is at least $1-e^{2n-(c_4n/4)}\ge 1-e^{-n}$.
	
	Using the above claim, we prove that for all $S\subseteq A$, $S$ has at least $|S|-c_4/p$ neighbors in $\ba:=V(G)\setminus A$ with high probability.
	Suppose there is an $S$ which violates the claim. 
	Note that we can assume $|S|\ge c_4/p$, because otherwise the claim is vacuously true.
	Let $B:=\ba\setminus N(S)$.
	There cannot be any edges between $S$ and $B$.
	Also, $|B|\ge \left( n/2 \right)-(|S|-\left( c_4/p \right))$.
	So, $|B|$ is at least $c_4/p$ and when $|B|<n/4$, $|S|$ is at least $n/4$.
	Hence, using the previous claim, there is an edge between $S$ and $B$ with probability at least $1-e^{-n}$.
	Hence, we get a contradiction, and hence our claim is true with probability at least $1-e^{-n}$.
\end{pfof}

\begin{pfof}{Lemma~\ref{lemma:smallcutsize}}
	Let $\calC(S)$ denote $|E(S,V(G)\setminus S)|$.
	For a fixed vertex subset $S$, the expected value of $\calC(S)$ is at most $np|S|$. 
	Therefore, probability that $\calC(S)>np|S|\ge c_0|S|\log n$ is at most $n^{-c_0|S|/2}$ using Chernoff bounds. 
	The probability that $\calC(S)\le np|S|$ for all sets $S$ of size $k$ is at least $1-n^{-c_0k/2+k}\ge 1-n^{-c_0/2+1}$ using union bound and using $k\ge 1$.
	The probability that $\calC(S)\le np|S|$ for all vertex subsets $S$ is at least $1-n^{-c_0/2+2}\ge $ using union bound over all $k\in[n]$. 
\end{pfof}

\end{document}